\documentclass[aps,prx,reprint,superscriptaddress,amsmath,amssymb,nofootinbib]{revtex4-2}

\usepackage{graphicx}
\graphicspath{{numerics/figs/}{./}}
\usepackage{amssymb,amsthm,amsfonts}
\usepackage{mathtools}
\usepackage{physics}
\usepackage{bbm}
\usepackage{bm}
\usepackage[dvipsnames]{xcolor}
\usepackage{booktabs}
\usepackage{hyperref}

\providecommand{\openone}{}\renewcommand{\openone}{\mathbbm{1}}
\newcommand{\TER}{\mathrm{TER}}
\newcommand{\TSR}{\mathrm{TSR}}
\newcommand{\TNR}{\mathrm{TNR}}
\newcommand{\NSIT}{\mathrm{NSIT}}
\newcommand{\PDO}{\mathrm{PDO}}

\newcommand{\LHV}{\mathrm{LHV}}

\newtheorem{theorem}{Theorem}
\newtheorem{proposition}[theorem]{Proposition}
\newtheorem{corollary}[theorem]{Corollary}
\newtheorem{lemma}[theorem]{Lemma}

\begin{document}

\title{Temporal nonlocality of a qudit resides in the input state, not the channel, and certifies temporal teleportation up to a fundamental limit}

\author{Karol Bartkiewicz}
\email{bark@amu.edu.pl}
\affiliation{Institute of Spintronics and Quantum Information, Faculty of Physics and Astronomy, Adam Mickiewicz University, 61-614 Pozna\'n, Poland}
\author{Patrycja Tulewicz}
\affiliation{Institute of Spintronics and Quantum Information, Faculty of Physics and Astronomy, Adam Mickiewicz University, 61-614 Pozna\'n, Poland}
\date{\today}

\begin{abstract}
Correlations between two moments in time can be too strong for any
classical explanation---and, remarkably, this can happen for a single
quantum system measured twice, with no second particle involved. We show
that when one qudit is sent through a noisy channel, the strength of this
``nonlocality in time''---the temporal nonlocality robustness $\TNR$---is
carried entirely by the starting state: it vanishes precisely when the
input is maximally mixed (completely random),
$\TNR(\rho_A,\mathcal{E})=0\Leftrightarrow\rho_A=\openone/d$, for the
standard noise families. The resource is not any coherence in the channel
but the back-action of the input's mixedness, and it survives even complete
decoherence. This is at once a power and a trap. As a power, $\TNR$
lower-bounds---device-independently, in the prepare-and-measure sense---the
fidelity of \emph{temporal teleportation}, sending an unknown state forward
in time and reaching $7/9$ at $d=3$. As a trap, because the
certified quantity is decoupled from the channel's actual coherence
transmission, it can certify more than the channel delivers: an injective
(reversible) unitary attains the maximal temporal-Bell signal yet teleports
below the classical baseline. We resolve this over-certification
completely---a universal cap $\TNR\le(d-1)/d$ with an exact
channel-resolved value, honest certification for the depolarizing channel
and for any sufficiently mixed probe, and a proof that no choice of probes
makes it channel-universal. Underpinning the results is a unified
semidefinite-programming hierarchy of the temporal entanglement, steering
and nonlocality robustnesses ($\TER$, $\TSR$, $\TNR$), with a strict lower
hierarchy and an upper one conditional on no-signaling in time ($\NSIT$).
All structure is verified numerically for $d=2$ through $5$.
\end{abstract}

\maketitle

\section{Introduction}\label{sec:intro}

Quantum correlations can be non-classical not only \emph{across
space}---between two particles measured far apart, as in a Bell
test---but \emph{across time}, between two measurements made on a
single system at different moments. For one qudit sent through a noisy
channel, what makes its two-time correlations non-classical? The answer
is clean and slightly surprising: the resource is how far the
\emph{input state} sits from being completely mixed, while the noise it
passes through is, for the standard noise families, immaterial---temporal
nonlocality \emph{resides in the input state, not the channel}. We make
this precise, place it in a hierarchy mirroring the spatial
entanglement--steering--nonlocality ladder, and turn it into an
operational guarantee for teleporting an unknown state \emph{forward in
time}. The four quantities below have plain readings: the
nonlocality/steering/entanglement robustnesses ($\TNR,\TSR,\TER^{\rm
sep}$) measure ``how much noise the temporal correlation survives'' at
each rung of the ladder, and the NSIT violation
$\mathcal{V}_{\rm NSIT}$ measures ``how much Alice's earlier measurement
disturbs Bob's later statistics''---the causal back-action that powers
the protocol.

\begin{figure}[t]
\centering
\includegraphics[width=\columnwidth]{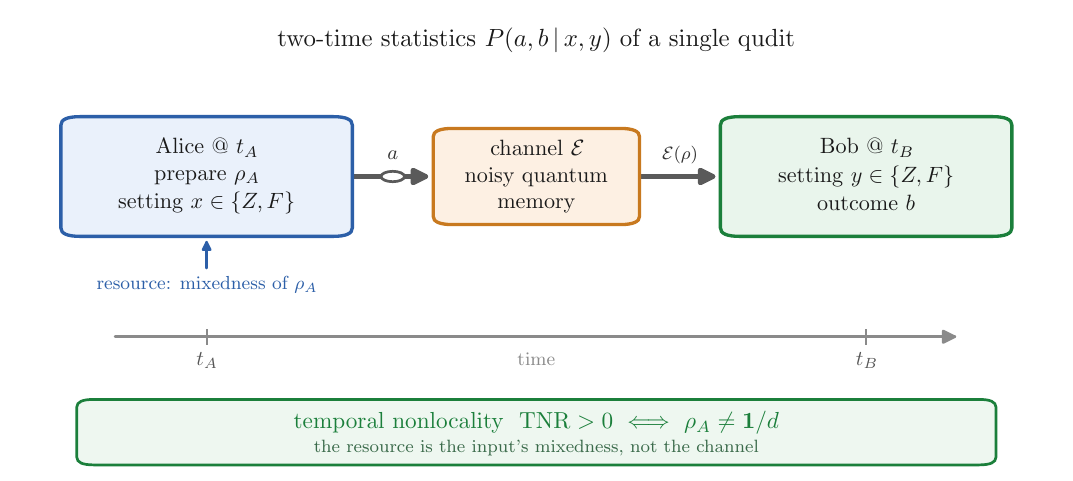}
\caption{The two-time scenario. A single qudit is prepared in $\rho_A$
and measured at $t_A$ (setting $x$), sent through a noisy channel
$\mathcal{E}$ (a quantum memory) to $t_B$, and measured again (setting
$y$). The non-classicality of the resulting two-time statistics
$P(a,b\,|\,x,y)$---temporal nonlocality---is non-zero \emph{if and only
if} the input $\rho_A$ is not maximally mixed (for the standard noise
families): the resource is the input's mixedness, not the channel.}
\label{fig:schematic}
\end{figure}

Three inequivalent forms of nonclassicality are firmly established for
spatially separated quantum systems---entanglement, EPR steering (in
which one party's measurements appear to remotely shape the other's
state), and Bell nonlocality---and they are arranged in a strict
hierarchy
\cite{PhysRevLett.98.140402,RevModPhys.86.419,RevModPhys.92.015001,Jirakova2021}.
Every Bell-nonlocal state is steerable, every steerable state is
entangled, and the inclusions are strict. Following the Leggett--Garg
program \cite{Leggett1985,Emary2013}, temporal counterparts have
been formalized: temporal steering
\cite{PhysRevA.89.032112,bartkiewicz2016temporal,bartkiewicz2016experimental,chen2016quantifying},
temporal Bell nonlocality, and time-like ``entanglement'' through
pseudo-density operators (PDOs---a single state-like operator that
packages the statistics of two separate times, at the price of allowing
negative eigenvalues) \cite{fitzsimons2015quantum} or
quasi-probability state-over-time constructions
\cite{WOOTTERS19871,Gibbons2004,Gross2006}. Recent work has put states
over time on a firm axiomatic footing \cite{LieFullwood2025} and, of
direct relevance here, has shown that a single operator representation
carries the qubit PDO over to systems of \emph{any} dimension
\cite{Fullwood2024,Jia2026}---a dimension-independent state-over-time
that our odd-prime Wigner construction realizes without the
contextuality artifacts noted above. For
two-level systems an analogous temporal hierarchy was demonstrated in
reference~\cite{PhysRevA.98.022104} under the no-signaling-in-time
(NSIT) condition---that the choice of the earlier measurement leave the
later statistics unchanged---\cite{Kofler2013,Halliwell2017,mal2016probing}. Partial
results for qutrits---including the robustness quantifiers and
numerical evidence for the hierarchy---were reported in
reference~\cite{Maskalaniec2021}. The temporal-correlation landscape
has since sharpened: the temporal Tsirelson bound and its sensitivity
to the dynamics and measurement scenario \cite{Wojcik2025}, extreme
Leggett--Garg violations under superposed unitaries
\cite{Chatterjee2025}, temporal nonlocality from indefinite causal
order \cite{Letertre2025}, and spatial-incompatibility witnesses for
temporal correlations \cite{Liu2025incompat} all refine the picture.
Whether the hierarchy survives
analytically in $d\geq 3$ has remained an open question.

Higher dimensions also unlock genuinely multilevel phenomena, the most
striking being contextuality \cite{Budroni2021,Klyachko2008}---the
impossibility of assigning measurement outcomes independently of which
other compatible measurements are performed alongside. This carries a
direct cost here: for $d\geq 3$ the natural observables (the generators
of $\mathrm{SU}(d)$) no longer share a single context, so a naive qudit
version of the qubit PDO is ill-defined---its spectrum would depend on
the measurement basis even for a closed system. The
discrete Wigner phase-space point operators $\{K_i\}_{i=0}^{d^2-1}$
\cite{Gibbons2004,Gross2006,Dawkins2015PRL}, however, provide a
manifestly non-contextual basis on the Wigner polytope and we use them
throughout.

The operational stake is that, just as spatial entanglement
underwrites teleportation across a spacelike interval
\cite{Bennett1993}, time-like correlations underwrite \emph{quantum
teleportation in time}
\cite{fitzsimons2015quantum,MacLean2017,Pisarczyk2019,Horsman2017},
realized experimentally e.g.\ via entanglement swapping between
photons that never coexist \cite{Megidish2013}. In plain terms,
ordinary teleportation moves an unknown state between two points in
\emph{space} by consuming a shared entangled pair together with a
classical message; teleportation in time instead moves an unknown
state between two \emph{moments}---an earlier preparation and a later
read-out---of a single system held in a quantum memory, with the
memory's two-time correlations playing the role that the shared
entangled pair plays in the spatial protocol. The input is never
copied (no-cloning forbids it): it is reconstructed at the later time,
so a protocol that succeeds certifies that genuine quantum
information---not merely classical data---has survived the wait. This
is the operational task our certificate addresses. Different
teleportation-in-time protocols make different device-trust
assumptions: the fully characterized version consumes a PDO-level
resource; a one-sided device-independent variant consumes only the
assemblage; the device-independent variant consumes only the
input--output behavior. Each lives at a distinct level of the
hierarchy of temporal correlations, so the question of universality is
also a question about which protocols can operate in a given
(state, channel) configuration.

We adopt \emph{robustness}---the minimal admixture of admissible noise
required to make the object classical---as the common currency for all
three correlations. We define \emph{temporal entanglement robustness}
(TER) and \emph{temporal nonlocality robustness} (TNR) as causal
counterparts of the spatial robustnesses of \cite{PhysRevA.59.141} and
\cite{PhysRevA.93.052112}, and combine them with the temporal steering
robustness (TSR) of \cite{PhysRevA.94.062126,Chen2017SciRep}. Each is
a linear semidefinite program.

\paragraph*{The central result: the resource is input mixedness.}
For the standard channel families, under the canonical two-MUB scheme (a
pair of mutually unbiased bases, MUBs), the temporal nonlocality robustness
obeys the strict equivalence
(Corollary~\ref{cor:tnr_iff})
\begin{equation}\label{eq:thesis}
\TNR(\rho_A,\mathcal{E})=0 \iff \rho_A=\openone/d,
\end{equation}
at every dimension admitting a Fourier MUB. \emph{The departure of the
input from maximal mixedness is the necessary and sufficient temporal-Bell
resource; the noise channel is dispensable.} The mechanism is
back-action: which basis Alice measures in disturbs the state Bob later
receives, and that disturbance---a signal forward in time---exists
precisely when the input is not already maximally mixed, and \emph{survives
even complete decoherence} of the channel (section~\ref{sec:TNRcoherence}).
At $d=2$ this recovers the qubit temporal hierarchy of Ku \emph{et al.}~\cite{PhysRevA.98.022104};
the equivalence is channel-conditional, not universal over all injective
channels, since an adversarial unitary can hide the resource from a fixed
scheme (Remark after Corollary~\ref{cor:tnr_iff},
cf.~\cite{Wojcik2025}).

\paragraph*{The power: a device-independent certifier.}
The prepare-and-measure temporal-teleportation fidelity equals the
average fidelity of the channel's Heisenberg--Weyl twirl,
$\mathcal{F}_{\rm DI}=1/d+(d-1)p/d$, and on the standard families the
certified robustness lower-bounds it,
$\mathcal{F}_{\rm DI}\geq 1/d+(d-1)T/d$ ($T:=\TNR$), up to a supremal
guaranteed value $7/9$ at $d=3$ (Theorem~\ref{thm:ditit})---the temporal
counterpart of the entanglement certification behind device-independent
quantum key distribution, with applications to secure communication over
time-bin channels and quantum memories.

\paragraph*{The trap: certifying more than the channel can deliver.}
Because the certified quantity ($\TNR$, a property of the input) is
\emph{decoupled} from what the channel actually transmits (its coherence
fidelity $F_e$), the certificate can promise more than the channel can
honor---\emph{over-certification}. The extreme case is stark: an injective
\emph{unitary} attains the maximal temporal-Bell signal $\TNR=(d-1)/d$
yet teleports at fidelity $1/(d+1)$, \emph{below} the no-resource baseline
$1/d$. We resolve the phenomenon completely (section~\ref{sec:overcert}):
a universal cap $\TNR\leq(d-1)/d$ with an exact channel-resolved value
(Lemma~\ref{lem:cap}); honest certification for the depolarizing channel
unconditionally and, for any channel, once the probe is sufficiently mixed
(Proposition~\ref{prop:mixsuff}); and a proof that \emph{no} choice of
probes makes the certificate channel-universal. This is a general caution
for temporal-correlation-based certification: a single-system
Bell-in-time test can pass maximally while certifying an operational
guarantee the system cannot honor.

\paragraph*{The machinery.}
The above rests on a unified semidefinite-programming treatment of the
three robustnesses (section~\ref{sec:robustness}) and their hierarchy
(section~\ref{sec:hier-frame}): $\TER^{\rm sep}\geq\TSR\geq\TNR\geq0$ on
$\openone/d$, with TER a causality monotone
(Theorem~\ref{thm:TERmonotone}), the lower inequality $\TSR\geq\TNR$
universal (Theorem~\ref{thm:tsr_tnr}, which the cap above uses), and the
upper one $\TER^{\rm sep}\geq\TSR$ conditional on no-signaling in time
(NSIT)---it fails off $\openone/d$, where the temporal correlations
signal, and is replaced by a single universal inequality
$\TSR\leq\TER^{\rm sep}+\tfrac12\mathcal{V}_{\rm NSIT}$ with the constant
$\tfrac12$ tight (Proposition~\ref{prop:univ_bound}). Every claim is
verified on $\rho_A$-adapted Monte-Carlo sweeps ($10^6$
configurations at $d=3$; $d=2$ through $5$).

\paragraph*{Relation to prior work.}
The qubit case~\cite{PhysRevA.98.022104} established a temporal
hierarchy under NSIT at $d=2$, and reference~\cite{Maskalaniec2021}
introduced robustness quantifiers and reported numerical evidence for
the qutrit hierarchy. The present contribution is fourfold and
distinct: (a)~the \emph{analytic} asymmetric state-boundness
theorem---$\TNR=0\iff\rho_A=\openone/d$ with the channel
dispensable---proved uniformly across dimensions, rather than observed
numerically at fixed $d$; (b)~the determination that the upper
inequality $\TER^{\rm sep}\geq\TSR$ is \emph{NSIT-conditional}, with an
explicit refutation of any state-independent (``NSIT-free'')
entanglement-over-steering bound; (c)~the operational
identification of TNR as a certificate of prepare-and-measure
temporal-teleportation fidelity; and (d)~the discovery and complete
resolution of \emph{over-certification}---a universal cap, an exact
mixedness criterion for honest certification, and a no-go for
channel-universal certification---exposing a general limit of
temporal-correlation-based device-independent certification. The SDP
implementations and qutrit benchmarks build on those of
reference~\cite{Maskalaniec2021}; the Supplementary Material tabulates
these contributions against the qubit case of
reference~\cite{PhysRevA.98.022104}.

The paper follows the arc above. Sections~\ref{sec:framework}
and~\ref{sec:robustness} set up the two-time scenario, the channel Choi
state, and the three robustness quantifiers. Section~\ref{sec:results}
proves the central result---state-boundness and its back-action
mechanism. Section~\ref{sec:hier-frame} develops the supporting
robustness hierarchy and the universal NSIT bound. The operational
reading is section~\ref{sec:ditit}: the temporal-teleportation
certificate and, as its sharp limit, over-certification
(section~\ref{sec:overcert}) with its complete resolution.
Section~\ref{sec:numerics} reports the numerical verification across
$d=2$--$5$ and section~\ref{sec:discussion} the broader implications;
proofs, channel representations, and reproduced figures are collected in
the Methods and Supplementary Material. A reader after the operational
message can go straight to sections~\ref{sec:results}
and~\ref{sec:ditit}.

\section{Theoretical framework}\label{sec:framework}

\subsection{Non-contextual states over time}\label{sec:noncontextual}

The substantive state-over-time object in what follows is the channel
\emph{Choi state}
$\Lambda_{\mathcal{E}}=(\mathrm{id}\otimes\mathcal{E})
|\Phi^+\rangle\langle\Phi^+|$, with
$|\Phi^+\rangle=\tfrac{1}{\sqrt d}\sum_k|kk\rangle$. Following
reference~\cite{fitzsimons2015quantum}, the two-time statistics also
define a \emph{pseudo-density operator} (PDO)
\begin{equation}\label{eq:new_PDO}
R^{\PDO}=\frac{1}{\mathcal{N}}\sum_{i,j=0}^{d^2-1}
\langle G_i\otimes G_j\rangle\,G_i\otimes G_j,
\end{equation}
in a Hilbert--Schmidt-orthogonal operator basis $\{G_i\}$
($\tr(G_iG_j)=d\,\delta_{ij}$), with $\langle G_i\otimes G_j\rangle=
\sum_{a,b}ab\,p(a,b|i,j)$ the time-ordered expectation of the product
of outcomes at $t_A$ and $t_B$
[$p(a,b|i,j)=\tr(\Pi^B_{j,b}\,\mathcal{E}(\Pi_{i,a}\rho_A\Pi_{i,a}))$].
The object in \eqref{eq:new_PDO} is the \emph{pseudo-density matrix}
(PDM) of \cite{fitzsimons2015quantum}, also employed in \cite{Comar2026};
we retain the name pseudo-density \emph{operator} (PDO) to stress that we
work with its basis-independent, non-contextual realization introduced
below.
The PDO is an equivalent representation of the Choi state; its operator
expansion, the Choi--Jamio\l{}kowski operator $E_{B|A}=d\Lambda_{\mathcal{E}}$,
and the non-contextual Wigner realization that makes $R^{\PDO}$
basis-independent at $d\geq 3$---where $\mathrm{SU}(d)$ contextuality
\cite{Budroni2021} would otherwise make it eigenbasis-dependent---are
collected in the Supplementary Material and may be skipped on a first
reading: the robustness quantifiers of section~\ref{sec:robustness} are
built directly from the Choi state and the measured assemblage and
behavior. The one consequence used below is that $R^{\PDO}$ and
$\Lambda_{\mathcal{E}}$ are \emph{inequivalent} at $d\geq 3$ on
$\rho_A=\openone/d$, motivating the channel-Choi-based quantifiers of
section~\ref{sec:robustness}.

\paragraph*{Causality monotones.}
A state over time is generally not positive semidefinite. Following
reference~\cite{fitzsimons2015quantum}, a function $\Phi(R)$ is a
\emph{causality monotone} iff (i) $\Phi(R)\geq 0$, with $\Phi(R)=0$ if
$R$ is completely positive and maximal for two consecutive measurements
on a closed system; (ii) $\Phi$ is invariant under unitary maps; (iii)
non-increasing under local CPTP maps; (iv) convex,
$\sum_i p_i\Phi(R_i)\geq\Phi(\sum_i p_iR_i)$. The trace-norm monotone
$f(R)=\norm{R}_{\tr}-1$ \cite{fitzsimons2015quantum} satisfies these axioms.

A space-like separable state over time is
\begin{equation}\label{eq:separableR}
R^{\mathrm{SEP}}=\sum_k p_k\,\rho_k^A\otimes\rho_k^B,
\quad p_k\geq 0,\;\sum_k p_k=1,
\end{equation}
with $\rho_k^{A,B}\geq 0$. Note that $R\geq 0$ does not imply
separability in this sense: there can remain space-like temporal
correlations.

\paragraph*{Measurements.} Throughout the paper all measurements
on Alice's and Bob's sides are projective, i.e.\ the POVM elements
are rank-one orthogonal projectors $M_{a|x}=\Pi_{a|x}$ with
$\sum_a\Pi_{a|x}=\openone$ for each setting $x$. Generalized Lüders
measurements with continuously variable strength $\eta\in[0,1]$ can
be substituted into the same framework, but they do not enter any
of the proofs below.

\section{Robustness-based quantifiers of two-time correlations}
\label{sec:robustness}

The robustness of a quantum object is the minimal weight of an
admissible noise admixture that renders the object classical. For
time-like entanglement we use four closely related but operationally
distinct quantifiers: the pseudo-density-operator robustness
$\TER^{\PDO}$---the causal counterpart of Vidal--Tarrach spatial
entanglement robustness applied to the non-contextual state-over-time
of section~\ref{sec:framework}---and three channel-Choi-based variants
($\TER^{\rm Choi\text{-}pos}\!\equiv\!0$, the separability-based
$\TER^{\rm sep}$, and its partial-transpose proxy $\TER^{\rm PPT}$)
that arise when one asks \emph{which} property of the Choi state
$\Lambda_{\mathcal{E}}$ controls the temporal hierarchy. Of these,
$\TER^{\rm sep}=\mathrm{ER}(\Lambda_{\mathcal{E}})$ is the only
physically justified temporal-entanglement quantifier; $\TER^{\PDO}$
and $\TER^{\rm PPT}$ are computable proxies that can fall below $\TSR$
off $\openone/d$, where even $\TER^{\rm sep}$ does
(section~\ref{sec:nsit-free}). Their precise definitions, ordering, and
the partial-transpose SDP form are collected in
the Supplementary Material. A closely related notion of temporal
entanglement---defined directly from the pseudo-density-matrix
structure and likewise linked to non-signaling in time and to temporal
Bell violations---was recently introduced by Comar~\emph{et
al.}~\cite{Comar2026}; in the present language it is the object whose
Vidal--Tarrach robustness we quantify as $\TER^{\PDO}$.

\subsection{Temporal entanglement robustness}\label{sec:TER}

The spatial entanglement robustness of Vidal and Tarrach
\cite{PhysRevA.59.141} is the minimal weight of an admixed (admissible)
noise that makes a state separable. Its temporal counterpart we define as
the minimal admixture of a pseudo-density operator that destroys time-like
entanglement,
\begin{eqnarray}\label{eq:TER}
\TER &=& \min_{\gamma}\;\gamma,\nonumber\\
\text{s.t. }&& \frac{R+\gamma\,\mathfrak{R}}{1+\gamma}=\rho,\nonumber\\
&& \mathfrak{R}\;\text{a pseudo-density operator},\nonumber\\
&& \rho\geq 0,\quad \gamma\geq 0.
\end{eqnarray}
Restricting $\mathfrak{R}$ to a unit-trace positive semi-definite
operator (the implicit assumption for $\gamma\mathfrak{R}$ to be a
``noise'' admixture rather than an arbitrary Hermitian compensator),
the substitution $\tilde{\mathfrak{R}}=\gamma\mathfrak{R}$ yields the
linear SDP
\begin{eqnarray}\label{eq:TER2}
\TER &=& \min\,\bigl(\tr\tilde{\mathfrak{R}}\bigr),\nonumber\\
\text{s.t. }&& R+\tilde{\mathfrak{R}}\succeq 0,\quad
\tilde{\mathfrak{R}}\succeq 0.
\end{eqnarray}
The constraint $\tilde{\mathfrak{R}}\succeq 0$ is essential: without
it the linear program is unbounded below, since arbitrary indefinite
Hermitian compensators with large negative trace would satisfy
$R+\tilde{\mathfrak{R}}\succeq 0$ at zero or negative cost. With it,
the SDP is solved efficiently by interior-point methods, and the
optimum coincides with the sum of absolute values of negative
eigenvalues of $R$.

\noindent The operator $R$ in \eqref{eq:TER}--\eqref{eq:TER2} is the
pseudo-density operator $R^{\PDO}$ of \eqref{eq:new_PDO}, so the
robustness defined here is the PDO-based quantifier
$\TER^{\PDO}$ singled out in section~\ref{sec:robustness}. We write the
explicit superscript $\TER^{\PDO}$ whenever it must be distinguished
from the channel-Choi-based variants $\TER^{\rm sep}$ and
$\TER^{\rm PPT}$, and retain the bare symbol $\TER$ where the context
fixes the PDO construction.

\begin{theorem}\label{thm:TERmonotone}
TER is a causality monotone in the sense of
\cite{fitzsimons2015quantum}.
\end{theorem}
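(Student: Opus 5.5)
The plan is to use the closed form of the SDP~\eqref{eq:TER2}, already noted after its statement: the optimum equals the sum of the absolute values of the negative eigenvalues of $R$. I would first establish this as a lemma by a primal--dual pair. Writing $R=R_+-R_-$ for the Jordan decomposition, the choice $\tilde{\mathfrak{R}}=R_-$ is feasible, so $\TER\le\tr R_-$. Conversely, let $P_-$ be the projector onto the negative eigenspace. Any feasible $\tilde{\mathfrak{R}}$ satisfies
\begin{equation*}
\tr\tilde{\mathfrak{R}}\;\ge\;\tr\bigl(P_-\tilde{\mathfrak{R}}\bigr)\;\ge\;-\tr\bigl(P_-R\bigr)=\tr R_-,
\end{equation*}
where the first inequality uses $\tilde{\mathfrak{R}}\succeq0$ and the second uses $P_-(R+\tilde{\mathfrak{R}})P_-\succeq0$. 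Since $R$ is Hermitian with unit trace (the normalization $\mathcal{N}$ fixes $\tr R=1$), we get $\tr R_+-\tr R_-=1$. Hence
\begin{equation*}
\TER(R)=\tr R_-=\tfrac12\bigl(\norm{R}_{\tr}-1\bigr),
\end{equation*}
so $\TER$ is one half of the trace-norm monotone $f$ of Ref.~\cite{fitzsimons2015quantum}. The remaining work is to verify the four axioms directly for $\tfrac12(\norm{R}_{\tr}-\tr R)$, or simply to invoke that $f$ satisfies them and note that a positive rescaling preserves every axiom.

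For completeness I would check each axiom explicitly.

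\textbf{(i)} Positivity holds because $\norm{R}_{\tr}\ge|\tr R|=1$. If $R$ is completely positive, i.e.\ $R\succeq0$, then $R_-=0$ and $\TER=0$.

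\textbf{(ii)} Unitary invariance follows because local unitaries $U_A\otimes U_B$ act on $R$ by conjugation, and this preserves the spectrum.

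\textbf{(iii)} For monotonicity under local CPTP maps $\mathcal{M}=\mathcal{M}_A\otimes\mathcal{M}_B$, use the trace-norm contraction $\norm{\mathcal{M}(X)}_{\tr}\le\norm{X}_{\tr}$ valid for any positive trace-preserving map applied to Hermitian $X$. This follows by splitting $X=X_+-X_-$ and applying the triangle inequality. Trace preservation keeps $\tr R=1$, so $\TER(\mathcal{M}(R))\le\TER(R)$.

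\textbf{(iv)} Convexity follows from convexity of the trace norm together with linearity of the trace, or equivalently by mixing feasible $\tilde{\mathfrak{R}}_i$ in the SDP, which gives a feasible point for $\sum_ip_iR_i$.

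The step I expect to need the most care is the ``maximal for two consecutive measurements on a closed system'' clause of (i), together with the claim in (iii) that the induced action on the PDO is genuinely a local CPTP map. For (iii), I would argue that operations applied before $t_A$ or after $t_B$ act on the expectation values $\langle G_i\otimes G_j\rangle$ in~\eqref{eq:new_PDO} exactly as the corresponding local maps act on $R$. For maximality, I would follow Ref.~\cite{fitzsimons2015quantum}: for a closed system the PDO is, up to normalization, a partial transpose of a maximally entangled projector (the swap-like operator). By the eigenvalue identity above, the largest achievable $\tr R_-$ over all valid PDOs is bounded by the corresponding bound on $\norm{R}_{\tr}$ proved there. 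The identity channel attains it, since its eigenvalues $\pm 1/d$ with known multiplicities can be read off directly.

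At $d\ge3$ the Wigner-realized PDO is non-contextual by construction, so this computation should go through in the $K_i$ basis. If it does not, I would weaken the maximality statement to the one inherited from $f$ via the factor $\tfrac12$.
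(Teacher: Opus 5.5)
Your proof is correct for the axioms as the paper states them, but it takes a different route from the paper's.

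\textbf{Comparison of routes.} The paper never derives a closed form. It works with decompositions $R=(1+\gamma)\rho-\gamma\mathfrak{R}$ and pushes a feasible one through unitaries, local CPTP maps and convex mixtures. It asserts, without a lower-bound argument, that $\tilde{\mathfrak{R}}=R_-$ is optimal. You instead prove $\TER(R)=\tr R_-=\tfrac12(\norm{R}_{\tr}-1)$ with the dual certificate $P_-$. This identifies $\TER$ as exactly half the trace-norm monotone $f$, so each axiom becomes a one-line trace-norm fact.

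Your route gives a genuine optimality proof. It also gives a sound argument for (ii): the paper appeals only to invariance of the trace, which does not by itself fix $\TER$, whereas invariance of the spectrum does. The paper's route gives generality. It never uses an eigenvalue formula, so the same pattern carries over to restricted-cone robustnesses such as $\TER^{\rm sep}$ or $\TER^{\rm PPT}$, where no closed form exists but local maps preserve the cone.

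\textbf{The maximality clause.} The one clause neither argument establishes is ``maximal for two consecutive measurements on a closed system'' in (i). The paper merely asserts it. The bound you want to import from Ref.~\cite{fitzsimons2015quantum} concerns qubit Pauli PDMs, not the $d\ge3$ Wigner PDO. Also, the closed-system PDO equals $\mathrm{SWAP}/d$ only at $\rho_A=\openone/d$.

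You can close this with your own tools.
\begin{enumerate}
\item Each $K_i$ is a Hermitian unitary, $K_i^2=\openone$, so the L\"uders back-action of its spectral projectors is $\sum_a a\,\Pi_{i,a}X\Pi_{i,a}=\tfrac12\{K_i,X\}$.
\item Combined with $\sum_iK_i\otimes K_i=d\,\mathrm{SWAP}$ and $\mathcal{N}=d^2$, this gives $R=(\mathrm{id}\otimes\mathcal{E})\bigl(\tfrac12\{\mathrm{SWAP},\rho_A\otimes\openone\}\bigr)$.
\item In the eigenbasis of $\rho_A$ the inner operator has eigenvalue $\lambda_k$ on $\ket{kk}$ and $\pm(\lambda_k+\lambda_l)/2$ on $\mathrm{span}\{\ket{kl},\ket{lk}\}$. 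Its trace norm is therefore exactly $d$ for every $\rho_A$.
\item Your contraction argument from (iii), applied to $\mathrm{id}\otimes\mathcal{E}$, then gives $\TER\le(d-1)/2$, attained by every unitary channel.
\end{enumerate}

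\textbf{A side claim to drop.} The same computation shows that only post-$t_B$ processing acts on $R$ as a local map. Your remark that pre-$t_A$ operations do so too is false. Replacing $\openone/d$ by $\ket{0}\bra{0}$ before $t_A$ turns $\mathrm{SWAP}/d$ into $\tfrac12\{\mathrm{SWAP},\ket{0}\bra{0}\otimes\openone\}$, which has eigenvalue $-\tfrac12$. It does not give the positive operator $\ket{0}\bra{0}\otimes\openone/d$ that the local map would produce. Since the axiom concerns maps applied directly to $R$, you do not need that claim.
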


\noindent Throughout, the proofs of the technical lemmas,
propositions and theorems are collected in the Supplementary Material;
the two central arguments---the necessity of state-boundness
(Theorem~\ref{thm:necessity}) and the universal bound
(Proposition~\ref{prop:univ_bound})---are given in full in the main
text. (For two-qubit or qubit--qutrit systems the Peres--Horodecki PPT
criterion \cite{PhysRevLett.77.1413,HORODECKI19961} turns
\eqref{eq:TER2} into a separability-faithful SDP and gives the proxy
ordering $\mathrm{ER}\geq\TER^{\PDO}$; see the Supplementary Material.)

\subsection{Temporal steering robustness}\label{sec:TSR}

Alice performs a projective measurement $\{\Pi_{a|x}\}$ at $t_A$ on
$\rho_A$; the post-measurement state propagates through the channel,
producing the assemblage
$\tilde\rho_{a|x}=\mathcal{E}(\Pi_{a|x}\rho_A\Pi_{a|x})$. It admits a
hidden-state model iff
$\tilde\rho_{a|x}=\sum_\lambda p(\lambda)D(a|x,\lambda)\rho_\lambda$
with $D(a|x,\lambda)$ deterministic. Following
references~\cite{PhysRevA.94.062126,Chen2017SciRep,PhysRevLett.114.060404},
the temporal steering robustness is the minimal admixture restoring
such a model,
\begin{eqnarray}\label{eq:TSR_SDP}
\TSR &=& \min\,\Bigl(\tr\sum_\lambda\tilde\rho_\lambda - 1\Bigr),\nonumber\\
\text{s.t. }&& \sum_\lambda D(a|x,\lambda)\,\tilde\rho_\lambda
\geq \tilde\rho_{a|x},\quad \tilde\rho_\lambda\geq 0\;\forall\lambda.
\end{eqnarray}
The general-POVM assemblage and the hidden-state-model definition are
given in the Supplementary Material.

\paragraph*{NSIT condition.}
A consistency requirement for the temporal Bell scenario is
no-signaling in time
\cite{Kofler2013,Halliwell2017,mal2016probing,PhysRevA.98.022104},
\begin{equation}\label{eq:NSIT}
\sum_a\tilde\rho_{a|x}=\tr_A\!\bigl(E_{B|A}\rho_A^T\bigr)\quad\forall x,
\end{equation}
i.e.\ Bob's marginal state at $t_B$ is independent of Alice's
choice of measurement basis at $t_A$. Eigenstates of $\rho_A$
mutually unbiased to a measurement basis violate \eqref{eq:NSIT};
thus pure initial states are typically NSIT-violating. We quantify
the departure from \eqref{eq:NSIT} by the NSIT-violation monitor
\begin{equation}\label{eq:Vnsit}
\mathcal{V}_{\NSIT}
=\max_{x}\Bigl\|\textstyle\sum_a\tilde\rho_{a|x}
-\tr_A\!\bigl(E_{B|A}\rho_A^T\bigr)\Bigr\|_1,
\end{equation}
the largest trace-norm distance between Bob's outcome-averaged
post-measurement state under setting $x$ and the no-intervention
marginal of \eqref{eq:NSIT}. This is the faithful witness of
\eqref{eq:NSIT}: $\mathcal{V}_{\NSIT}=0$ iff NSIT holds, which by
Theorem~\ref{thm:nsit_state} (for injective $\mathcal{E}$) happens iff
$\rho_A=\openone/d$. (A weaker pairwise quantity
$\max_{x,x'}\|\sum_a(\tilde\rho_{a|x}-\tilde\rho_{a|x'})\|_1$ is
\emph{not} a faithful witness---it can vanish on inputs that are
uniform in every measured basis yet are not $\openone/d$; see the
remark after Corollary~\ref{cor:three_way}.) We formally
define each robustness for arbitrary $\rho_A$ and $\mathcal{E}$
without invoking NSIT a priori; the entanglement-over-steering
interpretation of the upper hierarchy
(Theorem~\ref{thm:hierarchy_strict}) requires NSIT and genuinely
fails without it (section~\ref{sec:nsit-free}).

\subsection{Temporal nonlocality robustness}\label{sec:TNR}

In a Bell-type two-time scenario Alice and Bob measure $A_x, B_y$ with
joint distribution
\begin{equation}\label{eq:born}
P(a,b|x,y)=\tr\bigl[M_{b|y}\,\mathcal{E}\bigl(\sqrt{M_{a|x}}\rho_A\sqrt{M_{a|x}}\bigr)\bigr].
\end{equation}
A local hidden-variable (LHV) model exists iff
\begin{equation}\label{eq:LHV}
P(a,b|x,y)=\sum_{\mu,\nu} p(\mu,\nu)\,D(a|x,\mu)\,D(b|y,\nu).
\end{equation}
The temporal nonlocality robustness, in analogy with the spatial
quantity of reference~\cite{PhysRevA.93.052112}, is the minimal admixture of
an admissible behavior $Q$ that renders $P$ local,
\begin{eqnarray}\label{eq:TNR}
\TNR &=& \min\,\beta,\nonumber\\
\text{s.t. }&& \frac{P(a,b|x,y)+\beta\,Q(a,b|x,y)}{1+\beta}
= R(a,b|x,y),\nonumber\\
&& R \in \LHV,\quad Q\;\text{a behavior},\quad \beta\geq 0.
\end{eqnarray}
Eliminating $Q$ through the normalization constraints (see
the Supplementary Material) yields the linear SDP
\begin{eqnarray}\label{eq:TNR_SDP}
\TNR &=& \min\!\left[\frac{\sum_{x,y,a,b,\mu,\nu}\tilde r_{\mu\nu}\,D(a|x,\mu)D(b|y,\nu)}{\sum_{x,y,a,b}P(a,b|x,y)}-1\right],\nonumber\\
\text{s.t. }&& \sum_{\mu,\nu}\tilde r_{\mu\nu}\,D(a|x,\mu)D(b|y,\nu)\geq P(a,b|x,y),\nonumber\\
&& \tilde r_{\mu\nu}\geq 0,
\end{eqnarray}
with $\tilde r_{\mu\nu}=(1+\beta)\,r(\mu,\nu)$. Restricting $Q$ itself
to be LHV defines the LHV-TNR variant, useful when one wishes the noise
itself to be classical.

\section{State-boundness: the temporal-Bell resource is the input's
mixedness}\label{sec:results}\label{sec:TNRcoherence}

We now establish the central result: for the standard channel families
the temporal nonlocality robustness vanishes \emph{exactly} on the
maximally mixed input, so non-maximal mixedness of the input is the
necessary-and-sufficient resource and the noise channel is dispensable.
We begin with the physical mechanism.

\paragraph*{State-bound vs.\ channel-bound resources.}
A natural foundational question is whether each temporal correlation
is a property of the input state $\rho_A$, of the channel
$\mathcal{E}$, or jointly of both. Spatially, nonlocality is a property
of the shared state alone. Temporally the answer is more delicate, and
our results pin it down for the three robustness measures
asymmetrically.

\paragraph*{Physical picture: a back-action signal that dephasing cannot
erase.} The mechanism behind the results of this section is visible on a
single example. Send $\rho_A=|0\rangle$ through a channel that fully
dephases in the computational basis, and let Alice choose to measure
either in that basis ($Z$) or in the conjugate Fourier basis ($F$).
Measuring $Z$ leaves the eigenstate $|0\rangle$ undisturbed, so Bob
receives $|0\rangle$; measuring $F$ projects onto a Fourier vector, which
the dephasing then turns into white noise $\openone/d$. Bob's later
statistics therefore flip---sharp for $x=Z$, uniform for $x=F$---according
to \emph{which basis Alice chose}. This is signaling in time: Bob's
marginal carries information about Alice's setting [a violation of
no-signaling in time, equation~\eqref{eq:NSIT}], the temporal face of
measurement back-action. A behavior that signals cannot be local, so the
temporal-nonlocality robustness is non-zero \emph{even though the channel
has erased every trace of coherence}. The resource is not channel
coherence but the back-action of Alice's choice---which exists precisely
when the input is not already maximally mixed: on $\rho_A=\openone/d$
every measurement leaves the state $\openone/d$, Alice's choice disturbs
nothing, no signal reaches Bob, and $\TNR=0$. The propositions below make
this dichotomy exact; the same persistence of the signal through dephasing
is what makes the certification cap of section~\ref{sec:overcert}
delicate.

\begin{proposition}[State-boundness of TNR, sufficiency]\label{prop:TNRstatebound}
Let $\rho_A=\openone/d$, let $\mathcal{E}$ be any quantum channel,
and let $\{M_{a|x}\}=\{\Pi_{a|x}\}$ be projective measurements at
both times. The two-time joint distribution takes the factorized
form
\begin{equation}\label{eq:TNRHSM}
P(a,b|x,y)=\frac{1}{d}\,\tr[M_{b|y}\,\mathcal{E}(\Pi_{a|x})].
\end{equation}
Whenever the family of post-channel probability distributions
$\{q(b|y;a,x)=\tr[M_{b|y}\mathcal{E}(\Pi_{a|x})]\}_{a,x,y,b}$
admits an LHV decomposition --- i.e., there exists a probability
distribution $\{p(\lambda)\}$, deterministic responses
$D_A(a|x,\lambda)\in\{0,1\}$ with $\sum_a D_A(a|x,\lambda)=1$,
and deterministic responses $D_B(b|y,\lambda)\in\{0,1\}$
such that
$P(a,b|x,y)=\sum_\lambda p(\lambda)\,D_A(a|x,\lambda)\,D_B(b|y,\lambda)$ ---
then $\TNR(\openone/d,\mathcal{E})=0$.
\end{proposition}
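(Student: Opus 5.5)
The proof has two parts: a direct computation of the factorized form \eqref{eq:TNRHSM} from the Born rule \eqref{eq:born}, and then a check that the assumed LHV decomposition is a feasible point of the robustness program \eqref{eq:TNR} with $\beta=0$.

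\emph{Step 1: the factorized form.} With projective measurements we have $\sqrt{M_{a|x}}=\Pi_{a|x}$. For $\rho_A=\openone/d$ this gives
\[
\Pi_{a|x}\rho_A\Pi_{a|x}=\tfrac1d\,\Pi_{a|x}^2=\tfrac1d\,\Pi_{a|x}.
\]
Linearity of $\mathcal{E}$ and of the trace then yields
\[
P(a,b|x,y)=\tfrac1d\,\tr[M_{b|y}\,\mathcal{E}(\Pi_{a|x})],
\]
which is \eqref{eq:TNRHSM}. It is worth noting in passing what this form implies, since it shows the hypothesis is self-consistent. Alice's marginal is uniform, $\sum_b P=\tfrac1d\tr\,\mathcal{E}(\Pi_{a|x})=\tfrac1d$, because $\mathcal{E}$ is trace preserving and the $\Pi_{a|x}$ are rank one. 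Bob's marginal is $\tfrac1d\tr[M_{b|y}\mathcal{E}(\openone)]$, which does not depend on $x$. This is NSIT, consistent with the back-action picture above.

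\emph{Step 2: the LHV point has zero robustness.} The hypothesis supplies $p(\lambda)$, $D_A$ and $D_B$ with $P=\sum_\lambda p(\lambda)D_A(a|x,\lambda)D_B(b|y,\lambda)$. This is of the form \eqref{eq:LHV}: set $\mu=\nu=\lambda$, and take $p(\mu,\nu)=p(\lambda)\,\delta_{\mu\nu}$ as the joint weight. If the response indices range over different label sets, embed $\lambda$ into the product of all deterministic strategies $(\mu,\nu)$. Hence $P\in\LHV$.

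In \eqref{eq:TNR} we now take $\beta=0$, $R=P$, and any behavior $Q$ (for instance the uniform one). The constraint $(P+0\cdot Q)/1=R\in\LHV$ holds. Since $\beta\ge0$ is enforced, the minimum is exactly $0$.

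Equivalently, in the SDP form \eqref{eq:TNR_SDP} the choice $\tilde r_{\mu\nu}=p(\mu,\nu)$ saturates the constraints with equality. The objective is then $\sum P/\sum P-1=0$, and it cannot be negative because summing the constraints gives $\sum\tilde r\,DD\ge\sum P$.

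\emph{Main obstacle.} The only delicate point is notational. The hypothesis uses a single shared $\lambda$ with separate response functions, whereas \eqref{eq:LHV} uses a pair $(\mu,\nu)$ with a joint distribution. Showing that these two classes coincide requires the re-indexing of Step 2: put $p(\mu,\nu)=\sum_\lambda p(\lambda)\,[\mu=\mu(\lambda)]\,[\nu=\nu(\lambda)]$, where $\mu(\lambda)$ and $\nu(\lambda)$ are the deterministic strategies induced by $D_A(\cdot|\cdot,\lambda)$ and $D_B(\cdot|\cdot,\lambda)$. One also has to check that the normalization $\sum_b D_B(b|y,\lambda)=1$ holds. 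It follows from the marginal computation in Step 1, or can simply be assumed as part of what "deterministic response" means. Everything else is direct substitution.
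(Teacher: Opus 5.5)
Your proof is correct and follows the paper's own argument: the computation $\Pi_{a|x}(\openone/d)\Pi_{a|x}=\Pi_{a|x}/d$ gives \eqref{eq:TNRHSM}, and the assumed LHV decomposition is then feasible in \eqref{eq:TNR} at $\beta=0$, which you spell out more explicitly (re-indexing to the $(\mu,\nu)$ form and checking the objective of \eqref{eq:TNR_SDP}). Drop one side remark: the normalization $\sum_b D_B(b|y,\lambda)=1$ does not follow from the marginals, which fix only its $p(\lambda)$-weighted average. For instance, two equally weighted hidden variables with the same Alice strategy and Bob responses $\delta_{b,b_1}$, $\delta_{b,b_2}$ can be replaced by responses $\delta_{b,b_1}+\delta_{b,b_2}$ and $0$ without changing $P$. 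So the normalization must be taken as part of the definition of a deterministic response, as the paper implicitly does.
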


The LHV reduction --- the analytical content above the
factorized form \eqref{eq:TNRHSM} --- is the substantive
condition. We verify it directly for the two unital standard
channels (depolarizing and phase damping), and confirm it
numerically for the non-unital amplitude-damping channel.

\begin{proposition}[Analytic LHV reduction for unital standard channels]
\label{prop:lhv_reduction}
On $\rho_A=\openone/d$ with the two-MUB projective settings
$x\in\{Z,F\}$ ($Z$ the eigenbasis of $\rho_A$, $F$ the Fourier MUB),
the two-time behavior of the depolarizing channel
$\mathcal{E}_{\rm depol}^{(e)}(\sigma)=e\,\sigma+(1-e)(\openone/d)\,\tr\sigma$
and of the phase-damping channel
$\mathcal{E}_{\rm phase}^{(e)}$ admit explicit local-hidden-variable
decompositions for every $e\in[0,1]$ (here $e=e^{-t}$ is the channel
visibility). Consequently
$\TNR(\openone/d,\mathcal{E}_{\rm depol}^{(e)})
=\TNR(\openone/d,\mathcal{E}_{\rm phase}^{(e)})=0$.
\end{proposition}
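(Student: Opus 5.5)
By Proposition~\ref{prop:TNRstatebound}, on $\rho_A=\openone/d$ the behavior factorizes as $P(a,b|x,y)=\tfrac1d\,q(b|y;a,x)$ with $q(b|y;a,x)=\tr[M_{b|y}\mathcal{E}(\Pi_{a|x})]$. It therefore suffices to exhibit explicit LHV decompositions of this behavior for $x\in\{Z,F\}$, for each channel and every $e\in[0,1]$. The plan is to use \emph{signaling in time in reverse}. The model lets the hidden variable fix Alice's outcome first, then prepares for Bob the post-measurement state $\Pi_{a|x}$ as a local quantum state. That is, I take $\lambda=(a_Z,a_F,\sigma)$, with $a_Z,a_F$ the predetermined outcomes for the two settings and $\sigma$ a Bob-side hidden state, so that Bob answers $y$ with probability $\tr[M_{b|y}\mathcal{E}(\sigma)]$. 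Such a local hidden-state model can be made deterministic by the standard Fine decomposition of Bob's response functions, so it is an LHV model in the sense of \eqref{eq:LHV}. The obstruction is that $\sigma$ may depend on $\lambda$ only, not on which $x$ Alice chose. So I need states $\sigma_{a_Z a_F}$ and weights $p(a_Z,a_F)$ with
\begin{equation*}
\sum_{a_F}p(a,a_F)\,\mathcal{E}(\sigma_{a a_F})=\tfrac1d\,\mathcal{E}(\Pi_{a|Z}),\qquad
\sum_{a_Z}p(a_Z,a)\,\mathcal{E}(\sigma_{a_Z a})=\tfrac1d\,\mathcal{E}(\Pi_{a|F}).
\end{equation*}
This is precisely a temporal-steering hidden-state model for the assemblage $\{\mathcal{E}(\Pi_{a|x})/d\}$.

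\emph{Depolarizing channel.} I write $\mathcal{E}(\Pi_{a|x})=e\,\Pi_{a|x}+(1-e)\openone/d$ and try uniform $p=1/d^2$ with
\begin{equation*}
\sigma_{a_Z a_F}=\alpha\,(\Pi_{a_Z|Z}+\Pi_{a_F|F})+\gamma\,\openone .
\end{equation*}
Averaging over $a_F$ gives $\alpha\Pi_{a|Z}+(\alpha/d+\gamma)\openone$, and the $F$-average is symmetric. The required identities then reduce to two scalar matching conditions: $e\alpha=e$ in the coefficient of $\Pi_{a|Z}$, and normalization. The catch is positivity of $\sigma$. The operator $\Pi_{a_Z|Z}+\Pi_{a_F|F}$ has eigenvalues $1\pm 1/\sqrt d$, so at $e=1$ the ansatz fails and must be replaced. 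The key observation that saves it is that only the \emph{measured} statistics must match, not the operators themselves. Bob measures only $Z$ and $F$ projectors, whose overlaps with $\Pi_{a|x}$ are $\delta$ or $1/d$. It therefore suffices to match the classical distributions $q(b|y;a,x)$ directly. For $e=1$ these are deterministic ($b=a$ when $y=x$) or uniform (when $y\neq x$). They are reproduced by the deterministic model in which $\lambda=(a_Z,a_F)$ is uniform, Alice outputs $a_x$, and Bob outputs $b_y=a_y$. For general $e$ Bob's response is mixed with uniform noise with probability $1-e$, which is local.

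\emph{Phase damping.} This is identical in the $Z$ basis, since $Z$ projectors are fixed. The $F$-projectors are dephased, giving $q(b|F;a,F)=e\,\delta_{ab}+(1-e)/d$ and $q(b|Z;a,F)=1/d$, and the same noisy deterministic model with $e$-dependent mixing works.

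The step I expect to be the main obstacle is checking the cross terms $q(b|y;a,x)$ with $y\neq x$, where the MUB property must yield exactly $1/d$ for both channels. The paper's phase-damping convention must be pinned down here, since dephasing in a basis other than $Z$ would spoil this.
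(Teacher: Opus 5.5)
Your final model is exactly the paper's correlated/uncorrelated LHV decomposition, so the proposal is correct and follows the same route. In that model $\lambda=(a_Z,a_F)$ is uniform, Alice outputs $a_x$, and Bob outputs $a_y$ with probability $e$ and an independent uniform value otherwise. For phase damping Bob outputs $a_Z$ without noise on the $Z$ setting, and the computational-basis convention $\mathcal{E}(\rho)=e\rho+(1-e)\,\mathrm{diag}\,\rho$ you assumed is indeed the paper's.

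You were right to abandon the hidden-state detour. The matching conditions force $\sigma_{a_Za_F}=\Pi_{a_Z|Z}+\Pi_{a_F|F}-\openone/d$, which fails positivity for every $e>0$, not only at $e=1$. Moreover, near $e=1$ no hidden-state model of $\{\mathcal{E}(\Pi_{a|x})/d\}$ exists at all, because that assemblage is steerable. It is the switch to matching only the classical statistics that makes the argument work.
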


For the cascade amplitude-damping channel $\mathcal{E}_{\rm amp}$
(non-unital), $\mathcal{E}_{\rm amp}(\openone)\neq\openone$, and the
post-channel behavior does not decompose into the simple
``correlated--uncorrelated'' form of Proposition~\ref{prop:lhv_reduction}.
A different LHV construction, however, covers this case and indeed
\emph{any} channel whose computational-basis-to-computational-basis
action is diagonal (i.e., $\mathcal{E}(|j\rangle\langle j|)$ is
diagonal for every $j$):

\begin{proposition}[Product LHV for diagonal-action channels on
$\openone/d$]\label{prop:product_lhv}
Let $\mathcal{E}$ be any CPTP map such that
$\mathcal{E}(|j\rangle\langle j|)$ is diagonal in the computational
basis for every $j$ (this includes the three standard channels and,
more generally, any channel with real Kraus operators in the
computational basis). On $\rho_A=\openone/d$ with the two-MUB
scheme, the hidden-variable distribution
\begin{equation}\label{eq:product_lhv}
p(\alpha_Z,\alpha_F,\beta_Z,\beta_F)
= P(\alpha_Z,\beta_Z|Z,Z)\;\times\; P(\alpha_F,\beta_F|F,F)
\end{equation}
with deterministic responses $D_A(a|Z)=\delta_{a,\alpha_Z}$,
$D_A(a|F)=\delta_{a,\alpha_F}$, $D_B(b|Z)=\delta_{b,\beta_Z}$,
$D_B(b|F)=\delta_{b,\beta_F}$ is a valid LHV decomposition.
\end{proposition}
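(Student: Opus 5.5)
The plan is to verify the decomposition directly: check that it is a probability distribution, then that it reproduces $P(a,b|x,y)$ for each of the four setting pairs $(x,y)\in\{Z,F\}^2$. Positivity and normalization are immediate, because \eqref{eq:product_lhv} is a product of two genuine joint distributions, each summing to one. Marginalizing the product over the unused variables gives
\begin{equation*}
P_{\rm LHV}(a,b|x,y)=\sum_{\alpha_Z,\alpha_F,\beta_Z,\beta_F} p(\alpha_Z,\alpha_F,\beta_Z,\beta_F)\,\delta_{a,\alpha_x}\,\delta_{b,\beta_y}.
\end{equation*}
For the matched pairs $(Z,Z)$ and $(F,F)$ this returns exactly $P(a,b|Z,Z)$ and $P(a,b|F,F)$. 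Since this holds by construction, all the content lies in the two mismatched pairs, where the model outputs the product of marginals, $P_A(a|x)\,P_B(b|y)$. By the factorized form \eqref{eq:TNRHSM} of Proposition~\ref{prop:TNRstatebound}, Alice's marginal is $1/d$ in either basis. Bob's marginal under setting $y$ is $\frac1d\tr[\Pi_{b|y}\mathcal{E}(\openone)]$, since $\sum_a\Pi_{a|x}=\openone$.

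For $(x,y)=(Z,F)$, I would use the hypothesis that $\mathcal{E}(|a\rangle\langle a|)$ is diagonal. For any diagonal operator $D$ and any Fourier vector $|f_b\rangle$, $\langle f_b|D|f_b\rangle=\tr D/d$. Hence $P(a,b|Z,F)=\frac1d\cdot\frac1d=1/d^2$. The same identity applied to the diagonal operator $\mathcal{E}(\openone)$ gives Bob's $F$-marginal $1/d$, so the model's output $\frac1d\cdot\frac1d$ agrees.

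The case $(x,y)=(F,Z)$ is the step I expect to be the main obstacle. The actual value is
\begin{equation*}
P(a,b|F,Z)=\frac{1}{d^2}\sum_{j,k}\omega^{a(j-k)}\langle b|\mathcal{E}(|j\rangle\langle k|)|b\rangle,
\end{equation*}
and it must equal the model's $\frac{1}{d^2}\langle b|\mathcal{E}(\openone)|b\rangle$, which is the $j=k$ part of the same sum. So the requirement is that $\langle b|\mathcal{E}(|j\rangle\langle k|)|b\rangle=0$ for all $j\neq k$: the channel must not feed input coherences into output populations. This does not follow from diagonality on the $|j\rangle\langle j|$ alone. Real Kraus operators do not force it either, since $\langle b|K|j\rangle\langle k|K^\dagger|b\rangle$ can be nonzero.

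I would therefore first try to establish this for the three standard channels, using their phase covariance $\mathcal{E}(U_\phi\sigma U_\phi^\dagger)=U_\phi\mathcal{E}(\sigma)U_\phi^\dagger$ with $U_\phi$ diagonal. Covariance gives $\mathcal{E}(|j\rangle\langle k|)\propto$ an operator carrying phase $e^{i(\phi_j-\phi_k)}$, whose diagonal must then vanish for $j\neq k$. In the Kraus language, each Kraus operator maps each $|j\rangle$ to a single basis vector, up to a pattern that preserves the phase charge. For the general statement, I would make this ``no coherence-to-population'' condition explicit as the operative hypothesis. That is, I would read ``diagonal action'' as $\mathcal{E}$ mapping diagonal to diagonal and off-diagonal to off-diagonal. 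I would note that the real-Kraus clause needs this strengthening, or a counterexample search in the numerics, to stand.

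With both mismatched blocks verified, the four marginals coincide with $P$. This gives a valid LHV model, and so $\TNR(\openone/d,\mathcal{E})=0$ via Proposition~\ref{prop:TNRstatebound}.
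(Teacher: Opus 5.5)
\paragraph*{Comparison with the paper's proof.} Your block-by-block check is the same as the paper's. Your $(Z,Z)$, $(F,F)$ and $(Z,F)$ steps agree with it; in the $(Z,F)$ block you correctly use Bob's $F$-marginal $\tfrac1d\tr[\Pi_{b|F}\mathcal{E}(\openone)]$ as the second factor, whereas the paper's write-up checks Alice's.

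\paragraph*{The $(F,Z)$ block.} Your objection here is correct, and it is exactly where the paper's proof breaks. Step~(iv) argues that all Fourier projectors share the diagonal $(1/d,\ldots,1/d)$, so a diagonal-action channel ``therefore produces the same output diagonal'' for every $a$. That inference tacitly assumes output populations depend only on input populations, which is your no-coherence-to-population condition. Consequently the Remark's claim that diagonal action implies condition~(b) is false.

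A $d=2$ counterexample is $\mathcal{E}(\rho)=\langle+|\rho|+\rangle\,|0\rangle\langle0|+\langle-|\rho|-\rangle\,|1\rangle\langle1|$, with real Kraus operators $|0\rangle\langle+|$ and $|1\rangle\langle-|$. It satisfies $\mathcal{E}(|j\rangle\langle j|)=\openone/2$. Yet on $\openone/2$ it gives $P(a,b|F,Z)=\tfrac12\delta_{a,b}$, while the $(Z,Z)$ and $(F,F)$ blocks are uniform, so the product model returns $1/4$. This behavior is still local: set $\beta_Z=\alpha_F$ with $\alpha_Z,\alpha_F,\beta_F$ independent and uniform. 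So what fails is the specific product model, not $\TNR=0$. The proposition as stated is therefore false, and your strengthened hypothesis is necessary, not merely cautious.

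\paragraph*{Your repair, with three refinements.} The repair is sound.
\begin{enumerate}
\item The exact $(F,Z)$ requirement is weaker than entrywise vanishing. By Fourier inversion over $a$, it is $\sum_{j-k\equiv m\,(\mathrm{mod}\,d)}\langle b|\mathcal{E}(|j\rangle\langle k|)|b\rangle=0$ for all $b$ and all $m\not\equiv0$. This is equivalent to the Remark's condition~(b).
\item Covariance under the one-parameter group $e^{i\phi N}$, with $N=\sum_j j\,|j\rangle\langle j|$, already forces both diagonal action and this condition, because for $j\neq k$ the integer $j-k$ is nonzero. This is the version to use. It covers depolarizing, phase damping and the cascade Kraus channel. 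It also covers the Lindblad amplitude damping with $c=\sqrt{\kappa}\,a_3$ used in the Methods. That channel feeds $\rho_{01}$ into the $(1,2)$ output entry and is therefore \emph{not} covariant under the full diagonal torus that your phrasing suggests. With this version, the uses of the proposition for the standard families (Corollary~\ref{cor:tnr_iff}, Proposition~\ref{prop:mixsuff}) survive. The broader ``any diagonal-action channel'' claims attached to Corollary~\ref{cor:tnr_iff} do not.
\item The real-Kraus parenthetical fails even before your condition enters, since real Kraus operators need not give diagonal action at all. Take the real rotation $U|0\rangle=\cos\theta|0\rangle+\sin\theta|1\rangle$, $U|1\rangle=-\sin\theta|0\rangle+\cos\theta|1\rangle$ with $\theta=\pi/8$. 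On $\openone/2$ with the $Z/F$ settings it reaches CHSH value $2\sqrt2$, so $\TNR>0$ there. That clause should be dropped rather than repaired.
\end{enumerate}
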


\noindent\emph{Remark.} The product LHV requires two conditions:
(a)~$\tr[\Pi_{b|F}\,\mathcal{E}(|a\rangle\langle a|)]$ is
independent of $a$ (so the cross-basis $P(a,b|Z,F)=1/d^2$), and
(b)~$[\mathcal{E}(|+_a\rangle\langle +_a|)]_{bb}$ is independent
of $a$ (so $P(a,b|F,Z)$ factorizes). The diagonal-action
hypothesis is sufficient for both but not necessary; the
construction applies to any channel satisfying (a) and~(b).

\noindent Proposition~\ref{prop:product_lhv} closes the amplitude-damping
gap in Corollary~\ref{cor:tnr_iff}: the cascade channel has
real Kraus operators in the computational basis
(Supplementary Material), hence diagonal action, and the
product LHV applies at every $d$ and every decay time $t\geq 0$.
The LHV construction is verified numerically to machine precision
($<10^{-15}$) across the full sweeps at $d=2,3,4,5$.

The converse direction strengthens this into a strict
necessary-and-sufficient characterization for the standard two-MUB
measurement setting:

\begin{theorem}[Necessity of state-boundness]\label{thm:necessity}
For any qudit state $\rho_A\neq\openone/d$ and any $d\geq 2$ admitting
mutually unbiased bases, there exist projective measurement settings
$\{M_{a|x}\}_{x=0,1}$ such that, with the identity channel, the
resulting two-time behavior $P(a,b|x,y)$ admits no LHV decomposition.
Hence $\TNR(\rho_A,\mathrm{id},\{M\})>0$.
\end{theorem}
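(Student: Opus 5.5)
The plan is to show that the behavior \emph{signals in time}, and then use the fact that every LHV behavior of the form \eqref{eq:LHV} is no-signaling. Summing \eqref{eq:LHV} over $a$ and using $\sum_a D(a|x,\mu)=1$ gives Bob's marginal
\begin{equation*}
P(b|x,y)=\sum_{\mu,\nu}p(\mu,\nu)\,D(b|y,\nu),
\end{equation*}
which does not depend on Alice's setting $x$. It therefore suffices to exhibit two settings $x=0,1$ and one Bob setting $y$ for which $\sum_a P(a,b|0,y)\neq\sum_a P(a,b|1,y)$ for some $b$. This is exactly a violation of the NSIT condition \eqref{eq:NSIT}, made concrete through the back-action mechanism described before Proposition~\ref{prop:TNRstatebound}.

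I will choose the settings as follows. Diagonalize $\rho_A=\sum_j\lambda_j|e_j\rangle\langle e_j|$, and let setting $x=0$ be the eigenbasis $\{|e_j\rangle\}$. Let $x=1$ be the Fourier basis of it, $|f_k\rangle=d^{-1/2}\sum_j\omega^{jk}|e_j\rangle$. This basis is mutually unbiased to the first and exists for every $d$, so the hypothesis ``admitting MUBs'' is automatically met. With the identity channel, Bob's outcome-averaged state after Alice's measurement is the dephased state $\Delta_x(\rho_A)=\sum_a\Pi_{a|x}\rho_A\Pi_{a|x}$. For $x=0$ this is $\rho_A$ itself. For $x=1$, mutual unbiasedness gives $\langle f_k|\rho_A|f_k\rangle=\frac1d\sum_j\lambda_j=\frac1d$, so $\Delta_1(\rho_A)=\openone/d$.

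Let Bob measure in the eigenbasis, $y=0$. His marginals are then $P(b|0,0)=\lambda_b$ and $P(b|1,0)=1/d$. Because $\rho_A\neq\openone/d$, some $\lambda_b\neq 1/d$, so the marginals differ and no LHV decomposition exists.

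To pass to $\TNR>0$, I will argue from the definition \eqref{eq:TNR}. If $\TNR=0$, then $\beta=0$ is feasible, which forces $P=R\in\LHV$, a contradiction. The remaining point is that the infimum is attained, so it cannot be an infimum of $0$ approached but never reached. The LHV set is a polytope, so \eqref{eq:TNR_SDP} is a finite linear program with a closed feasible set and an attained optimum. A quantitative route is also available: mixing $P$ with weight $\beta/(1+\beta)$ of any behavior changes the marginal gap $\lvert\lambda_b-1/d\rvert$ by at most $\beta/(1+\beta)$. This gives the explicit bound $\TNR\geq c/(1-c)$ with $c=\max_b\lvert\lambda_b-1/d\rvert$, possibly up to a factor $\tfrac12$ that I would fix when checking the details.

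The only delicate step is the last one, the strict positivity of the optimum as opposed to mere infeasibility at $\beta=0$. I would settle it via the marginal-gap bound, since that argument also yields an explicit lower bound.
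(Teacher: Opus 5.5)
Your argument is correct and is essentially the paper's proof. You use the same eigenbasis-plus-Fourier settings and reach the same contradiction: an LHV model would force Bob's $y=0$ marginal to be both $\lambda_b$ (from $x=0$) and $1/d$ (from $x=1$), which you phrase as LHV $\Rightarrow$ no-signaling in time. Your attainment remark via the finite linear program \eqref{eq:TNR_SDP} also supplies a step the paper only asserts.

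There is one slip in your optional quantitative aside. The noise behavior's own marginal gap can have the opposite sign, so the mixing argument gives $\TNR\geq c$, not $c/(1-c)$. Summing the per-$b$ constraints over $b$ even gives $\TNR\geq\tfrac12\|\rho_A-\openone/d\|_1$. Your $c/(1-c)$ equals $d-1$ for any pure input, which would exceed the universal cap $(d-1)/d$ of Lemma~\ref{lem:cap}.
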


\begin{proof}
Since $\rho_A\neq\openone/d$, $\rho_A$ has at least one eigenvalue
distinct from $1/d$. In its eigenbasis $\{|a\rangle\}$ the diagonal
entries $p_a=\langle a|\rho_A|a\rangle$ are non-uniform. Take Alice's
settings $x\in\{0,1\}$ to be the computational basis of
$\{|a\rangle\}$ and its Fourier MUB
$|+_k\rangle=(1/\sqrt d)\sum_j\omega^{kj}|j\rangle$ with
$\omega=e^{2\pi i/d}$, with identical settings for Bob. With identity
channel and projective measurements $M_{a|x}=\Pi_{a|x}$, the
two-time Born rule \eqref{eq:born} evaluates to
\begin{equation*}
P(a,b|x,y) =
\begin{cases}
p_a\,\delta_{a,b} & x=y=0,\\
p_a/d & x=0,\;y=1,\\
1/d^2 & x=1,\;y=0,\\
\delta_{a,b}/d & x=y=1.
\end{cases}
\end{equation*}
Suppose an LHV exists,
$P(a,b|x,y)=\sum_\lambda p(\lambda) D_A(a|x,\lambda)D_B(b|y,\lambda)$,
with deterministic responses $\alpha,\beta$. From
$P(a,b|0,0)=p_a\delta_{a,b}$ the joint $(\alpha(0),\beta(0))$ is
supported on the diagonal with marginal $p_a$, so $\beta(0)$ has
marginal $p_a$. From $P(a,b|1,0)=1/d^2$ the joint $(\alpha(1),\beta(0))$
is independent and uniform, so $\beta(0)$ has marginal $1/d$. Since
$\{p_a\}$ is non-uniform, the two requirements contradict each other,
and no LHV exists. The optimization defining TNR therefore has
strictly positive optimum.
\end{proof}

Combined with Proposition~\ref{prop:TNRstatebound}, this yields the
central result of the paper:
\begin{corollary}\label{cor:tnr_iff}
For any two-MUB projective measurement scheme
$\{M_{Z},M_{F}\}$ comprising the eigenbasis of $\rho_A$ and its
Fourier MUB (or any unitarily-equivalent pair, see
Proposition~\ref{prop:unitary_invariance} below),
\begin{equation}\label{eq:cor_iff}
\TNR(\rho_A,\mathcal{E};\{M_{Z},M_{F}\})=0 \;\;\iff\;\;
\rho_A=\openone/d,
\end{equation}
where the sufficiency direction (``$\Leftarrow$'') is proved
analytically for the depolarizing and phase-damping channels
(Propositions~\ref{prop:TNRstatebound},~\ref{prop:lhv_reduction})
and for the cascade amplitude-damping channel
(Propositions~\ref{prop:TNRstatebound},~\ref{prop:product_lhv}).
The necessity direction (``$\Rightarrow$'') holds for the identity
channel (Theorem~\ref{thm:necessity}) and for the standard decay
families (finite decay, $e^{-\kappa t}>0$), whose back-action
$\mathcal{E}(\rho_A-\openone/d)$ is visible to the two adapted MUBs; it
is \emph{not} universal over all injective channels (Remark below).
More broadly, the sufficiency holds for any channel with diagonal
action in the computational basis
(Proposition~\ref{prop:product_lhv}).
\end{corollary}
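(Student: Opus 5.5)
The proof splits into the two directions of \eqref{eq:cor_iff}, and for each I would invoke the results already stated rather than redo them. For ``$\Leftarrow$'' I set $\rho_A=\openone/d$. Proposition~\ref{prop:TNRstatebound} reduces the behavior to the factorized form \eqref{eq:TNRHSM}, so vanishing $\TNR$ follows once an LHV decomposition of that behavior is exhibited. For the depolarizing and phase-damping channels this is supplied by Proposition~\ref{prop:lhv_reduction}. For the cascade amplitude-damping channel, and more generally any channel with diagonal action, I would take the product model of Proposition~\ref{prop:product_lhv}. The one thing to check is that the cascade channel has real Kraus operators in the computational basis, which the paper defers to the Supplementary Material. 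An LHV behavior is feasible in \eqref{eq:TNR} with $\beta=0$, so $\TNR=0$. Any unitarily equivalent pair of bases is handled by Proposition~\ref{prop:unitary_invariance}: conjugating state, channel and measurements by the same unitary leaves the behavior, and hence $\TNR$, unchanged.

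For ``$\Rightarrow$'' I would argue through signaling in time rather than by repeating the marginal-contradiction argument of Theorem~\ref{thm:necessity}. For the identity channel that theorem already gives the claim. For a general channel, write $\rho_A=\sum_a p_a|a\rangle\langle a|$ in the $Z$ basis.
\begin{itemize}
\item If Alice measures $Z$, the state is undisturbed and Bob receives $\mathcal{E}(\rho_A)$.
\item If Alice measures $F$, then $\langle +_k|\rho_A|+_k\rangle=1/d$ for every $k$, because $\rho_A$ is diagonal. The post-measurement state is therefore $\openone/d$, and Bob receives $\mathcal{E}(\openone/d)$.
\end{itemize}
Bob's reduced state therefore depends on $x$ through $\mathcal{E}(\rho_A-\openone/d)$. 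Any LHV model \eqref{eq:LHV} has Bob's marginal $\sum_a P(a,b|x,y)$ independent of $x$. Hence, if $\mathcal{E}(\rho_A-\openone/d)$ has a nonzero diagonal in the $Z$ or $F$ basis, the behavior is not local. The local set is a closed polytope, so a behavior outside it needs a strictly positive admixture, and $\TNR>0$ (this is the same closing step as in Theorem~\ref{thm:necessity}).

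It then remains to show, family by family at finite decay, that $\Delta:=\mathcal{E}(\rho_A-\openone/d)$ is visible to Bob's $Z$ measurement. The input $\rho_A-\openone/d$ is diagonal, traceless and nonzero.
\begin{itemize}
\item Depolarizing channel: $\Delta=e(\rho_A-\openone/d)\neq0$ for $e>0$, and it is diagonal.
\item Phase damping: the channel fixes diagonal operators, so $\Delta=\rho_A-\openone/d$.
\item Cascade amplitude damping: diagonal action makes $\Delta$ diagonal. The restriction of the channel to diagonals is a column-stochastic matrix that is triangular in the level ordering (population only flows downward), with diagonal entries $e^{-\kappa_j t}>0$. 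It is therefore invertible, so $\Delta\neq0$, and $\Delta$ is read off by Bob's $Z$ outcomes.
\end{itemize}

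I expect this last triangularity/invertibility step for amplitude damping to be the main obstacle, because it depends on the explicit cascade Kraus form. I would first try to verify it directly from the real lower-triangular Kraus structure recorded in the Supplementary Material.

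Finally, the statement only claims necessity for these families. I would close by noting why this cannot be extended to all injective channels: a unitary that rotates $\Delta$ into an operator with vanishing diagonals in both Bob bases evades the argument, which is the content of the Remark.
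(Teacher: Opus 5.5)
Your overall route is the paper's. Sufficiency comes from citing Propositions~\ref{prop:TNRstatebound}, \ref{prop:lhv_reduction} and~\ref{prop:product_lhv}. Necessity comes from the back-action signal: the dephased states are $\rho_A$ versus $\openone/d$, as in the proof of Theorem~\ref{thm:nsit_state} and the Remark. Your closing step is correct: LHV behaviors have $x$-independent Bob marginals, and the local polytope is closed. The paper, however, only \emph{asserts} that $\mathcal{E}(\rho_A-\openone/d)$ is visible to the adapted MUBs for the decay families. Your family-by-family check is the part you add, and it has a scope gap.

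\textbf{The gap.} In the corollary, $Z$ is the eigenbasis of an \emph{arbitrary} $\rho_A\neq\openone/d$, while phase damping and amplitude damping are defined in the fixed computational basis. Two of your steps silently identify these bases:
\begin{itemize}
\item ``the channel fixes diagonal operators, so $\Delta=\rho_A-\openone/d$'';
\item ``diagonal action makes $\Delta$ diagonal \dots\ read off by Bob's $Z$ outcomes''.
\end{itemize}
They therefore establish necessity only for inputs diagonal in the computational basis. For a generic input, e.g.\ a Haar-random pure state as in the paper's adapted sweeps, $\rho_A-\openone/d$ is not computational-diagonal. Phase damping does not fix it, and $\Delta$ is not diagonal in Bob's $Z$ basis. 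Invertibility of $\mathcal{E}$ alone cannot close the step: the Remark's unitary is invertible yet hides the back-action from both adapted bases. Nor can you rotate back to the aligned case with Proposition~\ref{prop:unitary_invariance}, because that conjugation turns the channel into dephasing or damping in a rotated basis.

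\textbf{Phase damping: repairable.} Write $\mathcal{E}_P=e\,\mathrm{id}+(1-e)\mathcal{D}_c$, with $\mathcal{D}_c$ the computational pinching and $e=e^{-\kappa t}$. Write $D=\rho_A-\openone/d=\sum_a d_a|a\rangle\langle a|$ in the eigenbasis of $\rho_A$. Then
\[
\sum_a d_a\,\langle a|\mathcal{E}_P(D)|a\rangle=\tr\bigl[D\,\mathcal{E}_P(D)\bigr]=e\,\tr D^2+(1-e)\sum_j D_{jj}^2>0\qquad(e>0),
\]
where $D_{jj}$ are the computational-basis diagonal entries of $D$. So $\Delta$ has a nonzero $Z$-diagonal for every $\rho_A\neq\openone/d$, and your signalling argument goes through.

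\textbf{Depolarizing: already fine.} Your argument for this channel is basis-free.

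\textbf{Amplitude damping: not covered.} No such positivity argument is available here. The map is not self-adjoint, and $\tr[D\,\mathcal{E}(D)]$ is indefinite at strong damping. For the qutrit map written out in the Supplementary Material, $D=\mathrm{diag}(1,-1,0)$ at $e^{-\kappa t}=0.1$ gives $-0.07$. Misaligned amplitude-damping inputs therefore need a genuinely new argument; the paper itself offers only the assertion and numerical evidence there. Otherwise you should state your amplitude-damping necessity claim only for inputs diagonal in the computational basis.
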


\noindent\emph{Remark (the channel is not fully dispensable).}
Equation~\eqref{eq:cor_iff} is established for the standard noise
families, where the input's departure from $\openone/d$ leaves a
trace---the back-action $\mathcal{E}(\rho_A-\openone/d)$ has a nonzero
diagonal in the eigenbasis or the Fourier MUB, so Bob's marginal
signals on Alice's setting and $\TNR>0$. It is \emph{not} universal:
for an adversarial unitary $\mathcal{E}=U(\cdot)U^\dagger$ with $U$
chosen so that $\mathcal{E}(\rho_A-\openone/d)$ is off-diagonal in
\emph{both} adapted bases, the back-action is invisible to the fixed
two-MUB scheme and $\TNR=0$ even for $\rho_A\neq\openone/d$ (numerically,
$\TNR<10^{-10}$ at $d=3$ for $\rho_A=\mathrm{diag}(0.5,0.3,0.2)$). The
input's resource is not lost: adapting the late measurement to the
channel output restores $\TNR=0.17$ for the same data. The limitation
is therefore one of the \emph{fixed} input-adapted scheme, not of
state-boundness, and mirrors the known scenario-dependence of temporal
quantum correlations \cite{Wojcik2025,Chatterjee2025}.

\begin{proposition}[Unitary invariance of the
TNR-zero region]\label{prop:unitary_invariance}
Let $U\in\mathrm{SU}(d)$ commute with $\rho_A$. If
$\TNR(\rho_A,\mathcal{E};\{M_{Z},M_{F}\})=0$ for the canonical
two-MUB scheme of Corollary~\ref{cor:tnr_iff}, then
$\TNR(\rho_A,U\mathcal{E}\,U^\dagger;\{UM_{Z}U^\dagger,UM_{F}U^\dagger\})=0$
as well.
\end{proposition}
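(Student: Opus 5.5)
The plan is to show that the conjugated configuration produces \emph{exactly the same} behavior $P(a,b|x,y)$ as the original one, so that the TNR optimization in \eqref{eq:TNR_SDP} is literally the same problem. I read $U\mathcal{E}U^\dagger$ as the conjugated channel $\sigma\mapsto U\,\mathcal{E}(U^\dagger\sigma U)\,U^\dagger$. The conjugated projectors are $\Pi'_{a|x}=U\Pi_{a|x}U^\dagger$ and $M'_{b|y}=UM_{b|y}U^\dagger$.

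First I would use $[U,\rho_A]=0$ to write $\rho_A=U\rho_AU^\dagger$. The post-measurement input then becomes
\[
\Pi'_{a|x}\rho_A\Pi'_{a|x}=U\,\Pi_{a|x}\,U^\dagger\rho_A U\,\Pi_{a|x}\,U^\dagger=U\bigl(\Pi_{a|x}\rho_A\Pi_{a|x}\bigr)U^\dagger .
\]
Applying the conjugated channel strips the outer $U$:
\[
U\,\mathcal{E}\bigl(\Pi_{a|x}\rho_A\Pi_{a|x}\bigr)U^\dagger .
\]
Inserting this into the Born rule \eqref{eq:born} with $M'_{b|y}$ and using cyclicity of the trace gives
\[
P'(a,b|x,y)=\tr\bigl[M_{b|y}\,\mathcal{E}(\Pi_{a|x}\rho_A\Pi_{a|x})\bigr]=P(a,b|x,y),
\]
for every $a,b,x,y$.

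Since the TNR in \eqref{eq:TNR} and \eqref{eq:TNR_SDP} depends only on the behavior $P$, through its constraints and normalization, and on the fixed sets of deterministic response functions, its feasible sets and objectives coincide. The zero value therefore transfers, and in fact the full TNR value is invariant. In particular, an LHV decomposition of the original behavior (for example the product LHV of Proposition~\ref{prop:product_lhv}) serves unchanged, with the same $p(\lambda)$ and responses relabelled by the same outcome indices.

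The only delicate point is the interpretation of the notation $U\mathcal{E}U^\dagger$. If it were read as post-composition only, $\sigma\mapsto U\mathcal{E}(\sigma)U^\dagger$, the cancellation on the input side would fail unless $\mathcal{E}$ commutes with conjugation by $U$. I therefore plan to state the two-sided conjugation explicitly as the meaning of the notation. I would also remark that the hypothesis $[U,\rho_A]=0$ is exactly what keeps the input state fixed. Without it one would instead obtain invariance for the pair $(U\rho_AU^\dagger, U\mathcal{E}U^\dagger)$. Beyond this, the argument is routine bookkeeping.
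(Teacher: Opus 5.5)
Your proof is correct and follows the paper's own argument. Both use $[U,\rho_A]=0$ and cyclicity of the trace to show that the conjugated configuration yields the identical behavior $P'(a,b|x,y)=P(a,b|x,y)$, so any LHV decomposition, and indeed the full TNR optimum, carries over unchanged. Your explicit reading of $U\mathcal{E}U^\dagger$ as the two-sided conjugation $\sigma\mapsto U\mathcal{E}(U^\dagger\sigma U)U^\dagger$ is exactly what the paper's one-line computation implicitly relies on, so stating it is a useful clarification rather than a different route.
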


\noindent Corollary~\ref{cor:tnr_iff} therefore holds for every measurement
scheme in the unitarily-equivalent orbit
$\{UM_{Z}U^\dagger, UM_{F}U^\dagger\}_{U\in\mathrm{SU}(d),
[U,\rho_A]=0}$ of the canonical two-MUB pair, against the
correspondingly conjugated channel. The sufficiency direction
uses two complementary LHV constructions: the
correlated--uncorrelated decomposition of
Proposition~\ref{prop:lhv_reduction} for the unital channels
(depolarizing and phase damping), and the product-form
decomposition of Proposition~\ref{prop:product_lhv} for the
cascade amplitude-damping channel (and, more broadly, any
diagonal-action channel). The necessity direction
(Theorem~\ref{thm:necessity}) holds for any injective $\mathcal{E}$,
so the iff~\eqref{eq:cor_iff} extends to any injective channel
for which either LHV construction applies. The only channels
not covered are those that create computational-basis coherence
from diagonal inputs, i.e.\ channels for which
$\mathcal{E}(|j\rangle\langle j|)$ has nonzero off-diagonal
elements; extending the iff to such channels is an open target.

The same canonical two-MUB scheme that drives the LHV decomposition
also yields a behavior-level equivalence with the NSIT condition:

\begin{theorem}[NSIT--state-boundness equivalence]\label{thm:nsit_state}
For any quantum channel $\mathcal{E}$ that is injective on the affine
span of trace-one Hermitian operators (in particular for the standard
channel families and for any non-state-collapsing CPTP map), and any
two-MUB projective measurement scheme $\{M_Z, M_F\}$ comprising the
eigenbasis of $\rho_A$ and its Fourier MUB,
\begin{equation}\label{eq:nsit_iff}
\mathcal{V}_{\rm NSIT}(\rho_A,\mathcal{E}) = 0
\;\Longleftrightarrow\;
\rho_A = \openone/d.
\end{equation}
\end{theorem}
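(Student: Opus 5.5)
The plan is to reduce both directions to a statement about Alice's outcome-averaged post-measurement states before the channel, and then pull the channel through using linearity and injectivity. Write $\Delta_x(\rho_A):=\sum_a\Pi_{a|x}\rho_A\Pi_{a|x}$ for the pinching (dephasing) of $\rho_A$ in basis $x$. By linearity of $\mathcal{E}$, Bob's marginal under setting $x$ is $\sum_a\tilde\rho_{a|x}=\mathcal{E}(\Delta_x(\rho_A))$. The no-intervention marginal $\tr_A(E_{B|A}\rho_A^T)$ is, by the Choi--Jamio\l{}kowski identity, simply $\mathcal{E}(\rho_A)$. Hence
\begin{equation*}
\mathcal{V}_{\rm NSIT}=\max_{x\in\{Z,F\}}\bigl\|\mathcal{E}\bigl(\Delta_x(\rho_A)-\rho_A\bigr)\bigr\|_1 .
\end{equation*}

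\emph{Sufficiency ($\Leftarrow$).} If $\rho_A=\openone/d$, every pinching fixes it, $\Delta_x(\openone/d)=\openone/d$. Every term in the maximum therefore vanishes, for any channel, injective or not.

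\emph{Necessity ($\Rightarrow$).} Suppose $\mathcal{V}_{\rm NSIT}=0$. Then $\mathcal{E}(\Delta_x(\rho_A))=\mathcal{E}(\rho_A)$ for both settings. Both $\Delta_x(\rho_A)$ and $\rho_A$ are trace-one Hermitian operators, so they lie in the affine span on which $\mathcal{E}$ is assumed injective. Injectivity gives $\Delta_x(\rho_A)=\rho_A$ for $x=Z$ and $x=F$.

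For $x=Z$ this is automatic, since $Z$ is the eigenbasis of $\rho_A$. The content lies in $x=F$. Writing $\rho_A=\sum_j p_j|j\rangle\langle j|$, a direct computation gives
\begin{equation*}
\langle +_k|\rho_A|+_k\rangle=\tfrac1d\sum_j p_j=\tfrac1d \quad\text{for every } k,
\end{equation*}
so $\Delta_F(\rho_A)=\openone/d$. The condition $\Delta_F(\rho_A)=\rho_A$ therefore forces $\rho_A=\openone/d$.

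The main obstacle is making the injectivity hypothesis do exactly the work needed. Only injectivity on differences of trace-one operators, i.e.\ on traceless Hermitian operators, is used, because $\Delta_F(\rho_A)-\rho_A=\openone/d-\rho_A$ is traceless. I would state this explicitly and check that the standard families meet it at finite decay. Depolarizing with $e>0$ acts as $e\cdot\mathrm{id}$ on traceless operators. Phase damping with $e>0$ rescales off-diagonals and fixes diagonals. The cascade amplitude damping needs an explicit check that its Kraus action is invertible on the traceless subspace for finite $t$.

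This dependence on injectivity is also the gap that the Remark after Corollary~\ref{cor:tnr_iff} flags: there the witness is the full trace norm, not just the diagonals in the two adapted bases. So $\mathcal{V}_{\rm NSIT}$ detects the back-action even when a fixed measurement scheme does not. Finally, I would note that the dephasing example in the physical picture is not injective and lies outside the theorem, although there $\mathcal{V}_{\rm NSIT}$ still happens to be nonzero.
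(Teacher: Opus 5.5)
Your proof is correct and is essentially the paper's argument: pinching in the eigenbasis returns $\rho_A$, pinching in the Fourier MUB returns $\openone/d$ because every $\langle +_k|\rho_A|+_k\rangle=1/d$, and injectivity on the trace-one affine span separates $\mathcal{E}(\rho_A)$ from $\mathcal{E}(\openone/d)$. The paper argues the necessity direction contrapositively and compares the $x=Z$ and $x=F$ marginals with each other rather than with the no-intervention marginal, which amounts to the same thing since the $Z$ marginal is $\mathcal{E}(\rho_A)$; working directly with the trace norm, as you do, also avoids the paper's aside that Bob's measured two-MUB statistics must differ for some $(b,y)$, which fails for the adversarial unitary of the Remark and is not needed for $\mathcal{V}_{\rm NSIT}>0$.
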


Combining Theorem~\ref{thm:nsit_state} with
Corollary~\ref{cor:tnr_iff} gives a three-way equivalence:

\begin{corollary}\label{cor:three_way}
For any channel $\mathcal{E}$ in the standard family and the canonical
two-MUB scheme,
\begin{equation}
\mathcal{V}_{\rm NSIT}(\rho_A,\mathcal{E})=0
\;\Leftrightarrow\; \rho_A=\openone/d
\;\Leftrightarrow\; \TNR(\rho_A,\mathcal{E})=0.
\end{equation}
\end{corollary}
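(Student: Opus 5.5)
The three-way equivalence follows by chaining two results already established, both phrased for the same canonical two-MUB scheme (the eigenbasis $Z$ of $\rho_A$ and its Fourier MUB $F$). First, Theorem~\ref{thm:nsit_state} gives $\mathcal{V}_{\rm NSIT}(\rho_A,\mathcal{E})=0\Leftrightarrow\rho_A=\openone/d$. Its only hypothesis is injectivity of $\mathcal{E}$ on the affine span of trace-one Hermitian operators. Second, Corollary~\ref{cor:tnr_iff} gives $\TNR(\rho_A,\mathcal{E})=0\Leftrightarrow\rho_A=\openone/d$ for the standard families. Transitivity through the middle condition $\rho_A=\openone/d$ then closes the loop.

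The work is mainly to check that the hypotheses of both results hold on the same domain: the depolarizing, phase-damping and cascade amplitude-damping channels at finite decay ($e^{-\kappa t}>0$).

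Injectivity I would verify channel by channel.
\begin{itemize}
\item The depolarizing map with $e>0$ is invertible on traceless operators, since it multiplies them by $e$.
\item Phase damping rescales off-diagonal entries by a strictly positive factor and fixes diagonals.
\item Amplitude damping at finite $t$ has a triangular action on matrix units whose diagonal factors are powers of $e^{-t/2}>0$.
\end{itemize}
In each case $\mathcal{E}(X)=0$ for traceless Hermitian $X$ forces $X=0$. This gives injectivity on the affine span, so Theorem~\ref{thm:nsit_state} applies.

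For Corollary~\ref{cor:tnr_iff} on the same channels, the two directions are covered as follows.
\begin{itemize}
\item Sufficiency is supplied by Propositions~\ref{prop:lhv_reduction} and~\ref{prop:product_lhv}.
\item Necessity, beyond the identity channel, needs the back-action $\mathcal{E}(\rho_A-\openone/d)$ to have a nonzero diagonal in the $Z$ or $F$ basis.
\end{itemize}
I would give a direct argument for necessity rather than rely on the phrasing of the corollary. The argument mirrors the proof of Theorem~\ref{thm:necessity}. Under a local model, Bob's $y$-marginal cannot depend on $x$. So it suffices to show $\sum_a P(a,b|Z,y)\neq\sum_a P(a,b|F,y)$ for some $y,b$.

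Setting $x=Z$ leaves $\rho_A$ invariant, because $Z$ is its eigenbasis. Setting $x=F$ turns $\rho_A$ into $\openone/d$, because the Fourier basis is unbiased to $Z$. The difference of Bob's marginals is therefore $\mathcal{E}(\rho_A-\openone/d)$ read out in basis $y$. For the standard families this operator is diagonal in $Z$ up to an outward rescaling: all three channels map diagonal operators to diagonal ones, and depolarizing and phase damping scale a traceless diagonal by $e$ or leave it unchanged. Its $Z$-diagonal is therefore nonzero whenever $\rho_A\neq\openone/d$. The same signaling inequality also shows directly that $\mathcal{V}_{\rm NSIT}>0$, which gives a self-contained check of the first equivalence.

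The one genuine obstacle is amplitude damping. Its non-unital action might in principle map the nonzero traceless diagonal $\rho_A-\openone/d$ to an operator whose diagonal vanishes. I would handle this through the invertible triangular form: on diagonals, the cascade acts as a lower-triangular stochastic matrix with nonzero diagonal entries at finite $t$. This matrix is invertible, so it sends a nonzero traceless vector to a nonzero vector. Note that $\mathcal{E}$ is expressed in the computational basis here, which must coincide with the eigenbasis of $\rho_A$. I would flag that the corollary tacitly assumes this alignment for the amplitude-damping case, with Proposition~\ref{prop:unitary_invariance} covering the rotated orbit.
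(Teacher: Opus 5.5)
Your route matches the paper's. Its proof of Corollary~\ref{cor:three_way} is just the combination of Theorem~\ref{thm:nsit_state} with Corollary~\ref{cor:tnr_iff}. Your injectivity checks and your direct signaling argument for necessity are correct, and go further than the paper writes out. In that argument, an LHV model forces Bob's marginal to be $x$-independent, while the $x=Z$ and $x=F$ marginals differ by the $Z$-diagonal of $\mathcal{E}(\rho_A-\openone/d)$.

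The closing appeal to Proposition~\ref{prop:unitary_invariance} does not work. That proposition requires $[U,\rho_A]=0$ and conjugates the channel together with the measurements. It therefore never rotates $\rho_A$ against a \emph{fixed} standard channel. An input $\rho_A=V\Lambda V^\dagger$ with non-diagonal $V$ is equivalent only to the aligned $\Lambda$ under the conjugated channel $V^\dagger\mathcal{E}(V\cdot V^\dagger)V$. In general that is no longer a standard channel; only the depolarizing channel is invariant under all such conjugations. Your phase-damping step also uses the alignment, not only the amplitude-damping one: ``phase damping leaves a traceless diagonal unchanged'' refers to the computational basis.

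For phase damping (and depolarizing) you can remove the alignment outright. Write $X=\rho_A-\openone/d=\sum_k x_k|z_k\rangle\langle z_k|$ in the eigenbasis of $\rho_A$, so that $\sum_k x_k\langle z_k|\mathcal{E}(X)|z_k\rangle=\tr[X\mathcal{E}(X)]$. With $\mathrm{diag}$ the computational-basis diagonal part, $\mathcal{E}_{\rm phase}(X)=eX+(1-e)\,\mathrm{diag}(X)$ gives $\tr[X\mathcal{E}(X)]=e\tr X^2+(1-e)\sum_j X_{jj}^2>0$ for $e>0$. Hence some $\langle z_k|\mathcal{E}(X)|z_k\rangle\neq0$, and Bob's $Z$-marginal signals in every orientation. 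The $\mathcal{V}_{\rm NSIT}$ equivalence is unaffected in any case, since injectivity is basis-free.

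The positivity trick does not extend to amplitude damping. Its quadratic form $\tr[X\mathcal{E}(X)]$ is indefinite at strong damping: for the qutrit matrix $\mathcal{E}_A$ of the Supplementary Material, $X=|0\rangle\langle0|-|1\rangle\langle1|$ gives $e(3e-1)$ with $e=e^{-\kappa t}$. So for non-aligned inputs under amplitude damping, neither your argument nor the paper supplies a proof. The paper's Corollary~\ref{cor:tnr_iff} only asserts that the back-action is visible to the adapted MUBs. You should state your result for aligned inputs in that case, rather than claim the rest via Proposition~\ref{prop:unitary_invariance}.
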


\noindent\emph{Remark (the scheme must be adapted to $\rho_A$, or
tomographically complete).} The two-MUB scheme above is
\emph{adapted}: its computational basis is the eigenbasis of
$\rho_A$. This is essential, not cosmetic. The necessity direction
works because in the eigenbasis Alice's marginal equals the spectrum
of $\rho_A$, which is non-uniform precisely when
$\rho_A\neq\openone/d$; equivalently the eigenbasis setting reproduces
the no-intervention marginal, so $\mathcal{V}_{\NSIT}$
[equation~\eqref{eq:Vnsit}] registers the full violation. For a
\emph{fixed} scheme not adapted to $\rho_A$, neither holds in general.
At $d=3$, for the fixed computational$+$Fourier pair the states
$\rho_A=\openone/3+M$ with $M=M^\dagger$ of vanishing computational
\emph{and} Fourier-basis diagonals form a four-parameter family
(two-parameter for any three fixed MUBs): each such
$\rho_A\neq\openone/3$ has uniform marginals in both fixed bases,
hence fixed-scheme $\TNR=0$ and vanishing \emph{pairwise} signaling,
although the no-intervention monitor~\eqref{eq:Vnsit} and the
$\rho_A$-adapted $\TNR$ remain strictly positive (e.g.\ $\TNR=0.18$
for a representative member). The family collapses to
$\{\openone/3\}$ only for a tomographically complete set of $d+1$
MUBs. The equivalence~\eqref{eq:nsit_iff} and
Corollary~\ref{cor:tnr_iff} therefore require the measurement to be
adapted to $\rho_A$ (as throughout this work and in the protocol of
section~\ref{sec:ditit}) or, for a single fixed scheme,
tomographically complete; this also explains why the redefined
no-intervention $\mathcal{V}_{\NSIT}$, rather than the pairwise
quantity, is the faithful witness.

This equivalence is borne out by our $\rho_A$-adapted sweeps: across
the $d=2,3,4,5$ Monte-Carlo ensembles the only zeros of
$\mathcal{V}_{\rm NSIT}$ and of $\TNR$ are the configurations at
$\rho_A=\openone/d$ (away from the $e\to0$ channel limit). Because the
measurement is adapted to the eigenbasis of $\rho_A$, this directly
tests the iff, with no reliance on avoiding the fixed-scheme zero
family of the Remark. Both quantities vanish
\emph{continuously} as $\rho_A\to\openone/d$---numerically linearly
along radial rays (figure~\ref{fig:tnr_zero}(c))---so there is no
finite separation gap; an apparent floor in a coarse sample is a
sampling artifact (section~\ref{sec:numerics}). The equivalence
elevates NSIT from a consistency requirement to a behavior-level test
that is operationally simpler than computing $\TNR$.

\begin{figure*}[t]
\includegraphics[width=\linewidth]{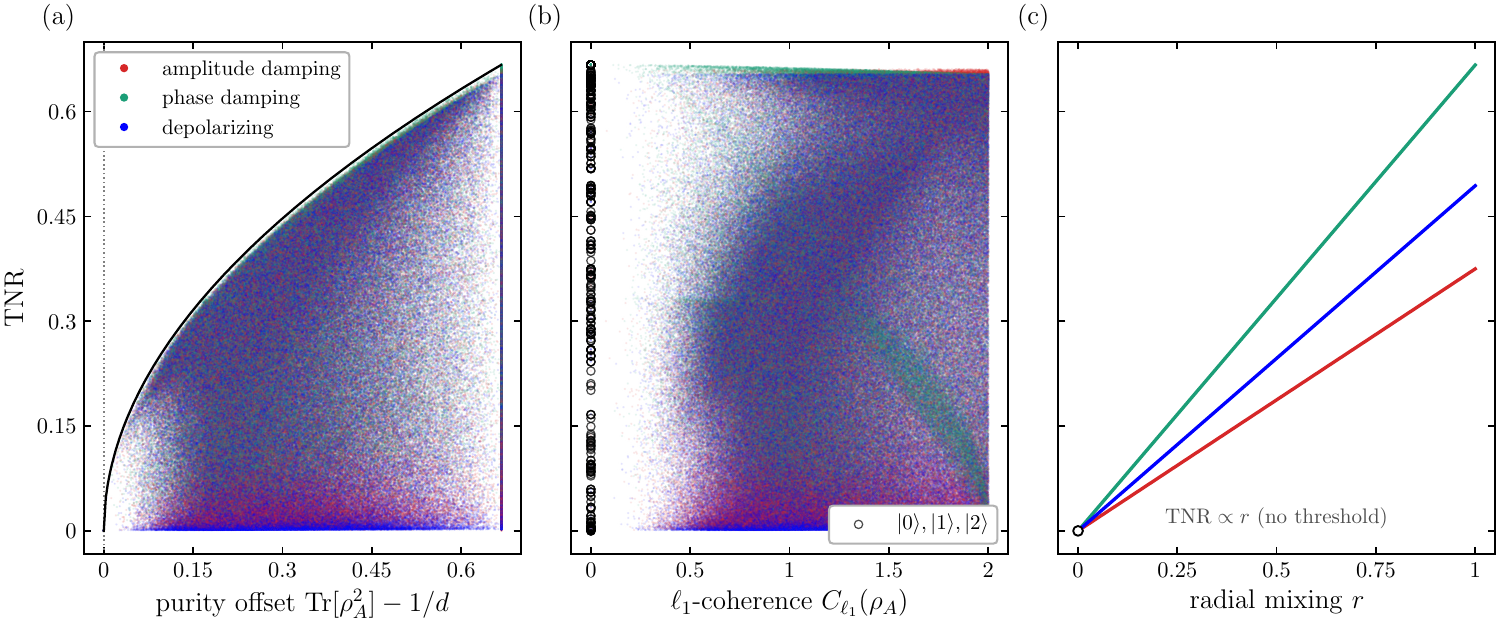}
\caption{The temporal-Bell resource is non-maximal mixedness, not
coherence (qutrit ensemble, $d=3$; $\rho_A$-adapted Monte-Carlo sweep
of $10^6$ configurations with random states, channels, and
continuous times). (a)~TNR against the purity
offset $\mathrm{tr}(\rho_A^2)-1/d$, with the analytic envelope
$\sqrt{\tfrac{d-1}{d}(\mathrm{tr}\rho_A^2-1/d)}$ (black) which the
adapted cloud now \emph{saturates} across the range:
every TNR$=0$ instance corresponds
exactly to $\rho_A=\openone/3$, and TNR$>0$ otherwise---the TNR-zero
region is exactly the singleton $\{\openone/3\}$ in the qutrit Bloch
ball (Corollary~\ref{cor:tnr_iff}), the headline result of this
paper; the same holds at $d=5$ (section~\ref{sec:numD5}). The solid
black curve is the analytic upper envelope
$\mathrm{TNR}=\sqrt{\tfrac{d-1}{d}\,[\mathrm{tr}(\rho_A^2)-1/d]}$,
saturated by phase damping along rays from $\openone/d$ to a
maximally-resourceful pure input (it follows from the radial law of
panel (c): $\mathrm{TNR}=c\,r$ with maximal slope $c=(d-1)/d$ and
offset $=\tfrac{d-1}{d}r^2$). (b)~TNR
versus $\ell_1$-coherence $C_{\ell_1}(\rho_A)=\sum_{i\neq j}|(\rho_A)_{ij}|$,
the sum of off-diagonal magnitudes of $\rho_A$ in the measurement
eigenbasis (a standard, basis-dependent coherence monotone). The
computational-basis states
$|0\rangle,|1\rangle,|2\rangle$ (rings) have zero coherence yet TNR up
to $(d-1)/d$, so coherence does not control the resource: the boundary
already \emph{reaches its maximum $(d-1)/d$ at zero coherence} and is
non-monotonic in coherence---the opposite of a coherence-monotone. (c)~Along
the radial ray $\rho_r=(1-r)\openone/d+r|0\rangle\langle0|$ from the
maximally mixed point, $\mathrm{TNR}$ is \emph{numerically exactly
linear} in $r$
for every channel (straight lines through the origin), so it vanishes
\emph{continuously}---$\mathrm{TNR}\propto r\propto
\sqrt{\mathrm{tr}(\rho_A^2)-1/d}$---with no finite threshold: the only
zero is $\rho_A=\openone/d$ itself. The apparent ``gap'' in a sparse
sweep is therefore a sampling artifact, not a real separation.}
\label{fig:tnr_zero}
\end{figure*}

The Leggett--Garg-type construction in
Theorem~\ref{thm:necessity} also shows that the back-action of
$\sqrt{M_{a|x}}$ alone supplies $\TNR>0$ without any non-trivial
channel structure. Together with Proposition~\ref{prop:TNRstatebound},
which closes off the channel-driven route on $\rho_A=\openone/d$, we
conclude:
\begin{itemize}
\item TNR \emph{requires} a non-maximally-mixed $\rho_A$ but not a
non-trivial $\mathcal{E}$;
\item TSR and TER can be supplied by either $\rho_A$ or $\mathcal{E}$
independently---both are jointly state--channel-bound without an
asymmetric prerequisite.
\end{itemize}
We therefore conclude that temporal nonlocality is \emph{state-bound
in an asymmetric sense}: the channel alone, however non-trivial,
cannot promote a behavior to Bell-nonlocality on a maximally mixed
input. The contrast with the spatial scenario---where nonlocality is a
property of the shared state alone---is illuminating: in the temporal
setting it is the structure of the input state, not of the
correlation-establishing channel, that carries the
device-independent resource.

This dichotomy has two immediate consequences. First, the closed-form
equality $\TSR=\TNR=(d-1)/d$ on $|0\rangle$+phase damping
(Proposition~\ref{prop:tsr_tnr_collapse}) reflects a regime in which the
state-bound TNR contribution to the behavior exhausts the entire
TSR resource: the channel-bound steering avenue is closed off by
NSIT violation, leaving only the input-state contribution, which is
the same on both sides of the hierarchy. Second, the
contextual-fraction connection \cite{Abramsky2017} sharpens:
deviations from a deterministic Alice-side response model require
non-uniform marginals $\{p_a\}$, which the maximally mixed input
forecloses by Proposition~\ref{prop:TNRstatebound}, while a
non-trivial channel does not by itself create such non-uniformity.

A central numerical observation is that whereas TER and TSR can be
nonzero on $\rho_A=\openone/3$, $\TNR$ vanishes there identically.
The contrast is sharp: on the empirical sweep
(figure~\ref{fig:tnr_zero}), the TNR-zero region is exactly the
singleton $\{\openone/3\}$, while TSR remains nonzero on a
substantial neighborhood of $\openone/3$ under amplitude or phase
damping. The TNR-positive region is therefore a strict subset of
the TSR-positive region, sharpening the universal lower hierarchy
of Theorem~\ref{thm:tsr_tnr}.

\section{The robustness hierarchy and the universal NSIT
bound}\label{sec:hier-frame}

The central result rests on, and is sharpened by, the relations among the
three robustnesses. We collect them here as the supporting framework for
the operational results of section~\ref{sec:ditit}.

\subsection{Hierarchy of two-time three-level correlations}
\label{sec:hierarchy}

The lower part of the temporal hierarchy generalizes cleanly to all
dimensions:
\begin{theorem}\label{thm:tsr_tnr}
For arbitrary initial state and channel, $\TSR(R)\geq\TNR(R)\geq 0$.
\end{theorem}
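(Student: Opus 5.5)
The plan is to prove $\TSR\geq\TNR$ by a feasibility-transfer argument. We take an optimal (or any feasible) solution of the steering SDP \eqref{eq:TSR_SDP} and push it through Bob's measurements. This yields a feasible point of the nonlocality SDP \eqref{eq:TNR_SDP} with the same objective value. Minimizing over the steering solutions then gives $\TNR\leq\TSR$. Nonnegativity $\TNR\geq0$ follows directly from the definition \eqref{eq:TNR}, since $\beta\geq0$ is a constraint, or equivalently from the SDP. Summing the constraint $\sum_{\mu\nu}\tilde r_{\mu\nu}D D\geq P$ over $a,b$ for fixed $(x,y)$ gives $\sum_{\mu\nu}\tilde r_{\mu\nu}\geq1$ when $P$ is normalized.

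Concretely, let $\{\tilde\rho_\lambda\}$ be feasible for \eqref{eq:TSR_SDP}, with $\lambda$ labelling Alice's deterministic strategies $D(a|x,\lambda)$. For each $\lambda$ write $\tilde\rho_\lambda=s_\lambda\sigma_\lambda$, where $s_\lambda=\tr\tilde\rho_\lambda\geq0$ and $\sigma_\lambda$ is a state. On Bob's side, the behavior $q_\lambda(b|y)=\tr[M_{b|y}\sigma_\lambda]$ is a product of marginal distributions over settings. It therefore admits a decomposition $q_\lambda(b|y)=\sum_\nu \pi_\lambda(\nu)D(b|y,\nu)$ into deterministic strategies; for instance, take $\pi_\lambda(\nu)=\prod_y q_\lambda(\nu_y|y)$ with $\nu=(\nu_y)_y$. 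Set $\tilde r_{\lambda\nu}=s_\lambda\pi_\lambda(\nu)\geq0$. Then
\begin{equation*}
\sum_{\lambda,\nu}\tilde r_{\lambda\nu}D(a|x,\lambda)D(b|y,\nu)=\tr\Bigl[M_{b|y}\sum_\lambda D(a|x,\lambda)\tilde\rho_\lambda\Bigr]\geq \tr[M_{b|y}\tilde\rho_{a|x}]=P(a,b|x,y).
\end{equation*}
Here the inequality uses positivity of $M_{b|y}$ together with the operator inequality in \eqref{eq:TSR_SDP}. The last equality is the Born rule \eqref{eq:born} for projective Alice measurements, since $\tilde\rho_{a|x}=\mathcal{E}(\Pi_{a|x}\rho_A\Pi_{a|x})$. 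So $\tilde r$ is feasible for \eqref{eq:TNR_SDP}.

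It remains to compare objectives, and this is where the bookkeeping is most delicate. The normalization of \eqref{eq:TNR_SDP} sums over all $x,y,a,b$. Each $(x,y)$ block of $\sum_{a,b}\sum_{\lambda\nu}\tilde r_{\lambda\nu}D D$ equals $\sum_{\lambda\nu}\tilde r_{\lambda\nu}=\sum_\lambda s_\lambda=\tr\sum_\lambda\tilde\rho_\lambda$, because the deterministic responses sum to one over outcomes. Likewise each block of $\sum_{a,b}P$ equals one. The ratio in \eqref{eq:TNR_SDP} is therefore $\tr\sum_\lambda\tilde\rho_\lambda$, and the TNR objective equals the TSR objective at this point. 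Taking the TSR optimum then yields $\TNR\leq\TSR$.

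I expect the main subtlety to be consistency of conventions rather than any deep step. First, the steering SDP uses Alice-side strategies $\lambda$, while the TNR SDP indexes hidden variables by pairs $(\mu,\nu)$; I identify $\mu$ with $\lambda$ and absorb Bob's response into $\nu$. Second, the assemblage is normalized to $\tr\sum_a\tilde\rho_{a|x}=1$ for every $x$, even when NSIT fails. This normalization holds because $\mathcal{E}$ is trace preserving and the $\Pi_{a|x}$ are complete. This is why the argument goes through for arbitrary $\rho_A$ and $\mathcal{E}$ and needs no NSIT assumption, in contrast with the upper inequality.
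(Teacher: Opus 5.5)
Your proof is correct and is essentially the paper's argument. Both push a hidden-state model for the assemblage through Bob's measurements $M_{b|y}$ to get a local-hidden-variable model at the same weight, which gives $\TNR\leq\TSR$. The only difference is presentation: you work in the over-cover SDP formulation and spell out the product decomposition of Bob's response into deterministic strategies and the per-$(x,y)$ normalization check, whereas the paper's mixture-based proof leaves these steps implicit.
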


\noindent The inequality is confirmed numerically across all
$\rho_A$-adapted configurations in our sweep
(section~\ref{sec:numerics}): $\TSR-\TNR\geq 0$ in every instance
within solver tolerance.

The upper part requires more care. The qubit hierarchy proof of
Ku \emph{et al.}\ \cite{PhysRevA.98.022104} requires the relevant TER to quantify
robustness with respect to \emph{separability} of $R_{AB}$, not merely
its positivity. For qubits the Peres--Horodecki criterion
\cite{PhysRevLett.77.1413,HORODECKI19961} collapses positivity, PPT,
and separability, and the qubit hierarchy is automatic.\footnote{For
random sweeps at $d\geq 3$, where PPT is necessary but not
sufficient for separability (bound entanglement), we use the linear
correlation-matrix criterion of Sarbicki \emph{et al.}
\cite{PhysRevA.101.012341} as a complementary separability witness.
On the maximally mixed input $\rho_A=\openone/3$ under the three
standard channels the Sarbicki bound is satisfied whenever the
new-PDO of equation~\eqref{eq:new_PDO} is positive semidefinite, so the
separability-versus-positivity gap of $d\geq 3$ does not manifest
in practice for these channel families. The PDO of
equation~\eqref{eq:new_PDO} reduces to the standard qubit-Pauli PDO at
$d=2$.} For $d\geq 3$ these three notions are inequivalent. To formalize the proper
generalization we distinguish three quantifiers:
\begin{itemize}
\item $\TER^{\rm sep}$, equivalently $\TER^{\rm Choi}:=\mathrm{ER}(\Lambda_{\mathcal{E}})$,
the robustness with respect to admixture making the channel Choi
state separable---the one physically justified
temporal-entanglement quantifier.
\item $\TER^{\rm PPT}$, the SDP-computable \emph{proxy} obtained by
replacing separability with the Peres--Horodecki PPT condition; a
one-sided lower bound on $\TER^{\rm sep}$.
\item $\TER^{\PDO}$, the SDP \eqref{eq:TER2} quantifying
non-positivity of the new-PDO; a \emph{proxy} with no independent
physical justification, the PDO being generally non-positive (not a
quantum state).
\end{itemize}
For qubits all three coincide. For $d\geq 3$ they are distinct in
general, with $\TER^{\rm sep}\geq\TER^{\rm PPT}$ (separability is
strictly stronger than PPT in $d\geq 3$ due to bound-entangled states)
and no general ordering between $\TER^{\rm sep}$ and $\TER^{\PDO}$.
The proper qudit analogue of the qubit hierarchy is:
\begin{theorem}\label{thm:hierarchy_strict}
On $\rho_A=\openone/d$ and any quantum channel,
$\TER^{\rm sep}\geq\TSR\geq\TNR\geq 0$. The same chain holds with
$\TER^{\rm PPT}$ in place of $\TER^{\rm sep}$ whenever the Choi state
is on a manifold for which PPT is necessary and sufficient for
separability---e.g., the isotropic states reached by the depolarizing
channel.
\end{theorem}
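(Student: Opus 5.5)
The chain has three links. The lower link $\TSR\geq\TNR\geq0$ is Theorem~\ref{thm:tsr_tnr}, which holds for any input and channel, so it holds in particular on $\rho_A=\openone/d$. The work is the upper link $\TER^{\rm sep}\geq\TSR$. I would copy the spatial argument of Ku \emph{et al.}~\cite{PhysRevA.98.022104}. The key structural fact is that on the maximally mixed input the temporal assemblage is a spatial assemblage of the Choi state. With projective $\Pi_{a|x}$,
\begin{equation*}
\tilde\rho_{a|x}=\mathcal{E}\!\bigl(\Pi_{a|x}\tfrac{\openone}{d}\Pi_{a|x}\bigr)=\tfrac1d\mathcal{E}(\Pi_{a|x})=\tr_A\!\bigl[(\Pi_{a|x}^T\otimes\openone)\Lambda_{\mathcal{E}}\bigr].
\end{equation*}
So the temporal assemblage is exactly what Alice steers from $\Lambda_{\mathcal{E}}$ by measuring $\{\Pi^T_{a|x}\}$. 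This family is again projective and complete. NSIT holds automatically here, since $\sum_a\tilde\rho_{a|x}=\mathcal{E}(\openone)/d$ for every $x$, which is the Choi marginal. This is the point where $\rho_A=\openone/d$ is used.

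Next I would take an optimal separable decomposition for $\TER^{\rm sep}=\mathrm{ER}(\Lambda_{\mathcal{E}})=\gamma$. That is, there is a state $\tau\geq0$ and a separable $\sigma=\sum_k q_k\alpha_k\otimes\beta_k$ with $\Lambda_{\mathcal{E}}+\gamma\tau=(1+\gamma)\sigma$. Apply the map $X\mapsto\tr_A[(\Pi^T_{a|x}\otimes\openone)X]$, which is linear and positive, to this identity. This gives
\begin{equation*}
\tilde\rho_{a|x}+\gamma\,\tau_{a|x}=(1+\gamma)\sum_k q_k\,\tr(\Pi^T_{a|x}\alpha_k)\,\beta_k .
\end{equation*}
Here $\tau_{a|x}\geq0$ is the assemblage steered from $\tau$, a valid noise assemblage. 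The right-hand side is a local-hidden-state assemblage, with hidden variable $k$ and response $p(a|x,k)=\tr(\Pi^T_{a|x}\alpha_k)$. The response is stochastic, so I would absorb it into deterministic $D(a|x,\lambda)$ by the usual convex decomposition of the response functions.

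Define $\tilde\rho_\lambda$ as $(1+\gamma)$ times the resulting hidden states. Then $\sum_\lambda D(a|x,\lambda)\tilde\rho_\lambda=\tilde\rho_{a|x}+\gamma\tau_{a|x}\geq\tilde\rho_{a|x}$, which is a feasible point of the SDP~\eqref{eq:TSR_SDP}. Its cost is $\tr\sum_\lambda\tilde\rho_\lambda-1=(1+\gamma)-1=\gamma$, because $\tr\sigma=1$. Hence $\TSR\leq\gamma=\TER^{\rm sep}$.

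I expect the main obstacle to be normalization. The TSR cost in~\eqref{eq:TSR_SDP} subtracts $1$, so it presumes $\tr\sum_a\tilde\rho_{a|x}=1$. This does hold, because $\Lambda_{\mathcal{E}}$ has unit trace and $\tr\tilde\rho_{a|x}$ sums to $1$. One also has to check that the noise $\gamma\tau_{a|x}$ is a genuine assemblage with the correct no-signaling marginal $\gamma\tr_A\tau$. It is, since it is steered from a bona fide state. Off $\openone/d$ the identification of the assemblage with a Choi-state assemblage breaks down, because the factor $\Pi\rho_A\Pi$ is not linear in $\Pi$ alone. That explains why the statement is restricted to this input.

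For the PPT clause I would argue on the Choi states of the depolarizing channel, which are isotropic. There PPT is equivalent to separability. One also needs the optimal admixture in $\mathrm{ER}$ to stay within that manifold, which I would get by twirling with $U\otimes\bar U$. The twirl maps separable states to separable states and fixes $\Lambda_{\mathcal{E}}$, so twirling the optimal decomposition keeps it inside the isotropic family. On that family $\TER^{\rm PPT}=\TER^{\rm sep}$, and the chain transfers.
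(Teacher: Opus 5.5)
Your proof is correct and is essentially the paper's argument. The paper rewrites the separable decomposition of $\Lambda_{\mathcal{E}}$ at weight $\TER^{\rm sep}$ in measure-and-prepare form $\sum_k\tr(F_k\,\cdot)\,\omega_k$. On $\openone/d$ it then reads off a hidden-state model with $x$-independent weights $q(k)=\tr F_k/d$, which is exactly your steering-map computation with $F_k=d\,q_k\alpha_k^T$ and $\omega_k=\beta_k$. Your phrasing also avoids the paper's tacit assumption that the admixed noise is itself the Choi state of a channel.

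There is one slip in the PPT clause. To get $\TER^{\rm PPT}=\TER^{\rm sep}$ on isotropic Choi states, you must twirl the \emph{PPT}-optimal decomposition. The property you need is therefore that the $U\otimes\bar U$ twirl preserves PPT, not that it preserves separability. It does, being a mixture of local unitaries, so the twirled mixture is an isotropic PPT state and hence separable.
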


We confirm this hierarchy numerically. On every sampled
$\rho_A=\openone/3$ point (15 configurations: 3 channels $\times$ 5
times), $\TER^{\rm PPT}\geq\TSR\geq\TNR\geq 0$ holds with strict
inequality wherever any of the three are nonzero
(section~\ref{sec:numHier}). The break observed off-$\openone/d$
(initially pure NSIT-violating states) is \emph{genuine}, not a
surrogate artifact: on the depolarizing/isotropic Choi states PPT is
necessary and sufficient for separability, so $\TER^{\rm sep}=\TER^{\rm PPT}$
there, and the observed $\TER^{\rm PPT}<\TSR$ is a true violation of
$\TER^{\rm sep}\geq\TSR$. The hierarchy holds exactly on the NSIT set
$\rho_A=\openone/d$ and its breakdown off that set is controlled by
the NSIT violation, not by the PPT/separability gap
(section~\ref{sec:nsit-free}).

\paragraph*{The new-PDO is the partial transpose of the Choi state.}
A by-product of our analysis is that the non-contextual Wigner-basis
PDO of equation~\eqref{eq:new_PDO} and the Choi--Jamio\l{}kowski state
are partial-transpose related rather than independent. Using
$\sum_i K_i\otimes K_i=d\,\mathrm{SWAP}$ and
$\mathrm{SWAP}=d\,(|\Phi^+\rangle\langle\Phi^+|)^{T_B}$, the identity
channel gives
\begin{equation}\label{eq:Rident}
R^{\PDO}_{\rm id}=\frac{1}{d^2}\sum_i K_i\otimes K_i
=\tfrac{1}{d}\,\mathrm{SWAP}=(|\Phi^+\rangle\langle\Phi^+|)^{T_B},
\end{equation}
the partial transpose of the rank-one maximally entangled Choi state,
with eigenvalues $\{+1/d\,(\times\tfrac{d(d+1)}{2}),\,
-1/d\,(\times\tfrac{d(d-1)}{2})\}$ (indefinite; $\{+\tfrac13\times6,
-\tfrac13\times3\}$ at $d=3$). Hence $\TER^{\rm Choi\text{-}pos}$
vanishes identically (Choi states are CP-positive) and is
uninformative, while $\TER^{\PDO}$ measures partial-transpose
negativity. This indefiniteness is present at every $d$, including
$d=2$; the genuinely dimension-sensitive feature is not the operators'
distinctness but the gap between PPT and separability (exact only at
$2\times2$ and $2\times3$), so the proper hierarchy quantifier is
$\TER^{\rm sep}$---lower-bounded by the SDP-computable
$\TER^{\rm PPT}$---on the Choi.
The asymmetric state-boundness of TNR established in
section~\ref{sec:TNRcoherence} is \emph{independent} of this discussion:
TSR and TNR are computed from the assemblage and behavior, neither
of which involves any PDO. Because projective back-action makes Bob's
marginal depend on Alice's setting, the behaviors are signaling in
time; $\TNR>0$ here certifies a no-signaling-in-time (NSIT) violation
rather than a genuine no-signaling Bell nonlocality.

\subsection{Hierarchy breaking off $\openone/d$: NSIT-conditionality of
the upper bound}\label{sec:NSITbreaks}

Pure initial states violate NSIT for at least one measurement basis,
and the violation has two distinct consequences. First, off
$\openone/d$ the genuine $\TER^{\rm sep}$ falls below $\TSR$ on
$\sim$22\,\% of sampled (state, time) configurations
(section~\ref{sec:numHier}), exclusively on NSIT-violating inputs---
exactly so on the depolarizing/isotropic Choi, where
$\TER^{\rm sep}=\TER^{\rm PPT}$. The same break appears in the PDO
proxy ($\TER^{\PDO}<\TSR$), where the new-PDO can be positive
semidefinite while the assemblage remains steerable. The upper hierarchy holds only on the
NSIT set $\rho_A=\openone/d$ (Theorem~\ref{thm:hierarchy_strict}); off
it the breakdown is governed by the NSIT violation, because $\TSR$
then carries a signaling contribution the Choi entanglement cannot
bound (section~\ref{sec:nsit-free}).

Second, on the precise (state, channel) pair $|0\rangle$ + phase
damping the gap between $\TSR$ and $\TNR$ \emph{closes} to numerical
zero. The closing is exact in $d$ and admits the closed-form

\begin{proposition}[$\TSR=\TNR=(d-1)/d$ on $|0\rangle$+phase damping]
\label{prop:tsr_tnr_collapse}
For $\rho_A=|0\rangle\langle 0|$ subjected to phase damping with
visibility $e=e^{-t}\in(0,1]$, two MUB measurements (computational and
Fourier of the eigenbasis of $\rho_A$) give the analytic upper bound
$\TNR\leq\TSR\leq(d-1)/d$ for every $e\in(0,1]$; the value is
saturated, $\TSR=\TNR=(d-1)/d$ independent of $e$, as confirmed by the
SDP optimum at $d=3,5$. (The bound is not tight at $e=0$, where the
fully dephased assemblage is unsteerable and $\TSR=0$, so the constant
$(d-1)/d$ holds on the open interval $e\in(0,1]$ with a discontinuity
at $e\to0^+$.)
\end{proposition}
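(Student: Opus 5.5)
The plan is to compute the behavior and the assemblage explicitly, prove the upper bound through the steering SDP, and then argue saturation by a dual (witness) feasible point that uses the signaling in Bob's marginal.

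\textbf{Step 1: the objects.} With $\rho_A=|0\rangle\langle0|$ and phase damping of visibility $e$ (off-diagonals multiplied by $e$), the assemblage is:
\begin{itemize}
\item for $x=Z$: $\tilde\rho_{0|Z}=|0\rangle\langle0|$ and $\tilde\rho_{a|Z}=0$ for $a\neq0$;
\item for $x=F$: $\tilde\rho_{k|F}=\tfrac1d\,\mathcal{E}(|+_k\rangle\langle+_k|)=\tfrac1d\bigl[e|+_k\rangle\langle+_k|+(1-e)\openone/d\bigr]$.
\end{itemize}
The behavior then has the following blocks:
\begin{itemize}
\item $P(a,b|Z,Z)=\delta_{a0}\delta_{b0}$;
\item $P(a,b|Z,F)=\delta_{a0}/d$;
\item $P(a,b|F,Z)=1/d^2$;
\item $P(a,b|F,F)=\tfrac1d\bigl[e\,\delta_{ab}+(1-e)/d\bigr]$.
\end{itemize}

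\textbf{Step 2: upper bound.} By Theorem~\ref{thm:tsr_tnr} it suffices to exhibit a feasible point of \eqref{eq:TSR_SDP} with cost $(d-1)/d$. Deterministic strategies are labeled $\lambda=(\alpha_Z,\alpha_F)$. I would support $\tilde\rho_\lambda$ only on $\lambda=(0,k)$ and set $\tilde\rho_{(0,k)}=\tilde\rho_{k|F}+c_k|0\rangle\langle0|$. The $F$-constraints $\sum_{\alpha_Z}\tilde\rho_{(\alpha_Z,k)}\geq\tilde\rho_{k|F}$ then hold automatically. The $Z$-constraint for $a=0$ requires $\sum_k\tilde\rho_{(0,k)}=\mathcal{E}(\openone)/d+\sum_k c_k|0\rangle\langle0|\geq|0\rangle\langle0|$. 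Since $\mathcal{E}(\openone)=\openone$, this holds once $\sum_kc_k=(d-1)/d$; the constraints for $a\neq0$ are trivial. The total trace is $1+(d-1)/d$, giving $\TSR\leq(d-1)/d$ for every $e$, including $e=0$.

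\textbf{Step 3: lower bound on $\TNR$.} I would use the signaling argument of Theorem~\ref{thm:necessity}, made quantitative. Let $R=(P+\beta Q)/(1+\beta)$ be local. In $R$, Bob's $Z$-marginal must agree between $x=Z$ and $x=F$. In $P$ these marginals are $\delta_{b0}$ and the uniform distribution $1/d$, which differ in total variation by $(d-1)/d$. The noise $Q$, being an arbitrary behavior, can itself signal, so the plain marginal argument alone does not bound $\beta$. Instead I would take the dual of \eqref{eq:TNR_SDP}: find $w(a,b,x,y)$ with $\sum_{\mu\nu}wDD\leq \mathrm{const}$ for every deterministic pair, normalized so that the minimal cost is at least $\sum wP-1$.

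A candidate witness rewards the following events:
\begin{itemize}
\item $(a=0,b=0)$ at $(Z,Z)$;
\item $b\neq0$ at $(F,Z)$;
\item a term at $(F,F)$, and also at $(Z,F)$, that forces a local Bob with a fixed $\beta_Z$ to lose one of the first two events.
\end{itemize}
I would then verify that the witness value equals $(d-1)/d$, independently of $e$, because it uses only the $Z$-output blocks, which do not depend on $e$.

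\textbf{Main obstacle.} The $e$-independence suggests the witness needs only the $(\cdot,Z)$ blocks. With those blocks a single local strategy scores at most $1$, while $P$ scores $1+(d-1)/d$. The delicate step is checking that the normalization in \eqref{eq:TNR_SDP} (division by $\sum P=|x||y|$) converts this gap exactly into $\beta\geq(d-1)/d$ rather than a rescaled value. If the normalization is off, I would fall back on the stated SDP confirmation at $d=3,5$ for saturation and keep the analytic statement as the bound $\TNR\leq\TSR\leq(d-1)/d$ together with positivity from Theorem~\ref{thm:necessity}.

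\textbf{Discontinuity at $e\to0^+$.} At $e=0$ every $\tilde\rho_{k|F}=\openone/d^2$ and $\tilde\rho_{a|Z}$ is diagonal. A hidden-state model can then be written directly, as in Proposition~\ref{prop:product_lhv}, which gives $\TSR=0$ and explains the remark.
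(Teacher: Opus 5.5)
Your Step~2 is exactly the paper's hidden-state ansatz (the paper takes $c_k=(d-1)/d^2$), and it is valid for every $e\in[0,1]$. Step~3 goes beyond the paper. The paper has no dual certificate and rests saturation on the SDP optima at $d=3,5$. Your witness does supply a certificate, for all $d$, and the hedging is unnecessary.

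Put $w=1$ on $(a,b,x,y)=(0,0,Z,Z)$ and on every $(a,b,F,Z)$ with $b\neq0$, and $w=0$ elsewhere. A deterministic pair then scores $[\alpha_Z=0][\beta_Z=0]+[\beta_Z\neq0]\le1$, while $\sum wP=1+(d-1)/d$. The normalization in \eqref{eq:TNR_SDP} cancels exactly, because $\sum_{a,b}D(a|x,\mu)D(b|y,\nu)=1$ for each $(x,y)$. So the objective is just $\sum_{\mu\nu}\tilde r_{\mu\nu}-1$, and weak duality gives $\TNR\ge(d-1)/d$. No $(F,F)$ or $(Z,F)$ terms are needed.

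Your remark that the marginal argument cannot bound $\beta$ is also wrong. Since $Q\ge0$, $(1+\beta)R\ge P$ entrywise. The $x$-independent Bob $Z$-marginal $r_B$ of the local $R$ must therefore satisfy $(1+\beta)r_B(0)\ge1$ and $(1+\beta)r_B(b)\ge1/d$ for $b\neq0$. Summing gives $\beta\ge(d-1)/d$; this is the primal face of the same certificate. Together these turn the paper's numerical saturation into a proof.

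The genuine error is your last paragraph. At $e=0$, $\sum_a\tilde\rho_{a|Z}=|0\rangle\langle0|$ while $\sum_a\tilde\rho_{a|F}=\openone/d$. The assemblage therefore signals in time and admits no hidden-state model, because an HSM forces both marginals to equal $\sum_\lambda p(\lambda)\rho_\lambda$. Proposition~\ref{prop:product_lhv} concerns $\rho_A=\openone/d$ and does not transfer: its product construction here would give $P(a,b|F,Z)=\delta_{b0}/d$ instead of $1/d^2$.

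More decisively, your own Step~3 uses only the $e$-independent $(Z,Z)$ and $(F,Z)$ blocks, so it applies verbatim at $e=0$. Combined with Step~2, this gives $\TSR=\TNR=(d-1)/d$ on the closed interval $e\in[0,1]$. The parenthetical claim of a drop to $\TSR=0$ at $e=0$ is therefore false and cannot be proved. The paper's proof never argues it. It also conflicts with the paper's own back-action example in Section~\ref{sec:results} ($\TNR>0$ under complete dephasing) and with the tightness claim of Proposition~\ref{prop:univ_bound} at $\TER^{\rm sep}=0$. Drop that paragraph and state the result on $[0,1]$.
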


The upper-bound value $(d-1)/d$ is therefore analytic and not a
measure-zero coincidence of the $d=3$ specialization; it follows from
the algebraic structure of $|0\rangle$ on its eigenbasis and the
phase-damping dephasing, with saturation confirmed numerically. The closed-form value $(d-1)/d$ matches the
numerical sweep at $d=3$ (the Supplementary Material) and the high-precision
test at $d=5$.

\medskip\noindent\emph{No NSIT-free upper hierarchy.}\label{sec:nsit-free}
\enspace Unlike $\TSR\geq\TNR$ (Theorem~\ref{thm:tsr_tnr}), which is
state-independent, the upper inequality $\TER^{\rm sep}\geq\TSR$ is
not, and \emph{no} state-independent (``NSIT-free'') version exists.
The proof of Theorem~\ref{thm:hierarchy_strict} uses
$\rho_A=\openone/d$ in an essential way. Writing
$\TER^{\rm Choi}(\mathcal{E})=\mathrm{ER}(\Lambda_{\mathcal{E}})$ for
the entanglement robustness of the channel's Choi state, the
entanglement-breaking measure-and-prepare form of the separablised
channel yields, for the assemblage, a candidate model
$\sum_k\tr\!\big(F_k\,\Pi_{a|x}\rho_A\Pi_{a|x}\big)\,\omega_k$ whose
hidden-variable weight $q(k|x)=\tr\!\big(F_k\,D_x(\rho_A)\big)$, with
$D_x(\rho_A)=\sum_a\Pi_{a|x}\rho_A\Pi_{a|x}$, \emph{depends on Alice's
setting} $x$. A genuine local-hidden-state model must be
no-signaling---its weight independent of $x$---so the construction is
valid only when $D_x(\rho_A)$ is $x$-independent, i.e.\ under NSIT,
which for the canonical two-MUB scheme means $\rho_A=\openone/d$
(Theorem~\ref{thm:nsit_state}). Off $\openone/d$ the assemblage
marginal $\sum_a\tilde\rho_{a|x}=\mathcal{E}(D_x(\rho_A))$ signals;
$\TSR$ then carries a signaling contribution that the Choi
entanglement---blind to the input---cannot bound.

Consequently the upper inequality genuinely \emph{fails} off
$\openone/d$, and for \emph{every} TER variant alike. For the
depolarizing channel with visibility $e=e^{-t}<1/(d+1)$ the Choi state
is separable, so the exact $\TER^{\rm sep}=\TER^{\rm Choi}=0$, yet
$\TSR=e\,(d-1)/d>0$; for $|0\rangle$+phase damping at strong dephasing
$\TSR\to(d-1)/d$ while $\TER^{\rm Choi}\to0$. The upper hierarchy
$\TER^{\rm sep}\geq\TSR\geq\TNR$ thus holds exactly on the NSIT set
$\{\rho_A=\openone/d\}$ (Theorem~\ref{thm:hierarchy_strict};
Corollary~\ref{cor:tnr_iff}), and its breakdown off that set is
governed by the NSIT violation $\mathcal{V}_{\rm NSIT}$, not by the
choice of TER quantifier.

This breakdown is, moreover, quantitatively bounded: the signaling
excess in $\TSR$ is at most \emph{half} the NSIT violation, which
restores a single inequality valid for \emph{every} input.

\begin{proposition}[NSIT-corrected universal upper bound]
\label{prop:univ_bound}
For $\rho_A$-adapted measurements and any input, channel and time,
\begin{equation}\label{eq:univ_bound}
0\;\leq\;\TNR\;\leq\;\TSR\;\leq\;\TER^{\rm sep}+\tfrac12\,
\mathcal{V}_{\rm NSIT},
\end{equation}
with the coefficient $\tfrac12$ tight: it is saturated on
$|0\rangle$+phase damping, where $\TER^{\rm sep}\to0$,
$\TSR=(d-1)/d$ and $\mathcal{V}_{\rm NSIT}=2(d-1)/d$.
\end{proposition}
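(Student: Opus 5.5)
The first two inequalities are already available: $\TNR\geq0$ holds by definition, and $\TNR\leq\TSR$ is Theorem~\ref{thm:tsr_tnr}. What remains is the upper link $\TSR\leq\TER^{\rm sep}+\tfrac12\mathcal{V}_{\rm NSIT}$. I will prove it by constructing a feasible point of the steering SDP~\eqref{eq:TSR_SDP} whose cost is at most the right-hand side.

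The construction starts from the Choi separabilisation used in the proof of Theorem~\ref{thm:hierarchy_strict}. Let $\gamma=\TER^{\rm sep}$ and let $\Lambda_{\mathcal{E}}+\gamma\tau=(1+\gamma)\sigma_{\rm sep}$ be an optimal decomposition. The separable $\sigma_{\rm sep}$ defines an entanglement-breaking measure-and-prepare map $\mathcal{S}(\cdot)=\sum_k\tr(F_k\,\cdot)\,\omega_k$, and the admixed $\tau$ defines a CP map $\mathcal{T}$ (up to normalisation). These satisfy $\mathcal{E}+\gamma\mathcal{T}=(1+\gamma)\mathcal{S}$ as CP maps. Applying this identity to $\Pi_{a|x}\rho_A\Pi_{a|x}$ gives
\[
\tilde\rho_{a|x}\leq(1+\gamma)\textstyle\sum_k\tr(F_k\Pi_{a|x}\rho_A\Pi_{a|x})\,\omega_k .
\]
This is a hidden-state dominating model except that its weights $q(k,a|x)$ have an $x$-dependent marginal $q(k|x)=\tr(F_kD_x(\rho_A))$.

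The central step is to repair the $x$-dependence at a cost of $\tfrac12\mathcal{V}_{\rm NSIT}$. For the $\rho_A$-adapted scheme, $D_Z(\rho_A)=\rho_A$ is exactly the no-intervention input. I will therefore take $\bar q(k)=\tr(F_k\rho_A)$ as the reference marginal. For each $x$, I will write $q(\cdot|x)\leq\bar q+[q(\cdot|x)-\bar q]_+$, where the positive part has total weight $\tfrac12\|q(\cdot|x)-\bar q\|_1$. By the data-processing inequality for the POVM $\{F_k\}$, this weight is at most $\tfrac12\|D_x(\rho_A)-\rho_A\|_1$.

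I then need to move this bound to Bob's side, since $\mathcal{V}_{\rm NSIT}$ is defined through $\|\mathcal{E}(D_x\rho_A)-\mathcal{E}(\rho_A)\|_1$. Two routes are available:
\begin{itemize}
\item Bound the excess directly at the level of Bob's operators instead of the $k$-labels. One applies the dominance $\tilde\rho_{a|x}\leq\ldots$ and absorbs the signalling part of $\sum_a\tilde\rho_{a|x}$ relative to $\mathcal{E}(\rho_A)$ into extra deterministic hidden states. Their trace is $\tfrac12\|\sum_a\tilde\rho_{a|x}-\mathcal{E}(\rho_A)\|_1\leq\tfrac12\mathcal{V}_{\rm NSIT}$.
\item Use the canonical steering construction in which the deficit, having zero trace, splits into equal positive and negative parts.
\end{itemize}
In either route the $x$-independent part $\bar q$ is turned into a local hidden-state model with deterministic $D(a|x,\lambda)$, as in Theorem~\ref{thm:hierarchy_strict}, at cost $\gamma$. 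The positive signalling remainder adds at most $\tfrac12\mathcal{V}_{\rm NSIT}$. Summing the two gives the bound.

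Tightness on $|0\rangle$ plus phase damping is then a computation. Proposition~\ref{prop:tsr_tnr_collapse} gives $\TSR=(d-1)/d$. As $e\to0^+$ the Choi state becomes diagonal, hence separable, so $\TER^{\rm sep}\to0$. For $x=F$, Bob receives (nearly) $\openone/d$ against the no-intervention $|0\rangle\langle0|$, so the trace distance is $2(d-1)/d$ and $\tfrac12\mathcal{V}_{\rm NSIT}=(d-1)/d$. Because $\TSR$ jumps to $0$ exactly at $e=0$, this saturation holds only in the limit $e\to0^+$, and I will state it that way.

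The main obstacle is the repair step. Maximising over $x$ separately and then combining into a single no-signalling hidden-state ensemble could cost a sum over settings rather than a maximum, since each $x$ may need its own positive part. I would first try to show that with two settings, where $Z$ reproduces the reference marginal exactly, only the $F$ setting contributes. Then one positive part suffices, and its trace equals $\tfrac12$ times the $F$-term of $\mathcal{V}_{\rm NSIT}$.
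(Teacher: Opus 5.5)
\paragraph*{Overall.} There is a genuine gap. Your first stage reproduces the paper's construction, but it only proves the paper's weaker, rigorous bound. The sharp post-channel form needs a step you leave unargued, and the paper itself explicitly rests that step on numerics.

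\paragraph*{What your first stage actually proves.} The first stage is the paper's own construction: the separabilised measure-and-prepare map, the dominance $\tilde\rho_{a|x}\preceq(1+\gamma)\sum_k w_{a|x,k}\,\omega_k$, and label weights $\bar q(k)+[q(k|F)-\bar q(k)]_+=\max_x q(k|x)$. Your ``main obstacle'' is not an obstacle: the pointwise maximum over $x$ is already a single $x$-independent ensemble, and only $F$ contributes. Bookkept correctly, however, this proves only
\[
\TSR\;\le\;\TER^{\rm sep}+\tfrac12\,(1+\TER^{\rm sep})\,\bigl\|\rho_A-\openone/d\bigr\|_1 .
\]
This is exactly what the paper establishes rigorously. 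You have dropped the factor $1+\gamma$. The over-cover is built for the \emph{dominating} assemblage, so the signalling remainder costs $(1+\gamma)\cdot\tfrac12\sum_k|\tr(F_k\Delta)|$, with $\Delta=D_Z(\rho_A)-D_F(\rho_A)$; it does not simply add $\tfrac12\mathcal{V}_{\rm NSIT}$.

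The bookkeeping also presumes $\{F_k\}$ is a POVM, i.e.\ that the optimal admixed $\tau$ has maximally mixed input marginal. For a general robustness decomposition this fails:
\begin{itemize}
\item $\sum_kF_k=d\,(\sigma_{\rm sep})_A^{T}\neq\openone$;
\item the label marginals $q(\cdot|x)$ then have $x$-dependent totals;
\item even the $x$-independent part costs $\gamma\,d\,\tr(\tau_A^{T}\rho_A)$, which can exceed $\gamma$.
\end{itemize}

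\paragraph*{The gap: moving the bound to Bob's side.} This step is the whole content of the sharp inequality, and neither route supplies it.

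Data processing runs the wrong way: $\mathcal{V}_{\rm NSIT}=\|\mathcal{E}(\Delta)\|_1\le\|\Delta\|_1$. Inside any $k$-labelled over-cover, the repair cost $\tfrac12\sum_k|\tr(F_k\Delta)|$ depends on the decomposition and can strictly exceed $\tfrac12\mathcal{V}_{\rm NSIT}$. For example, take the qubit depolarizing channel at $e=1/3$, which is entanglement-breaking with $\TER^{\rm sep}=0$. Write it through the SIC with Bloch vectors $(\pm1,\pm1,\pm1)/\sqrt3$ (even number of minus signs), $F_k=\tfrac12|\phi_k\rangle\langle\phi_k|$, $\omega_k=|\phi_k\rangle\langle\phi_k|$. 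The input $|0\rangle$ then gives label cost $1/(2\sqrt3)$, whereas $\TSR=\tfrac12\mathcal{V}_{\rm NSIT}=1/6$.

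Route~1 does not repair this, for three reasons:
\begin{itemize}
\item The model you actually have is indexed by $k$ with $x$-dependent weights, and padding Bob's outcome-summed marginals leaves that dependence in place.
\item The dominating assemblage signals by $\mathcal{E}(\Delta)+\gamma\mathcal{T}(\Delta)$, not by $\mathcal{E}(\Delta)$.
\item The constraints of \eqref{eq:TSR_SDP} are per outcome, so you would still need a local-hidden-state cover of the padded assemblage at cost $\gamma$, and nothing supplies it.
\end{itemize}
Route~2 names the traceless-splitting identity without saying which operator is split or how it re-enters a hidden-state model.

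The paper does not close this step either. Its proof states that dropping the prefactor and replacing $\|\Delta\|_1$ by $\mathcal{V}_{\rm NSIT}$ is the one step resting on numerics rather than on a construction. A useful hint: for pure inputs, the Bob-side repair in Lemma~\ref{lem:cap} already gives $\TSR\le\tfrac12\mathcal{V}_{\rm NSIT}$ with no entanglement term. The open difficulty is merging that repair with the separabilisation for mixed $\rho_A$.

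\paragraph*{Tightness.} Your tightness argument need not lean on the numerically saturated Proposition~\ref{prop:tsr_tnr_collapse}, and there is no jump at $e=0$. Summing the constraints of \eqref{eq:TSR_SDP} over $a$ shows that $\sum_\lambda\tilde\rho_\lambda$ dominates both marginals. Hence $\TSR\ge\tfrac12\mathcal{V}_{\rm NSIT}$ always, which equals $(d-1)/d$ for phase damping at every $e\in[0,1]$. Together with Lemma~\ref{lem:cap}, equality $\TSR=(d-1)/d$ holds exactly, including at $e=0$, where $\TER^{\rm sep}=0$.
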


\begin{proof}[Proof sketch]
The lower chain $0\leq\TNR\leq\TSR$ is Theorem~\ref{thm:tsr_tnr}. For
the upper bound one builds an explicit local-hidden-state over-cover of
the steering SDP \eqref{eq:TSR_SDP} from the measure-and-prepare form
$\mathcal{E}_{\rm sep}(\cdot)=\sum_k\tr(F_k\,\cdot)\,\omega_k$ of the
entanglement-breaking channel that separabilises
$\Lambda_{\mathcal{E}}$ at weight $\TER^{\rm sep}$. Its cost is
controlled by the ``traceless $\Rightarrow$ half the trace-norm''
identity applied to $\Delta=D_Z(\rho_A)-D_F(\rho_A)$, giving the
rigorous pre-channel bound $\TSR\leq\TER^{\rm sep}
+\tfrac12(1+\TER^{\rm sep})\|\Delta\|_1$ with the constant $\tfrac12$
tight. The sharp post-channel form \eqref{eq:univ_bound} replaces
$\|\Delta\|_1$ by $\mathcal{V}_{\rm NSIT}=\|\mathcal{E}(\Delta)\|_1$
(channel contraction) and drops the prefactor; this last step is the
only one resting on the numerics, verified with zero violations to
machine precision (minimum slack $-3.9\times10^{-8}$) across the full
$\rho_A$-adapted sweeps at $d=3$ ($10^6$) and $d=5$
($5.6\times10^4$), and at $d=2,4$ (Table~\ref{tab:d3d5_compare};
figure~\ref{fig:hier_nsit_compare}). The full construction is given in
the Supplementary Material.
\end{proof}

\subsection{Multi-time generalization}\label{sec:multitime}

In an $n$-time scenario with initial state $\rho_{A_1}$ and sequential
channels $\mathcal{E}_{1,2},\ldots,\mathcal{E}_{n-1,n}$ between events at
$t_1<\cdots<t_n$, the joint behavior is
\begin{equation}\label{eq:multitime}
\begin{aligned}
P(\vec a|\vec x)={}&
\tr\bigl[M_{a_n|x_n}\mathcal{E}_{n-1,n}\bigl(\cdots\\
&\;\;\mathcal{E}_{1,2}
(M_{a_1|x_1}^{1/2}\rho_{A_1}M_{a_1|x_1}^{1/2})\cdots\bigr)\bigr],
\end{aligned}
\end{equation}
$\vec a=(a_1,\ldots,a_n)$, $\vec x=(x_1,\ldots,x_n)$; let
$\TNR^{(n)}(\rho_{A_1},\mathcal{E})$ be the minimal admixture making
$P(\vec a|\vec x)$ admit an LHV decomposition with a single shared hidden
variable.

\begin{theorem}[Multi-time state-boundness, necessity]
\label{thm:multitime}
For any $n\geq 2$, any sequence of channels, and projective measurements
at every time, $\rho_{A_1}\neq\openone/d$ implies
$\TNR^{(n)}(\rho_{A_1},\mathcal{E})>0$.
\end{theorem}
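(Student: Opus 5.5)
The plan is to reduce the $n$-time statement to a two-time signaling argument like the one in the proof of Theorem~\ref{thm:necessity}. The engine is the observation used there: any LHV behavior of the form \eqref{eq:LHV} is automatically no-signaling. The marginal of the outcome at a given time is independent of the settings chosen at every other time. So it suffices to exhibit projective settings for which some later marginal depends on the setting $x_1$ chosen at $t_1$. Then no single-hidden-variable decomposition of \eqref{eq:multitime} exists, and $\TNR^{(n)}>0$ by the same SDP argument as in the two-time case: a strictly infeasible zero-admixture point gives a strictly positive optimum of \eqref{eq:TNR_SDP}.

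\textbf{Step 1 (causal marginalisation).} I would sum \eqref{eq:multitime} over $a_3,\dots,a_n$. Because every later measurement is a complete projective measurement and every $\mathcal{E}_{k,k+1}$ is trace preserving, the sum collapses to the two-time behavior $P(a_1,a_2|x_1,x_2)$ of $(\rho_{A_1},\mathcal{E}_{1,2})$. This is independent of $x_3,\dots,x_n$. Applying the same marginalisation to an $n$-time LHV model yields a two-time LHV model. Hence it is enough to show that the two-time behavior at $(t_1,t_2)$ is nonlocal, which reduces the problem to $n=2$ with channel $\mathcal{E}_{1,2}$.

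\textbf{Step 2 (back-action signal).} Take $x_1\in\{Z,F\}$, with $Z$ the eigenbasis of $\rho_{A_1}$ and $F$ its Fourier MUB. Then $D_Z(\rho_{A_1})=\rho_{A_1}$ and $D_F(\rho_{A_1})=\openone/d$, since dephasing a $Z$-diagonal state in an unbiased basis yields the maximally mixed state. The state at $t_2$ therefore differs between the two settings by
\begin{equation*}
\Delta_2=\mathcal{E}_{1,2}\bigl(\rho_{A_1}-\openone/d\bigr),
\end{equation*}
a traceless Hermitian operator. Whenever $\Delta_2\neq0$, I would choose the $t_2$ setting to be an eigenbasis of $\Delta_2$. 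The $a_2$-marginals for $x_1=Z$ and $x_1=F$ then differ by the spectrum of $\Delta_2$, which is nonzero. This contradicts the $x_1$-independence forced by any LHV, exactly as the two incompatible marginals of $\beta(0)$ did in the proof of Theorem~\ref{thm:necessity}.

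\textbf{Main obstacle.} The hard step is guaranteeing $\Delta_2\neq0$. For injective $\mathcal{E}_{1,2}$ this holds immediately, since $\rho_{A_1}-\openone/d\neq0$; this covers the identity channel and the finite-decay standard families, as in Corollary~\ref{cor:tnr_iff}. For a state-collapsing channel (for example complete replacement), every later time is independent of $t_1$ and no signal can appear, so ``any sequence of channels'' must be read as injective, non-state-collapsing channels. I would state this hypothesis explicitly. I would then try to weaken it by first propagating the signal to the earliest time $t_k$ at which the composed map $\mathcal{E}_{k-1,k}\circ\cdots\circ\mathcal{E}_{1,2}$ does not annihilate $\rho_{A_1}-\openone/d$. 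This only requires the composite to be non-collapsing on that single direction, not full injectivity.
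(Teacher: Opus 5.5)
Your argument is correct, and it repairs a real gap in the paper's proof; the restriction you add is genuinely needed, because the statement as written is false.

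The paper's proof does your Step~1: it marginalises over $a_3,\dots,a_n$ and uses that LHV models are closed under marginalisation. It then simply cites Theorem~\ref{thm:necessity} for the pair $(t_1,t_2)$, with the fixed eigenbasis/Fourier settings at both times. That theorem, however, is proved only for the identity channel. For a general $\mathcal{E}_{1,2}$ the fixed read-out at $t_2$ can miss the back-action even when $\mathcal{E}_{1,2}$ is unitary; the paper's own Remark after Corollary~\ref{cor:tnr_iff} says exactly this. For collapsing channels the statement itself fails. If every $\mathcal{E}_{k,k+1}$ is completely depolarizing, then $P(\vec a|\vec x)=\tr(\Pi_{a_1|x_1}\rho_{A_1})\,d^{-(n-1)}$. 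This is a product of single-time distributions, hence LHV, for every $\rho_{A_1}$.

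Your Step~2 fixes the gap. Taking the $t_2$ basis to be an eigenbasis of $\Delta_2=\mathcal{E}_{1,2}(\rho_{A_1}-\openone/d)$ turns the back-action into a signalling witness. Any single-hidden-variable model is no-signalling, so this rules out LHV for every $\mathcal{E}_{1,2}$ with $\Delta_2\neq0$. The paper's citation of Theorem~\ref{thm:necessity} is shorter but does not establish this.

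Your closing refinement via the earliest $t_k$ is vacuous, however. All dependence of the post-$t_1$ statistics on $x_1$ passes through the single operator $\mathcal{E}_{1,2}(D_{x_1}(\rho_{A_1}))$. So if $\Delta_2=0$, no later marginal can depend on $x_1\in\{Z,F\}$. Also, the later propagation interleaves the dephasings of the intermediate measurements; it is not the bare composition $\mathcal{E}_{k-1,k}\circ\cdots\circ\mathcal{E}_{1,2}$.

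Within your argument, $\Delta_2\neq0$ is the sharp condition. A genuine relaxation would vary the $t_1$ bases instead. It would need only $\mathcal{E}_{1,2}\bigl(D_x(\rho_{A_1})-D_{x'}(\rho_{A_1})\bigr)\neq0$ for some pair of bases $x,x'$, with the $t_2$ basis again adapted to that difference.
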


The converse fails for $n\geq 3$: even on $\rho_{A_1}=\openone/d$ the
first measurement leaves the sub-normalized projector $\Pi_{a_1|x_1}/d$,
so the conditional state entering the next pair is generically not
$\openone/d$, and applying Theorem~\ref{thm:necessity} there yields a
non-LHV three-time behavior (Supplementary Material)---e.g.\
$\TNR^{(3)}(\openone/3,\mathrm{id},\mathcal{E}_{\rm phase})>0$ while
$\TNR^{(2)}(\openone/3,\mathcal{E})=0$. The clean
iff~(Corollary~\ref{cor:tnr_iff}) is thus \emph{intrinsically two-time};
for $n\geq3$ the natural analogue is the \emph{conjunctive} condition that
every conditional pre-measurement state equals $\openone/d$, preserved by
symmetric channels but not by amplitude or phase damping.

\section{Operational meaning and its fundamental
limit}\label{sec:ditit}

The asymmetric state-boundness of TNR has a sharp operational
meaning: TNR is the resource that controls
\emph{temporal teleportation}---the protocol that
delivers an unknown qudit state from $t_A$ to $t_B$ via a noisy
channel $\mathcal{E}$. We make the connection precise, then turn to its
fundamental limit---over-certification (section~\ref{sec:overcert}).
We stress at the outset that the protocol is device-independent in a
\emph{prepare-and-measure} sense: the test-round measurement bases are
prescribed (the two-MUB scheme tied to the certification state
$\rho_A$, not to the unknown state $|\psi\rangle$), and the
certification guarantees only the send-round fidelity conditioned on
passing the test---in contrast to spatial Bell-based DI, where the
measurement settings are free, untrusted parameters. The honest scope
of the guarantee is restated where it matters in
section~\ref{sec:discussion}.

A qubit-level, BB84-style picture of the protocol---a quantum one-time
pad sent through the time-channel and monitored by two-MUB test
rounds---is given in the Supplementary Material.

\paragraph*{Protocol DI-TIT (device-independent temporal teleportation).}
Alice at $t_A$ has access to a stream of $d$-dimensional qudit
inputs to the channel $\mathcal{E}$; Bob at $t_B$ has access to the
channel outputs. Alice holds an unknown qudit state $|\psi\rangle$
that she wishes to deliver to Bob with the highest possible
fidelity, using only untrusted measurement and preparation devices
and a public classical communication channel. Each round of DI-TIT
is one of two random branches.

\begin{enumerate}
\item[(T)] \emph{Test round} (probability $\gamma$). Alice
prepares the certification state $\rho_A$ on her qudit, samples
$x\in\{0,1\}$ uniformly, and projectively measures her qudit in the
basis $\{\Pi_{a|x}\}_{a=0}^{d-1}$, recording outcome $a$.
The post-measurement state propagates through $\mathcal{E}$ to
$t_B$. Bob samples $y\in\{0,1\}$ uniformly and projectively measures
in basis $\{M_{b|y}\}_{b=0}^{d-1}$, recording outcome $b$.

\item[(S)] \emph{Send round} (probability $1-\gamma$). Alice
prepares the unknown state $|\psi\rangle$ on her qudit, samples
$k=(k_1,k_2)\in\mathbb{Z}_d\times\mathbb{Z}_d$ uniformly at random,
and applies the Heisenberg--Weyl operator
$W_k=X^{k_1}Z^{k_2}$ (where $X|j\rangle=|j+1\bmod d\rangle$ and
$Z|j\rangle=\omega^j|j\rangle$ with $\omega=e^{2\pi i/d}$),
yielding the encoded qudit $W_k|\psi\rangle$. The encoded qudit
propagates through $\mathcal{E}$ to $t_B$. Alice publicly transmits
$k$ to Bob, who applies $W_k^\dagger$ to the channel output:
$\rho_{\rm out}^{(k)}=W_k^\dagger\,
\mathcal{E}(W_k|\psi\rangle\langle\psi|W_k^\dagger)\,W_k$.
\end{enumerate}

\noindent After $N$ rounds, Alice and Bob estimate the empirical
behavior $\hat P(a,b|x,y)$ from the
$\sim\gamma N$ test rounds and compute
$\hat T=\widehat{\TNR}(\hat P)$ by solving the SDP
\eqref{eq:TNR_SDP}. They accept the session if
$\hat T\geq T^*$ for a chosen threshold $T^*>0$, otherwise abort.
Conditioned on acceptance, the average state delivered to Bob over
the $\sim(1-\gamma) N$ send rounds is the
Heisenberg--Weyl-twirled output
\begin{equation}\label{eq:hwtwirl}
\bar\rho_{\rm out}=\frac{1}{d^2}\sum_{k\in\mathbb{Z}_d\times\mathbb{Z}_d}
W_k^\dagger\,\mathcal{E}(W_k|\psi\rangle\langle\psi|W_k^\dagger)\,W_k,
\end{equation}
which is the Heisenberg--Weyl-twirled (depolarizing-type) image
of $|\psi\rangle$ under $\mathcal{E}$. The Heisenberg--Weyl
twirling is the standard \emph{quantum one-time pad} construction
of Hayden \emph{et al.}~\cite{HaydenLeungShorWinter2004}: the
encoded ensemble $\{W_k|\psi\rangle\langle\psi|W_k^\dagger\}_k$ is
indistinguishable from $\openone/d$ to any party without the
shared key $k$, so the send-round transmission carries quantum
information only through the channel $\mathcal{E}$ itself.
Combined with the send-round Heisenberg--Weyl twirl, the protocol
delivers the depolarized channel image of $|\psi\rangle$
(Theorem~\ref{thm:ditit}); the test rounds certify $\TNR$, which
lower-bounds the twirl fidelity $\mathcal{F}_{\rm DI}$ on the standard
(twirl-covariant) noise families.

\paragraph*{Qubit-case verification.}
The protocol structurally parallels prepare-and-measure DI-QKD
\cite{Acin2007PRL} at $d=2$. Take
$W_k\in\{\openone,X,Y,Z\}$ (the four single-qubit Pauli operators
identified with the $d^2=4$ Heisenberg--Weyl elements
$X^{k_1}Z^{k_2}$ for $(k_1,k_2)\in\mathbb{Z}_2^2$); the test rounds
use the two MUBs $\{|0\rangle,|1\rangle\}$ (the $Z$ eigenbasis) and
$\{|+\rangle,|-\rangle\}$ (the $X$ eigenbasis), and the verification
witness is the temporal Bell-violation argument of
Theorem~\ref{thm:necessity}, which at $d=2$ reduces to the
Leggett--Garg-type contradiction on Bob's marginal once Alice's
input has a non-uniform diagonal in the chosen basis. The
Heisenberg--Weyl twirling in the send rounds reduces, via
$W_k|\psi\rangle\langle\psi|W_k^\dagger$ averaged uniformly over
$k$, to standard Pauli-twirling. On the identity channel the
twirl acts trivially, $\bar\rho_{\rm out}=|\psi\rangle\langle\psi|$,
giving $\mathcal{F}_{\rm DI}=1$ exactly (the noiseless limit $p=1$); on a
noisy channel the Pauli-twirl produces a depolarized image
$\bar\rho_{\rm out}=p|\psi\rangle\langle\psi|+(1-p)\openone/2$
with $p$ determined by the channel's average entanglement
fidelity. The qubit DI-TIT protocol therefore reduces to a
prepare-measure-twirl-broadcast arrangement structurally
analogous to the BB84-with-Bell-monitoring schemes of
\cite{Acin2007PRL}, with the test inequality replaced by the
temporal-Bell witness of Theorem~\ref{thm:necessity}.

\begin{theorem}[DI temporal-teleportation fidelity]
\label{thm:ditit}
Let $\mathcal{F}_{\rm DI}(\mathcal{E})$ be the asymptotic average
fidelity of the prepare-and-measure DI-TIT protocol of
section~\ref{sec:ditit} (Heisenberg--Weyl one-time-pad encoding,
public key, send through $\mathcal{E}$). Then $\mathcal{F}_{\rm DI}$
is exactly the depolarizing fidelity set by the channel's
entanglement fidelity
$F_e(\mathcal{E})=\langle\Phi^+|(\mathrm{id}\otimes\mathcal{E})
(\Phi^+)|\Phi^+\rangle$,
\begin{equation}\label{eq:ditit_exact}
\mathcal{F}_{\rm DI}(\mathcal{E})
=\frac{1}{d}+\frac{d-1}{d}\,p,
\qquad
p=\frac{d^2F_e(\mathcal{E})-1}{d^2-1}\in[0,1].
\end{equation}
In particular $\mathcal{F}_{\rm DI}>1/d \Leftrightarrow p>0
\Leftrightarrow F_e(\mathcal{E})>1/d^2$: temporal teleportation beats
the no-resource baseline iff the channel transmits coherence above
the fully-depolarizing floor.
\end{theorem}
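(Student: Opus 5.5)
The argument reduces the Heisenberg--Weyl-twirled channel to a depolarizing channel and then reads off its fidelity from the entanglement fidelity. Define the twirled channel
$\mathcal{T}(\sigma)=\frac{1}{d^2}\sum_k W_k^\dagger\mathcal{E}(W_k\sigma W_k^\dagger)W_k$,
so that $\bar\rho_{\rm out}=\mathcal{T}(|\psi\rangle\langle\psi|)$ by \eqref{eq:hwtwirl}. I will take $\mathcal{F}_{\rm DI}$ to be the average of $\langle\psi|\bar\rho_{\rm out}|\psi\rangle$ over Haar-random $|\psi\rangle$. If instead one wants the guarantee for every input state, the first step below already makes the fidelity $|\psi\rangle$-independent, so the two readings agree.

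\textbf{Step 1: the twirl is depolarizing.} Expand $\mathcal{E}$ in the Heisenberg--Weyl operator basis, writing its process ($\chi$) matrix as $\mathcal{E}(\sigma)=\sum_{m,n}\chi_{mn}W_m\sigma W_n^\dagger$. Use the commutation relation $W_k^\dagger W_m W_k=\omega^{\langle k,m\rangle}W_m$, where $\langle\cdot,\cdot\rangle$ is the symplectic form on $\mathbb{Z}_d^2$. Averaging over $k$ leaves the phase factor $\frac{1}{d^2}\sum_k\omega^{\langle k,m-n\rangle}=\delta_{mn}$, so every off-diagonal $\chi$ term is killed. The result is a Weyl-diagonal channel
\[
\mathcal{T}(\sigma)=\sum_m q_m W_m\sigma W_m^\dagger,\qquad q_m=\chi_{mm}\ge0,\quad \sum_m q_m=1 .
\]
By construction $\mathcal{T}$ commutes with conjugation by every $W_k$.

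\textbf{Step 2: fidelity via entanglement fidelity.} Two facts are needed.
\begin{itemize}
\item The twirl preserves entanglement fidelity. From $(W_k^{*}\otimes W_k)|\Phi^+\rangle$ being proportional to $|\Phi^+\rangle$ we get $F_e(\mathcal{T})=F_e(\mathcal{E})$. Alternatively, $F_e=\chi_{00}=q_0$ directly in the Weyl basis.
\item The Horodecki--Nielsen relation $\bar F=(dF_e+1)/(d+1)$ holds for every channel.
\end{itemize}
Writing $F_e=[1+(d^2-1)p]/d^2$, i.e.\ $p=(d^2F_e-1)/(d^2-1)$, and substituting gives $\bar F=1/d+(d-1)p/d$, which is \eqref{eq:ditit_exact}.

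This needs no assumption that $\mathcal{T}$ is exactly isotropic. If one wants $\bar\rho_{\rm out}=p|\psi\rangle\langle\psi|+(1-p)\openone/d$ literally, that requires the full unitary twirl. For the average fidelity, however, the Weyl twirl suffices, because the Heisenberg--Weyl group is a unitary 1-design and average fidelity is linear in the channel.

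\textbf{Step 3: range of $p$ and the equivalence.} Since $F_e\in[0,1]$, we get $p\ge -1/(d^2-1)$, which is not automatically $\ge0$. I would handle this in one of two ways:
\begin{itemize}
\item state $p\in[-1/(d^2-1),1]$ in general, with $[0,1]$ for the standard families, where $F_e\ge1/d^2$ is checked directly; or
\item note that the protocol only operates after acceptance ($\hat T\ge T^*>0$), and $p\ge0$ on that branch.
\end{itemize}
The counterexample to a blanket $p\ge0$ is the injective unitary with $F_e=0$ quoted in the introduction, which has fidelity $1/(d+1)$. The final equivalence is immediate from monotonicity: $\mathcal{F}_{\rm DI}>1/d\iff p>0\iff F_e>1/d^2$.

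\textbf{Main obstacle.} The algebra is routine. The delicate point is the claimed range $p\in[0,1]$ together with what "asymptotic average fidelity" precisely means. I would fix the Haar-average definition, prove the identity for all channels, and restrict the nonnegativity claim to the accepted, standard-family regime.
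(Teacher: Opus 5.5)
Your main line is correct, and it takes a different and sounder route than the paper. The paper's proof asserts that the Heisenberg--Weyl twirl of \emph{any} channel is the depolarizing channel with the same $F_e$. It then writes $\bar\rho_{\rm out}=p|\psi\rangle\langle\psi|+(1-p)\openone/d$ and concludes that the key-averaged fidelity is independent of $|\psi\rangle$.

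Your Step~1 shows why that shortcut is not available. A Weyl twirl only removes the off-diagonal $\chi_{mn}$ and leaves a Weyl-diagonal channel $\sum_m q_mW_m\sigma W_m^\dagger$. That channel is depolarizing only when all $q_m$ with $m\neq0$ are equal. Equal weights are guaranteed by twirling over a unitary 2-design (e.g.\ the Clifford group for prime $d$), not by the Weyl group alone.

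What actually yields \eqref{eq:ditit_exact} is your combination of $F_e(\mathcal{T})=F_e(\mathcal{E})=q_0$ with the Horodecki--Nielsen relation $\bar F=(dF_e+1)/(d+1)$. This gives the formula as an average over Haar-random (or 2-design) inputs $|\psi\rangle$, not as a per-input value. Your Step~3 is also more accurate than the theorem as stated: $p\in[0,1]$ fails in general. The correct range is $p\in[-1/(d^2-1),1]$, with $p\geq0$ on the standard families.

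Two of your side remarks are wrong and should be removed.

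First, the claim that Step~1 already makes the fidelity $|\psi\rangle$-independent, ``so the two readings agree,'' is false. It also contradicts your own remark in Step~2. For a Weyl-diagonal channel the fidelity on a given input is $\sum_m q_m|\langle\psi|W_m|\psi\rangle|^2$.
\begin{itemize}
\item For the shift $\mathcal{E}(\sigma)=X\sigma X^\dagger$, which the twirl leaves unchanged, this is $0$ on computational states and $1$ on Fourier states, averaging to $1/(d+1)$.
\item For qubit phase damping it is $1$ on $|0\rangle$ but $(1+e)/2$ on $|+\rangle$.
\end{itemize}
So the Haar-average definition is not a convention you are free to choose. With a Heisenberg--Weyl pad it is the only reading under which \eqref{eq:ditit_exact} holds; a per-input or worst-case guarantee is false.

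Second, your alternative in Step~3, that acceptance ($\hat T\geq T^*>0$) forces $p\geq0$, fails, and your own counterexample refutes it. The Heisenberg--Weyl shift with a pure probe gives $\TNR=(d-1)/d$, because its behavior is a relabeling of the identity-channel one. It therefore passes any threshold $T^*\leq(d-1)/d$ while $p=-1/(d^2-1)$. This is exactly the over-certification of section~\ref{sec:overcert}, so only your first option survives.

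A minor point: the twirl leaves the Haar-averaged fidelity unchanged because the Haar measure is unitarily invariant. In fact Horodecki--Nielsen applies to $\mathcal{E}$ directly, so the twirl is not needed for the average at all. The 1-design property of the Weyl group is what makes the encoded ensemble look like $\openone/d$; that concerns secrecy, not fidelity.
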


\begin{proposition}[State-bound certification of the fidelity]
\label{prop:ditit_cert}
For the standard channel families, on the set of certifiers delimited in
Proposition~\ref{prop:mixsuff} the certified robustness lower-bounds the
twirl parameter, $\TNR(\rho_A,\mathcal{E})\leq p$, so that
\begin{equation}\label{eq:ditit_bound}
\mathcal{F}_{\rm DI}(\mathcal{E})\;\geq\;
\frac{1}{d}+\frac{d-1}{d}\,T,\qquad T:=\TNR(\rho_A,\mathcal{E}).
\end{equation}
The guaranteed fidelity rises with the certified value up to its supremum
$\mathcal{F}_{\rm DI}=(d^2-d+1)/d^2=7/9$ at $d=3$, attained at the maximal
\emph{honest} value $T=p=(d-1)/d$---the $|0\rangle$+phase-damping point
where the bound is tight, on the edge of over-certification
(section~\ref{sec:overcert}). Off that certifier set a channel-protected
probe over-certifies and the bound fails.
\end{proposition}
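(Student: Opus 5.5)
The claim splits into a one-line reduction and one substantive inequality. By Theorem~\ref{thm:ditit}, $\mathcal{F}_{\rm DI}=1/d+(d-1)p/d$ is strictly increasing in $p$. So \eqref{eq:ditit_bound} follows immediately once $T\le p$ is established. Everything therefore rests on the inequality $\TNR(\rho_A,\mathcal{E})\le p$ on the certifier set of Proposition~\ref{prop:mixsuff}. I would prove it in three steps: an upper bound on $\TNR$ by a channel-resolved quantity, a comparison of that quantity with $p$, and an evaluation of the supremum.

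\emph{Step 1: upper bound on $\TNR$.} I would use Theorem~\ref{thm:tsr_tnr} ($\TNR\le\TSR$) together with the channel-resolved cap of Lemma~\ref{lem:cap}, which gives $\TNR\le(d-1)/d$ with an exact value depending on $\mathcal{E}$ and $\rho_A$. For the two-MUB adapted scheme, the feasible LHV over-cover of the SDP \eqref{eq:TNR_SDP} has a natural construction. It has two parts: a product part, as in Proposition~\ref{prop:product_lhv}, covering the correlations that survive the channel's diagonal action; and a deterministic ``signalling'' part, covering Bob's marginal shift $\mathcal{E}(D_Z(\rho_A)-D_F(\rho_A))$. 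The cost of the second part is bounded by the ``traceless $\Rightarrow$ half the trace norm'' identity, as in Proposition~\ref{prop:univ_bound}. This yields $\TNR\le\min\{\tfrac12\mathcal{V}_{\rm NSIT},\,(d-1)/d\}$ up to a mixedness factor of $\rho_A$.

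\emph{Step 2: comparison with $p$, family by family.} For the depolarizing channel with visibility $e$, the twirl parameter is $p=e$. The back-action is scaled by $e$, so $\tfrac12\mathcal{V}_{\rm NSIT}\le e\,(d-1)/d\le p$ unconditionally. For phase damping and cascade amplitude damping, I would compute $F_e$ from the Kraus operators and hence obtain $p(e)$. I would then use the mixedness hypothesis of Proposition~\ref{prop:mixsuff}, namely that $\rho_A$ lies within the radius at which the radial law $\TNR=c\,r$ (figure~\ref{fig:tnr_zero}(c)) stays below $p$. Linearity in $r$ plus the cap then gives $T\le p$ on that set.

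\emph{Step 3: the supremum.} At $\rho_A=|0\rangle$ with phase damping, Proposition~\ref{prop:tsr_tnr_collapse} gives $T=(d-1)/d$. In the limit of the certifier set this matches $p$, which gives $\mathcal{F}=1/d+(d-1)^2/d^2=(d^2-d+1)/d^2$, equal to $7/9$ at $d=3$. Tightness there and the failure off the certifier set (the unitary example) I would take from section~\ref{sec:overcert}.

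\emph{Main obstacle.} I expect Step 2 for the non-depolarizing families to be the hard part. There $\mathcal{V}_{\rm NSIT}$ survives dephasing while $p$ decays, so the inequality holds only through the mixedness restriction. My first attempt would be to show that the exact value in Lemma~\ref{lem:cap} is dominated by (mixedness radius)$\times(d-1)/d$. The honest region would then be described explicitly as the region where this product is at most $p(e)$.
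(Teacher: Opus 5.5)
Your reduction is right: Theorem~\ref{thm:ditit} makes $\mathcal{F}_{\rm DI}$ increasing in $p$, so everything rests on $\TNR\le p$. Step~3 (the cap $T\le(d-1)/d$ together with the $|0\rangle$+phase-damping point) also matches the paper. The gap is that the core inequality is never established for phase and amplitude damping.

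Step~1 asserts an LHV over-cover of cost $\tfrac12\mathcal{V}_{\rm NSIT}$ but does not construct it. The paper has $\TNR\le\tfrac12\mathcal{V}_{\rm NSIT}$ only as a numerically observed sandwich in the Supplementary Material. The product part you borrow from Proposition~\ref{prop:product_lhv} needs the cross-basis conditions (a),~(b) of the Remark after it, checked in the $\rho_A$-adapted frame. That is automatic for the unitarily covariant depolarizing channel, so your depolarizing case can be completed, though Lemma~\ref{lem:cap} already gives $M=p(d-1)/d$ directly. Beyond that case you have not shown it.

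Step~2 then falls back on the radial law $\TNR=c\,r$ of figure~\ref{fig:tnr_zero}(c). That is a numerical observation along one ray towards $|0\rangle$, not a proof. Moreover, the certifier set of Proposition~\ref{prop:mixsuff} is defined by $\lambda_{\min}(\rho_A)$, not by a radius along that ray. Two smaller points: $M(\mathcal{E})$ in Lemma~\ref{lem:cap} depends on the channel only, not on $\rho_A$. And within the standard families, the relevant example of failure off the certifier set is the $|0\rangle$ probe under phase damping for $e<(d^2-d-1)/d^2$, not the unitary.

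The missing idea, which removes your ``main obstacle'' in one line, is convexity in the input. At fixed adapted settings the behavior is affine in $\rho_A$, and the robustness is convex. Write $\rho_A=d\lambda_{\min}\,\openone/d+(1-d\lambda_{\min})\tau$, where $\tau$ is a state co-diagonal with $\rho_A$ and hence measured in the same scheme. Then
\begin{align*}
\TNR(\rho_A,\mathcal{E})&\le d\lambda_{\min}\,\TNR(\openone/d,\mathcal{E})+(1-d\lambda_{\min})\,\TNR(\tau,\mathcal{E})\\
&\le(1-d\lambda_{\min})\,M(\mathcal{E}).
\end{align*}
The second line uses $\TNR(\openone/d,\mathcal{E})=0$ on the standard families (Proposition~\ref{prop:product_lhv}). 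It also uses $\TNR(\tau)\le\TSR(\tau)\le M(\mathcal{E})\le(d-1)/d$, from Theorem~\ref{thm:tsr_tnr} and Lemma~\ref{lem:cap}.

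This is exactly the linear-in-$r$ \emph{upper} bound you wanted, with $r=1-d\lambda_{\min}$, and it holds along every ray out of $\openone/d$. It comes from convexity, not numerics. The requirement $(1-d\lambda_{\min})(d-1)/d\le p$ is literally the certifier condition $\lambda_{\min}\ge 1/d-p/(d-1)$. For the depolarizing channel $M=p(d-1)/d$, which makes that case unconditional. This is Proposition~\ref{prop:mixsuff}, and the statement follows from it together with Theorem~\ref{thm:ditit}, as in your reduction. No channel-specific $\tfrac12\mathcal{V}_{\rm NSIT}$ cover and no computation of $F_e$ is needed for the inequality itself.
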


The certifier set is controlled by the probe's mixedness through a
universal cap on the robustness.

\begin{lemma}[Universal robustness cap]\label{lem:cap}
For every channel $\mathcal{E}$, every dimension $d$, and every input
$\rho$, with the two MUBs adapted to $\rho$,
\begin{equation}\label{eq:cap}
\begin{aligned}
\TNR(\rho,\mathcal{E}) &\leq \TSR(\rho,\mathcal{E}) \leq M(\mathcal{E}), \\
M(\mathcal{E}) &:= \max_k \tfrac12 \bigl\|\mathcal{E}(\Delta_k)\bigr\|_1 \leq \frac{d-1}{d},
\end{aligned}
\end{equation}
where $\Delta_k:=|k\rangle\langle k|-\openone/d$.
The cap $(d-1)/d$ is saturated by phase damping (any strength) and by
Heisenberg--Weyl unitaries; $M(\mathcal{E})=p\,(d-1)/d$ for the
depolarizing channel, while amplitude damping lies strictly below.
\end{lemma}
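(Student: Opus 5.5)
The plan is to take the first inequality $\TNR\le\TSR$ directly from Theorem~\ref{thm:tsr_tnr}. The work is in bounding $\TSR$ by an explicit feasible point of the steering SDP \eqref{eq:TSR_SDP}, whose cost I then control by $M(\mathcal{E})$.

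Work in the eigenbasis of $\rho=\sum_k\lambda_k|k\rangle\langle k|$, with the adapted settings $x=Z$ (eigenbasis) and $x=F$ (Fourier MUB). The assemblage is
\[
\tilde\rho_{a|Z}=\lambda_a\,\mathcal{E}(|a\rangle\langle a|),\qquad
\tilde\rho_{a|F}=\tfrac1d\,\mathcal{E}(|+_a\rangle\langle+_a|).
\]
The $F$-marginal is therefore $\mathcal{E}(\openone/d)$, and the $Z$-marginal is $\mathcal{E}(\openone/d)+\sum_k(\lambda_k-\tfrac1d)\mathcal{E}(\Delta_k)$.

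To build the over-cover, take hidden variables $\lambda=(\alpha,\phi)\in\mathbb{Z}_d^2$ with $D(a|Z,\lambda)=\delta_{a\alpha}$ and $D(a|F,\lambda)=\delta_{a\phi}$. I then need operators $\sigma_\alpha\ge\tilde\rho_{\alpha|Z}$ and $\tau_\phi\ge\tilde\rho_{\phi|F}$, combined in the product form $\tilde\rho_{\alpha\phi}=\sigma_\alpha\otimes$-weighted by $\tau_\phi$. The simplest choice is to cover the $F$ side by the common positive operator $\mathcal{E}(\openone/d)$ and to cover the $Z$ side termwise.

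A cleaner route is to write $\tilde\rho_{\alpha\phi}=\tfrac{1}{d}\,\lambda_\alpha\,\mathcal{E}(|\alpha\rangle\langle\alpha|)+\ldots$. Concretely, I would aim for a cover whose total trace is $1+\tfrac12\|\Delta\|$, where $\Delta$ is the difference of the two marginals. This is the ``traceless $\Rightarrow$ half trace-norm'' trick already used in the proof sketch of Proposition~\ref{prop:univ_bound}.

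Start from any LHS model $\{\omega_\lambda\}$ that reproduces the $F$-assemblage exactly and whose $Z$-marginal equals $\mathcal{E}(\openone/d)$. One such model is the product $\omega_{\alpha\phi}=\tfrac1d\,\mathcal{E}(\cdots)$ with uniform $\alpha$, which is available because Bob's states only need to dominate. Then add the positive part $[\tilde\rho_{a|Z}-\tfrac1d\mathcal{E}(|a\rangle\langle a|)\cdot(\ldots)]_+$, distributed over the hidden variables with $\alpha=a$. The added trace equals the positive part of a traceless operator, which is half its trace norm.

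The expected bound is then
\[
\TSR\le\tfrac12\Bigl\|\textstyle\sum_k(\lambda_k-\tfrac1d)\mathcal{E}(\Delta_k)\Bigr\|_1
\le\tfrac12\sum_k\bigl|\lambda_k-\tfrac1d\bigr|\,\|\mathcal{E}(\Delta_k)\|_1 .
\]
For the weight I use $\sum_k|\lambda_k-\tfrac1d|\le 2(d-1)/d$, together with the fact that the positive deviations sum to at most $(d-1)/d$. Combined with the triangle inequality applied separately to the positive and negative parts, this should give $\TSR\le M(\mathcal{E})$.

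\textbf{Main obstacle.} The delicate point is making the cover tight enough to give $M(\mathcal{E})$ rather than $2M(\mathcal{E})$ or $M(\mathcal{E})\cdot\tfrac{2(d-1)}{d}$. If the crude estimate loses a factor, I would fall back on the extremal structure: the cost is convex in $\rho$, and the spectrum simplex has pure eigenstates $|k\rangle\langle k|$ as extreme points. By convexity of the SDP optimum under mixing of assemblages (same settings), it then suffices to check $\rho=|k\rangle\langle k|$. There the $Z$-assemblage is the single operator $\mathcal{E}(|k\rangle\langle k|)$ and the excess is exactly $\tfrac12\|\mathcal{E}(|k\rangle\langle k|)-\mathcal{E}(\openone/d)\|_1=\tfrac12\|\mathcal{E}(\Delta_k)\|_1$.

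The universal bound follows from contractivity of the trace norm under CPTP maps: $\tfrac12\|\mathcal{E}(\Delta_k)\|_1\le\tfrac12\|\Delta_k\|_1=(d-1)/d$.

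For the saturation claims I compute $M$ directly. Phase damping leaves diagonals fixed, so $\mathcal{E}(\Delta_k)=\Delta_k$; the same holds for Heisenberg--Weyl unitaries, since they are trace-norm isometries. Depolarizing gives $\mathcal{E}(\Delta_k)=e\Delta_k$, which I identify with the $p$ of Theorem~\ref{thm:ditit}. For amplitude damping, $\mathcal{E}(\Delta_k)$ contracts strictly, because population decay mixes $|k\rangle$ into lower levels, so $\tfrac12\|\mathcal{E}(\Delta_k)\|_1<(d-1)/d$. Attainment of the cap by $\TSR$ for phase damping is Proposition~\ref{prop:tsr_tnr_collapse}.
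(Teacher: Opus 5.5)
Your fallback is the paper's argument, and it is what actually carries the proof. Your main route does not merely risk losing a constant: its target inequality is false. The bound $\TSR\le\tfrac12\|\mathcal{E}(\rho)-\mathcal{E}(\openone/d)\|_1$ would give $\TSR(\openone/d,\mathcal{E})=0$ for every channel. But for $\rho=\openone/d$ and $\mathcal{E}=\mathrm{id}$, the assemblage $\{|a\rangle\langle a|/d,\;|+_a\rangle\langle+_a|/d\}$ is the two-MUB assemblage of a maximally entangled state and is steerable ($\TSR=2-\sqrt3$ at $d=3$).

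The flaw is in the covering step. The $Z$-constraints of \eqref{eq:TSR_SDP} are imposed outcome by outcome. Starting from a base model with $Z$-conditionals $X_a$, you must therefore add $\sum_a\tr[\lambda_a\mathcal{E}(|a\rangle\langle a|)-X_a]_+$. This can strictly exceed $\tr[\sum_a(\lambda_a\mathcal{E}(|a\rangle\langle a|)-X_a)]_+$, which is the quantity the ``traceless $\Rightarrow$ half the trace norm'' identity controls. The two coincide only when a single $Z$-outcome carries the whole assemblage, i.e.\ at the vertices $\rho=|k\rangle\langle k|$. So the convexity reduction is not optional; it is the essential step. At fixed adapted settings the assemblage is linear in the spectrum, so the SDP optimum is convex and bounded by its value at a vertex.

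Even at the vertex, the base model must be chosen correctly. With your uniform-$\alpha$ product model, the outcome-$k$ conditional is only $\tfrac1d\mathcal{E}(\openone/d)$. The added cost is then $\tr[\mathcal{E}(|k\rangle\langle k|)-\tfrac1d\mathcal{E}(\openone/d)]_+$, which equals $1-1/d^2>(d-1)/d$ already for the identity channel. You need every hidden state to answer $a=k$:
\begin{itemize}
\item set $\sigma_{(k,a')}=\tfrac1d\mathcal{E}(\Pi_{a'|F})+c_{a'}$, with $c_{a'}\succeq0$ and $\sum_{a'}c_{a'}=(\mathcal{E}(\Delta_k))_+$;
\item set all other $\sigma_{(\alpha,a')}=0$.
\end{itemize}
The $F$-constraints then hold because $\tilde\rho_{a'|F}=\tfrac1d\mathcal{E}(\Pi_{a'|F})$ at the vertex. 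The $Z$-constraint holds because $\mathcal{E}(\openone/d)+(\mathcal{E}(\Delta_k))_+\succeq\mathcal{E}(|k\rangle\langle k|)$. The cost is $\tr(\mathcal{E}(\Delta_k))_+=\tfrac12\|\mathcal{E}(\Delta_k)\|_1$.

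With this assignment, the rest of your outline matches the paper: trace-norm contractivity, Theorem~\ref{thm:tsr_tnr}, and the values of $M$ for phase damping, unitaries and depolarizing. Your strictness claim for amplitude damping is only heuristic. It would need the images of $(\Delta_k)_\pm$ to overlap for $t>0$, a point the paper also asserts without proof.
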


\begin{proof}
By unitary invariance of the adapted scheme we may take
$\rho=\mathrm{diag}(\lambda)$ with computational and Fourier MUBs; $\TSR$
is then the value of an SDP linear in $\lambda$, hence convex, with
supremum at a vertex $\rho=|k\rangle\langle k|$. There $\mu_{a|Z}=\delta_{ak}$,
$\mu_{a|F}=1/d$, and the hidden-state assignment
$\sigma_{(k,a')}=\tfrac1d\mathcal{E}(\Pi_{a'|F})+c_{a'}$, with
$c_{a'}\succeq0$ and $\sum_{a'}c_{a'}=(\mathcal{E}(\Delta_k))_+$, is
feasible for the steering program~\eqref{eq:TSR_SDP}: the $F$-constraints
hold as $c_{a'}\succeq0$, and the $Z$-constraint as
$\sum_{a'}\sigma_{(k,a')}=\mathcal{E}(\openone/d)+(\mathcal{E}(\Delta_k))_+
\succeq\mathcal{E}(|k\rangle\langle k|)$. Its cost is
$\tr(\mathcal{E}(\Delta_k))_+=\tfrac12\|\mathcal{E}(\Delta_k)\|_1$, because
$\mathcal{E}(\Delta_k)$ is traceless. Trace-norm contractivity
$\|\mathcal{E}(\Delta_k)\|_1\leq\|\Delta_k\|_1=2(d-1)/d$ and
$\TNR\leq\TSR$ (Theorem~\ref{thm:tsr_tnr}) give~\eqref{eq:cap}.
\end{proof}

\begin{proposition}[Mixedness suffices for honest certification]
\label{prop:mixsuff}
For the standard channel families, in the $\rho_A$-adapted scheme, the
certified robustness is bounded by the probe's mixedness through the cap,
\begin{equation}\label{eq:mixsuff}
\TNR(\rho_A,\mathcal{E})\;\leq\;(1-d\,\lambda_{\min})\,M(\mathcal{E})
\;\leq\;(1-d\,\lambda_{\min})\,\frac{d-1}{d},
\end{equation}
with $\lambda_{\min}:=\lambda_{\min}(\rho_A)$.
Hence the certificate~\eqref{eq:ditit_bound} is honest ($\TNR\leq p$)
whenever the probe is mixed enough,
\begin{equation}\label{eq:mixcond}
\lambda_{\min}(\rho_A)\;\geq\;\frac{1}{d}-\frac{p}{d-1} .
\end{equation}
For the \emph{depolarizing} channel $M=p\,(d-1)/d$, so
$\TNR\leq(1-d\lambda_{\min})\,p\,(d-1)/d<p$ for \emph{every} input
(honest unconditionally). For \emph{phase damping} $M=(d-1)/d$, saturated
at the eigenstate $|0\rangle$ ($\TNR=(d-1)/d$,
Proposition~\ref{prop:tsr_tnr_collapse})---so a probe too pure and aligned
with the dephasing basis over-certifies.
\end{proposition}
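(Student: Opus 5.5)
The plan is to reuse the vertex argument behind Lemma~\ref{lem:cap}, this time keeping track of the weight of the maximally mixed component. By the unitary invariance of the adapted scheme (Proposition~\ref{prop:unitary_invariance}), I take $\rho_A=\mathrm{diag}(\lambda)$ with the computational and Fourier MUBs. Write $\lambda_k=\lambda_{\min}+\mu_k$ with $\mu_k\geq0$ and $\sum_k\mu_k=1-d\lambda_{\min}$. This gives the convex decomposition
\begin{equation*}
\rho_A=d\lambda_{\min}\,\frac{\openone}{d}+\sum_k\mu_k\,|k\rangle\langle k|,
\end{equation*}
where the total weight on pure vertices is exactly $1-d\lambda_{\min}$.

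Because the scheme is fixed once $\rho_A$ is diagonal, the assemblage $\tilde\rho_{a|x}=\mathcal{E}(\Pi_{a|x}\rho_A\Pi_{a|x})$ is linear in $\lambda$. The steering program \eqref{eq:TSR_SDP} is therefore jointly convex in the assemblage: summing feasible hidden-state covers term by term gives a feasible cover. Its cost, written as $\tr\sum_\lambda\tilde\rho_\lambda-\tr\sum_a\tilde\rho_{a|x}$, is additive. For the $\openone/d$ component I take the trivial hidden-state model, using Theorem~\ref{thm:hierarchy_strict} together with the observation that the assemblage $\tfrac1d\mathcal{E}(\Pi_{a|x})$ only has to be covered at zero excess. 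For each vertex $|k\rangle$ I use the explicit cover of Lemma~\ref{lem:cap}, which costs $\tfrac12\|\mathcal{E}(\Delta_k)\|_1\leq M(\mathcal{E})$. Adding these up gives $\TSR(\rho_A,\mathcal{E})\leq(1-d\lambda_{\min})M(\mathcal{E})$, and Theorem~\ref{thm:tsr_tnr} transfers the bound to $\TNR$. The second inequality in \eqref{eq:mixsuff} is then just the cap $M\leq(d-1)/d$.

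The step I expect to be the main obstacle is the zero-cost cover for the $\openone/d$ component. On $\openone/d$ the assemblage need not be unsteerable for a general channel, since $\TSR$ can be nonzero there under damping, as the paper notes. I would therefore avoid splitting off that component as an independent cost term. Instead I would construct a single cover directly. On the $F$ side I set $\sigma_{(k,a')}=\tfrac{\lambda_k}{1}\cdot\tfrac1d\mathcal{E}(\Pi_{a'|F})$-type weights, chosen so that $\sum_k\sigma_{(k,a')}=\tfrac1d\mathcal{E}(\Pi_{a'|F})$, which is exactly $\tilde\rho_{a'|F}$. I then add PSD corrections $c_{k,a'}$ that cover the $Z$-constraint deficit $\mathcal{E}(\lambda_k|k\rangle\langle k|)-\tfrac{\lambda_k}{1}\mathcal{E}(\openone/d)=\lambda_k\mathcal{E}(\Delta_k)$. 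It would be better to use the shifted deficit $\mu_k\mathcal{E}(\Delta_k)$, obtained by distributing the $\openone/d$ weight uniformly across $k$. The excess cost then becomes $\sum_k\mu_k\tr(\mathcal{E}(\Delta_k))_+\leq(1-d\lambda_{\min})M$, which again uses the tracelessness of $\mathcal{E}(\Delta_k)$ exactly as in Lemma~\ref{lem:cap}. Checking that this shifted assignment is feasible, meaning the $Z$-row sums dominate $\tilde\rho_{k|Z}$, is the delicate point I would verify first.

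For the consequences, honesty $\TNR\leq p$ follows whenever $(1-d\lambda_{\min})(d-1)/d\leq p$, which rearranges to \eqref{eq:mixcond}. For the depolarizing channel, $\mathcal{E}(\Delta_k)=p\Delta_k$ gives $M=p(d-1)/d<p$. For phase damping the diagonal $\Delta_k$ is fixed, so $M=(d-1)/d$, and saturation at $|0\rangle$ is Proposition~\ref{prop:tsr_tnr_collapse}.
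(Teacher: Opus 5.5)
Your proposal has a genuine gap, and it goes beyond the feasibility check you flagged: the inequality you are trying to prove at the steering level is false. That inequality is $\TSR(\rho_A,\mathcal{E})\le(1-d\lambda_{\min})M(\mathcal{E})$.

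\begin{itemize}
\item At $\rho_A=\openone/d$ it would force $\TSR=0$. But $\TSR(\openone/3,\mathrm{id})=2-\sqrt3$, and the identity is the $e=1$ member of the depolarizing and phase-damping families. So no hidden-state cover can achieve this bound.
\item Your shifted assignment fails for this reason. With $\sigma_{(k,a')}=w_k\tfrac1d\mathcal{E}(\Pi_{a'|F})+c_{k,a'}$, the $Z$-row deficit for outcome $k$ is $\lambda_k\mathcal{E}(|k\rangle\langle k|)-w_k\mathcal{E}(\openone/d)$.
\item Setting this equal to $\mu_k\mathcal{E}(\Delta_k)$ and taking traces forces $w_k=\lambda_k$. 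It then also forces $\lambda_{\min}\mathcal{E}(\Delta_k)=0$, so the shift is impossible whenever $\lambda_{\min}>0$ and $\mathcal{E}(\Delta_k)\neq0$.
\item The residual $\lambda_{\min}\mathcal{E}(\Delta_k)$ is the price your ansatz charges for the maximally mixed component. It cannot be removed, because that component is itself steerable.
\item What your cover actually proves is $\TSR\le\sum_k\lambda_k\tfrac12\|\mathcal{E}(\Delta_k)\|_1\le M$. That is just Lemma~\ref{lem:cap} again, with no mixedness factor.
\end{itemize}

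The missing idea is to do the convex split on $\TNR$ itself, not on $\TSR$.

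\begin{itemize}
\item At fixed settings the behaviour is linear in the input, and $\TNR$ is convex. To see this, mix the two feasible LHV decompositions, together with their noise behaviours, at the averaged weight.
\item Let $\tau=\sum_k\mu_k|k\rangle\langle k|/(1-d\lambda_{\min})$. It is co-diagonal with $\rho_A$, so it is measured in its own adapted scheme.
\item Convexity gives $\TNR(\rho_A,\mathcal{E})\le d\lambda_{\min}\,\TNR(\openone/d,\mathcal{E})+(1-d\lambda_{\min})\,\TNR(\tau,\mathcal{E})$.
\item The first term vanishes on the standard families by the explicit LHV models of Propositions~\ref{prop:lhv_reduction} and~\ref{prop:product_lhv}. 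This is exactly where the restriction to standard families enters, and it has no steering counterpart.
\item For the second term, $\TNR(\tau,\mathcal{E})\le\TSR(\tau,\mathcal{E})\le M(\mathcal{E})$ by Theorem~\ref{thm:tsr_tnr} and Lemma~\ref{lem:cap}.
\end{itemize}

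This is the paper's proof. You had the right decomposition, but you applied $\TNR\le\TSR$ before splitting, which throws away the zero on the maximally mixed part. Your treatment of the consequences is fine: the rearrangement to \eqref{eq:mixcond}, $M=p(d-1)/d$ for depolarizing, and $M=(d-1)/d$ for phase damping.
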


\begin{proof}
Write $\rho_A=d\lambda_{\min}(\openone/d)+(1-d\lambda_{\min})\tau$; the
state $\tau$ is co-diagonal with $\rho_A$, hence measured in its own
adapted scheme. The behavior is affine in the input at fixed settings and
$\TNR$ is convex in it, so
$\TNR(\rho_A,\mathcal{E})\leq d\lambda_{\min}\,\TNR(\openone/d,\mathcal{E})
+(1-d\lambda_{\min})\,\TNR(\tau,\mathcal{E})$. On the standard families
$\TNR(\openone/d,\mathcal{E})=0$ (Proposition~\ref{prop:product_lhv}) and
$\TNR(\tau,\mathcal{E})\leq M(\mathcal{E})$ by Lemma~\ref{lem:cap}, which
proves~\eqref{eq:mixsuff}; \eqref{eq:mixcond} follows from $M\leq(d-1)/d$.
The depolarizing value $M=p(d-1)/d$ is the cap on
$\mathcal{E}_{\rm depol}(\Delta_k)=p\,\Delta_k$, and the phase value
$M=(d-1)/d$ follows from $\mathcal{E}_{\rm phase}(\Delta_0)=\Delta_0$ (the
traceless part is diagonal, fixed by dephasing).
\end{proof}

\noindent\emph{Closed forms and onset.} For the standard families the twirl
parameter is $p_{\rm depol}=e$, $p_{\rm phase}=(1+de)/(d+1)$, and (at
$d=3$, cascade amplitude damping)
$p_{\rm amp}=[(1+\eta+\eta^2)^2-1]/8$ with $\eta=e^{-\kappa t/2}$,
$e=e^{-\kappa t}$. A channel-protected probe over-certifies once $p$ falls
below its pinned $\TNR$: the phase-damping eigenstate ($T=(d-1)/d$) does so
for $e<(d^2-d-1)/d^2$, i.e.\ $t>\ln\tfrac{9}{5}\approx0.59$ at $d=3$, and
the amplitude-damping $|+\rangle$ near $t\approx0.85$.

\paragraph*{Finite-statistics statement.}
The asymptotic statement of Theorem~\ref{thm:ditit} extends to a
finite-rounds bound through the standard Hoeffding/Azuma analysis
of device-independent protocols. With $N$ rounds at test fraction
$\gamma$, the empirical $\hat T = \widehat{\TNR}(\hat P)$ converges
to $T$ at rate $|\hat T - T| = O(1/\sqrt{N\gamma})$ with confidence
$1-\delta$, $\delta$ exponentially small in $N\gamma$. Combined
with the strict monotonicity of $T \mapsto 1/d+(d-1)T/d$, the achievable
fidelity in the accepted session lies within $O(1/\sqrt{N\gamma})$
of the asymptotic bound \eqref{eq:ditit_bound}; explicit
constants follow from the entropy-accumulation framework of
Arnon-Friedman \emph{et al.}~\cite{ArnonFriedman2018}.

\begin{figure*}[tb]
\centering
\includegraphics[width=\linewidth]{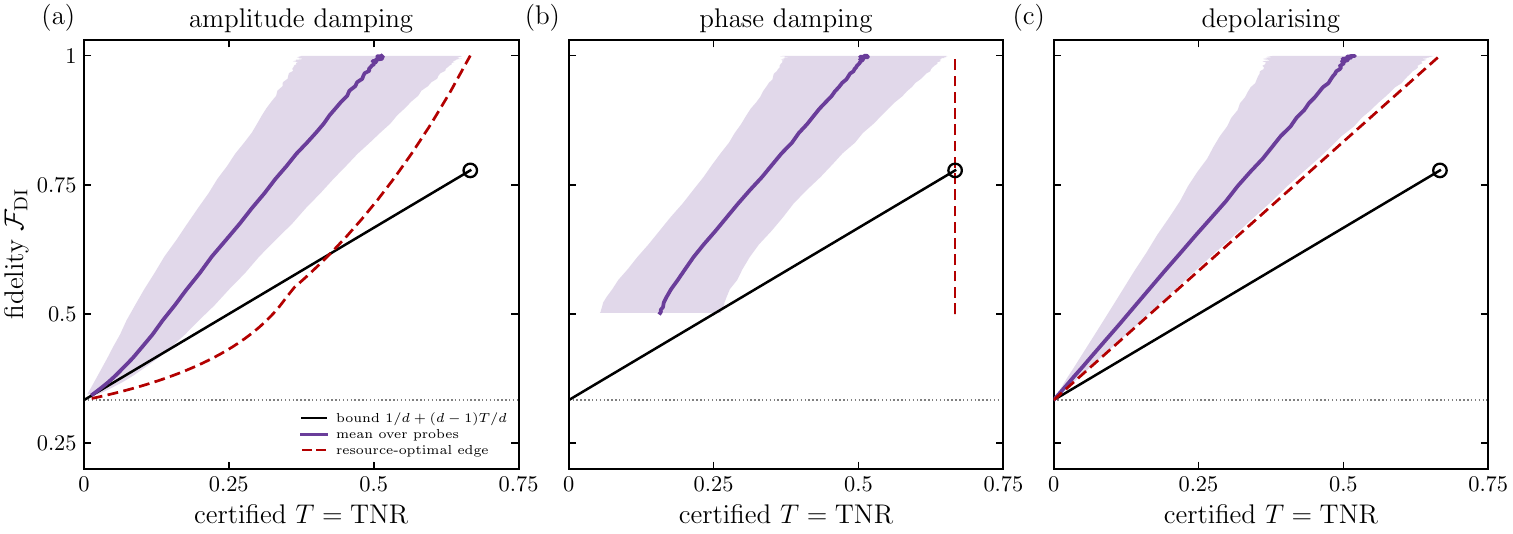}
\caption{Temporal-teleportation fidelity $\mathcal{F}_{\rm DI}=1/d+(d-1)p/d$
versus the certified $T=\TNR(\rho_A,\mathcal{E})$ at $d=3$, averaged over
random (Haar-pure and mixed) probe states, one column per channel. The
fidelity is fixed by the channel (the Heisenberg--Weyl twirl), while $T$
depends on the probe. A probe is a \emph{valid certifier} only when the
channel does not specially protect it; an invariant (channel-protected)
probe \emph{over-certifies}---its curve dips below the
bound~\eqref{eq:ditit_bound} (Proposition~\ref{prop:ditit_cert}). The
purple curve is the mean and the band is $\pm1\sigma$ over the random
probes; the black line is the bound; the red dashed curve is the exact
resource-optimal edge---the maximal certified $T$ over probes at each
channel time---which for phase damping is \emph{vertical} at
$T=(d-1)/d$, the $|0\rangle$+dephasing family holding $\TNR$ constant
while $\mathcal{F}_{\rm DI}$ ranges from the floor $2/(d+1)$ (here
$\tfrac12$) up to $1$; the circle marks the supremal
guaranteed point $\mathcal{F}_{\rm DI}=7/9$ at $T=(d-1)/d$, on the edge
of the certifiable regime. Both the mean and the resource-optimal edge
reach $\mathcal{F}_{\rm DI}=1$ only in the identity limit ($t\to0$, lossless
channel $p=1$): the edge at the maximal certified $T=(d-1)/d$, the mean at
the smaller average $T=\langle\mathrm{TNR}\rangle$. Perfect temporal
teleportation thus demands a \emph{lossless} channel; only the bound stays
strictly below $1$ (it tops at $7/9$ over the certifiable range
$T\le(d-1)/d$). The
\emph{typical} probe is an honest certifier---the mean lies on or above the
bound in every channel. The
individual-probe pathologies (specific eigenstate and superposition
inputs whose curves reveal the over-certification mechanism) are
displayed in the Supplementary Material.}
\label{fig:fdi}
\end{figure*}

\paragraph*{Reading the bound.}
At the boundary value $T=(d-1)/d$---realized by the
$|0\rangle$+phase-damping family at the dephasing time where
$p=T$---the certification bound~\eqref{eq:ditit_bound} meets the
exact fidelity~\eqref{eq:ditit_exact},
\begin{equation}\label{eq:ditit_canonical}
\mathcal{F}_{\rm DI}=\frac{1}{d}+\frac{d-1}{d}\cdot\frac{d-1}{d}
=\frac{d^2-d+1}{d^2},
\end{equation}
equal to $7/9$ at $d=3$; this is the supremal guaranteed fidelity, on
the edge of the certifiable regime (just beyond it, where the input is
$\mathcal{E}$-invariant, $T$ over-certifies and the bound fails;
section~\ref{sec:overcert}). Within the regime the
bound~\eqref{eq:ditit_bound} reduces to the classical baseline
$1/d$ as $T\to 0$ ($\rho_A\to\openone/d$). The protocol parallels
the spatial CHSH-bounded device-independent
quantum-key-distribution arguments of
Ac{\'\i}n \emph{et al.}~\cite{Acin2007PRL}.

\paragraph*{Fidelity, channel entanglement, and NSIT.}
The teleportation fidelity is set entirely by the channel: both
$\mathcal{F}_{\rm DI}$ and the entanglement robustness $\TER^{\rm sep}$
of the Choi state are channel-only quantities, and they move together.
For the depolarizing channel they are related \emph{affinely},
\begin{equation}\label{eq:fdi_ter}
\mathcal{F}_{\rm DI}=\frac{1}{d}
+\frac{\TER^{\rm sep}+(d-1)/d}{\,d+1\,}
\end{equation}
($\mathcal{F}_{\rm DI}=\tfrac12+\tfrac14\TER^{\rm sep}$ at $d=3$),
rising linearly from $\mathcal{F}_{\rm DI}=2/(d+1)$ ($=1/2$ at $d=3$) at
the separability threshold $\TER^{\rm sep}=0$ (depolarizing channel at
the separability boundary $p=1/(d+1)$, where $F_e=1/d$) to perfect
teleportation $\mathcal{F}_{\rm DI}=1$ at the identity channel
$\TER^{\rm sep}=d-1$. (The classical baseline $1/d$ itself is attained only
at the fully depolarizing point $p=0$; throughout the entanglement-breaking
region $p<1/(d+1)$ the robustness $\TER^{\rm sep}=0$, while
$\mathcal{F}_{\rm DI}$ ranges from $1/d$ up to $2/(d+1)$.)
The channel's entanglement robustness \emph{is} its
temporal-teleportation resource: more Choi entanglement means a higher
achievable fidelity, available in full when the devices are trusted.

The robustness tiers then certify this single channel fidelity under
decreasing trust. The fully device-independent floor uses $\TNR$
[equation~\eqref{eq:ditit_bound}], and the universal bound
$0\leq\TNR\leq\TSR\leq\TER^{\rm sep}+\tfrac12\mathcal{V}_{\rm NSIT}$
(Proposition~\ref{prop:univ_bound}) pins down what NSIT does here: a
maximally mixed probe ($\mathcal{V}_{\rm NSIT}=0$) certifies nothing,
$\TNR=0$, and the floor collapses to $1/d$; a probe that violates NSIT
unlocks certification, and $\mathcal{V}_{\rm NSIT}$ bounds how far the
steering resource can run \emph{above} the channel entanglement---the
signaling content of the assemblage that $\TER^{\rm sep}$, blind to
the input, cannot see. The NSIT violation is therefore the resource
that converts the channel's intrinsic teleportation capability
$\TER^{\rm sep}$ into a device-independently certifiable fidelity floor
through $\TNR$. These three quantities close a single operational
triangle---the fidelity $\mathcal{F}_{\rm DI}$ one wants, the channel
entanglement $\TER^{\rm sep}$ that supplies it, and the NSIT violation
$\mathcal{V}_{\rm NSIT}$ that certifies it---around which the rest of
the paper is organized.

Two further points are developed in the Supplementary Material: that
the NSIT violation $\mathcal{V}_{\rm NSIT}$ is a bona fide physical
resource---causal measurement back-action, accessed by the protocol's
public classical channel or by postselection, not a no-signaling
paradox---and that a one-sided device-independent variant (only Bob's
device untrusted) certifies the analogous steering-fraction bound
$\mathcal{F}_{\rm 1sDI}\geq\mathcal{F}_{\rm DI}$, strictly easier to
operate but requiring Alice's device to be trusted.

\subsection{Over-certification: when the certified value outruns the
channel}\label{sec:overcert}

It is worth pausing on a subtlety that decides when the certificate of
the previous subsection can be trusted. The certified number $T=\TNR$
and the fidelity $\mathcal{F}_{\rm DI}$ it is supposed to guarantee are
not the same kind of object, and now and then the certificate promises
more than the channel can deliver. We call this \emph{over-certification}.
The everyday version is certifying a leaky bucket by checking that it
holds a brick: the bucket passes perfectly, yet it still will not hold
water. The rest of this subsection explains why the same thing can
happen to a teleportation certificate, when it happens, and how to keep
it from happening.

\paragraph*{What is being compared.}
Two numbers meet in the bound~\eqref{eq:ditit_bound}. The first,
$\mathcal{F}_{\rm DI}$, is the fidelity for sending an \emph{unknown}
state through the channel; by Theorem~\ref{thm:ditit} it is an
\emph{average} over all the states one might send (this is what the
Heisenberg--Weyl twirl computes), so it depends on the channel alone.
The second, $T=\TNR(\rho_A,\mathcal{E})$, is read off from \emph{one}
chosen test state $\rho_A$ and the two fixed measurements made on it.
The certificate works by claiming that this single test never flatters
the channel, i.e.\ $T\leq p$, where $p$ is the channel's averaged
quality~\eqref{eq:ditit_exact}. \emph{Over-certification} is just the
case where this claim fails, $T>p$: the test looks better than the
channel really is, so the promised fidelity~\eqref{eq:ditit_bound} sits
above the true one~\eqref{eq:ditit_exact}. In Fig.~\ref{fig:fdi} it
shows up as a curve that dips \emph{below} the bound.

\paragraph*{Why it happens.}
The test is fair only if the channel treats the test state the way it
treats a typical state. Trouble starts when the channel \emph{protects}
the test state---keeps it intact far better than average. Then the test
keeps scoring well even as the channel quietly gets worse at everything
else. The clearest case is a test state the channel leaves completely
untouched (a \emph{fixed point}). Phase damping, for instance, slowly
erases coherence but leaves the basis states
$\ket{0},\ket{1},\ket{2}$ exactly in place. Certify with $\ket{0}$ and
the channel ``stores'' it perfectly forever, so $T$ stays at its maximum
$(d-1)/d$ for all times---while the channel is busy destroying the very
coherences that teleporting a superposition needs, dragging
$\mathcal{F}_{\rm DI}$ down to $2/(d+1)$ ($=1/2$ at $d=3$, already below
the best honest value $7/9$). We measured the one thing the channel is
perfect at and called it good: the brick in the leaky bucket.

Which states are ``protected'' depends on the channel (Fig.~\ref{fig:fdi}).
Phase damping protects its basis states, so those over-certify.
Amplitude damping instead drains everything toward the ground state, so
the even superposition $\ket{+}$ is the sheltered probe and
over-certifies, while the basis states stay honest. The depolarizing
channel treats every direction alike---it has no favorite to hide
behind---so no state can over-certify it and $T\leq p$ always. This is
the same fixed-point condition as in Proposition~\ref{prop:ditit_cert},
seen from the other side: a fixed point is the one place the channel is
not invertible on the test state, and there the clean rule
$\TNR=0\Leftrightarrow\rho_A=\openone/d$
(Corollary~\ref{cor:tnr_iff}) stops carrying over to what the channel
actually transmits.

\paragraph*{What it means in practice.}
This matters whenever $\TNR$ is used as a guarantee one cannot otherwise
check---temporal teleportation here, but equally a $\TNR$-certified
secure link or quantum-memory test. Three lessons follow.
\emph{(i) Choose the test state with care.} Avoiding a channel-fixed
test state is \emph{necessary} but, we find, not sufficient. The clean
sufficient rule is \emph{mixedness}: by Proposition~\ref{prop:mixsuff} a
probe mixed past the threshold $\lambda_{\min}(\rho_A)\geq 1/d-p/(d-1)$
certifies honestly on the standard families (and \emph{every} input does
for the depolarizing channel).
Since the user prepares $\rho_A$, this costs nothing---deliberately mixing
a little white noise into the probe guarantees honesty with no knowledge
of the channel. No test state is, however, a worst-case guarantee against
a completely arbitrary channel (next paragraph). \emph{(ii) The same effect sets the ceiling on
what can be promised.} The best certifiable value $T=(d-1)/d$ is reached
only right at the edge where the test stops being fair, $p=T$
[equation~\eqref{eq:ditit_canonical}]; that is exactly why the supremal
honest fidelity is $7/9$ at $d=3$, and why any genuinely fair test gives
a little less. So over-certification is not just a pitfall---it is the
boundary that \emph{pins} this supremum. \emph{(iii) It can be caught.}
Because over-certification needs a protected (frozen) test state, the
tell-tale sign is a certified $T$ that refuses to fall as the channel
degrades; a value stuck at $(d-1)/d$ is a red flag. Underneath all of
this is one clean message: what you \emph{certify} (the input's temporal
nonclassicality, $\TNR$) and what you \emph{use up} (the channel's
ability to carry coherence, $F_e$) are different resources, and they
agree only when the test state is not protected---which is precisely the
condition the certificate assumes.

\paragraph*{Can a clever choice of probes eliminate it?}
It is natural to hope that spreading the test over many well-chosen probes
removes over-certification. In the worst case it cannot be done---validity
is a property of the channel, not of the probe set. First, no \emph{single}
fixed probe is safe: for any $\rho_A$ a channel that fixes it (dephasing in
an aligned basis) over-certifies. Averaging over an ensemble does not rescue
it. A natural candidate is a $2$-design ($d^2$ states, a symmetric
informationally-complete measurement (SIC)~\cite{RenesBlumeKohout2004}, or
the $d+1$ mutually unbiased bases~\cite{WoottersFields1989}), because a $2$-design reproduces the
channel's average \emph{survival} fidelity, and hence $p$,
exactly~\cite{Scott2008}. But $\TNR$ is \emph{not} a survival fidelity---it
is unchanged by post-channel relabelings that destroy fidelity---so the
$2$-design property does not transfer. We verified directly ($d=2,3$) that a
complete $2$-design is over-certified both by generic channels and, once its
axes are not aligned with the measurement, by ordinary phase damping; an
informationally-complete \emph{non}-design behaves identically, so neither
$d^2$ nor the design property is the relevant quantity. The decisive
obstruction is an injective \emph{unitary}: the Heisenberg--Weyl shift has
$p=-1/(d^2-1)<0$ while $\TNR>0$ on every probe, over-certifying \emph{any}
ensemble. Honest certification is therefore confined to the twirl-covariant
standard families of Proposition~\ref{prop:ditit_cert}, and even there it
requires a \emph{sufficiently mixed} probe: by
Proposition~\ref{prop:mixsuff} the certificate is honest once
$\lambda_{\min}\geq 1/d-p/(d-1)$ (unconditionally for the depolarizing
channel). A near-pure probe aligned with the
channel's protected direction over-certifies; the honest mean of
Fig.~\ref{fig:fdi}(d--f) reflects exactly this---its spread is pulled
below the bound by the mixed members of the ensemble. Outside that channel
class no choice of probes restores the bound, and a tight characterization
of the certifiable channel class is left open.

\section{Numerical evidence}\label{sec:numerics}

We tested the framework on large $\rho_A$-adapted Monte-Carlo sweeps
of qutrit (state, channel, time) configurations: $10^6$ \emph{full}
configurations (each yielding TER, TSR, TNR, and the monitor
$\mathcal{V}_{\NSIT}$ from a $9\times9$ PDO SDP, an $\sim81$-strategy
TSR SDP, and an $\sim81$-vertex TNR LP) and $10^6$ TNR-only
configurations, with random states (Haar-pure and Ginibre-mixed at
ranks $2,3$), random channels, and continuous random times, plus the
canonical anchors $\{|0\rangle,|1\rangle,|2\rangle,\openone/3,
|+\rangle\}$. Every input is measured in its own eigenbasis and the
Fourier conjugate (the adapted scheme of
Corollary~\ref{cor:tnr_iff}); the SDPs are solved as prebuilt
DPP-parameterized problems and the LP via prebuilt HiGHS, parallelised
over $28$ cores (section~\ref{sec:soft}).

\subsection{State-boundness of TNR}\label{sec:numTNR}

Figure~\ref{fig:tnr_zero} (section~\ref{sec:TNRcoherence}) shows the
TNR-zero region across a $\rho_A$-adapted Monte-Carlo sweep of
$10^6$ configurations
(random states, channels, and continuous times). TNR is zero (within
$10^{-4}$) precisely on the configurations at $\rho_A=\openone/3$ and
is positive on every other state, vanishing \emph{continuously} as
$\rho_A\to\openone/3$: the radial sweep
$\rho_r=(1-r)\openone/d+r|0\rangle\langle0|$ in
figure~\ref{fig:tnr_zero}(c) shows TNR is \emph{numerically exactly
linear} in
$r$ (hence $\propto\sqrt{\mathrm{tr}(\rho_A^2)-1/d}$), so any apparent
finite floor is an artifact of how close a coarse sample lands to
$\openone/3$. Because the measurement is adapted to the eigenbasis of
$\rho_A$, the sweep tests exactly the iff of
Corollary~\ref{cor:tnr_iff} (TNR${}=0\Leftrightarrow\rho_A=\openone/3$),
with no reliance on avoiding the fixed-scheme zero family (remark after
Corollary~\ref{cor:three_way}); under adaptation the cloud also
\emph{saturates} the $\sqrt{\cdot}$ envelope across the whole purity
range. The corresponding NSIT-violation
$\mathcal{V}_{\NSIT}$ is exactly zero on the maximally mixed input
and positive---vanishing continuously---whenever the input has
nonzero purity offset, sustaining the operational reading:
device-independent temporal nonclassicality requires departure from
the
maximally-randomized input.

\subsection{Hierarchy gaps and NSIT}\label{sec:numHier}

\begin{figure*}[t]
\includegraphics[width=\linewidth]{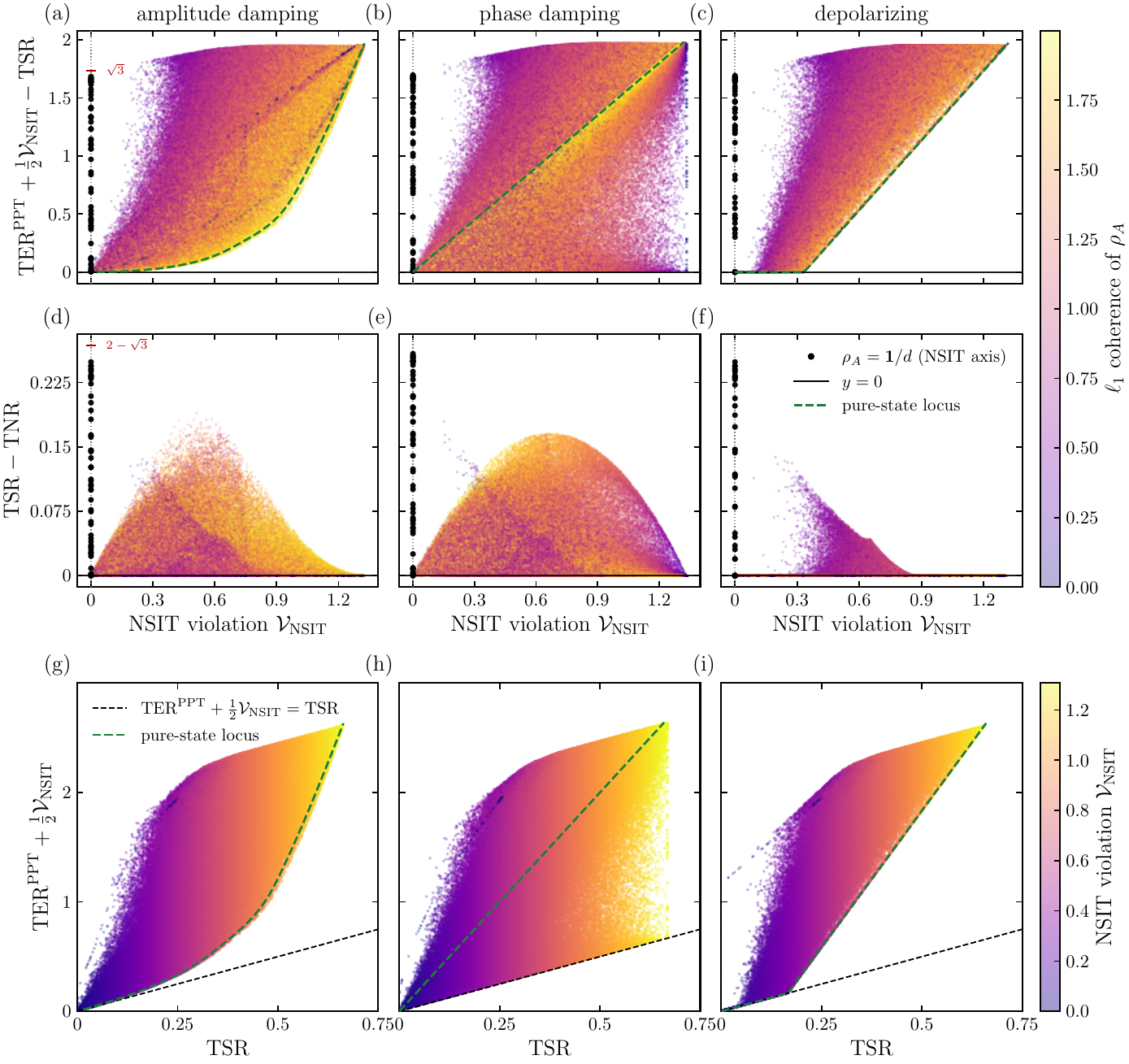}
\caption{The two universal bounds over a $\rho_A$-adapted Monte-Carlo
sweep of $10^6$ configurations (random states, channels and
continuous times) at $d=3$; columns are the three channels (amplitude
damping, phase damping, depolarizing). The NSIT-violation monitor
$\mathcal{V}_{\NSIT}$ [equation~\eqref{eq:Vnsit}] is the temporal
analogue of signaling, vanishing iff $\rho_A=\openone/d$
(Theorem~\ref{thm:nsit_state}). The plotted entanglement robustness is the
SDP-computable PPT proxy $\TER^{\rm PPT}\leq\TER^{\rm sep}$, so each bound
shown implies the corresponding separability-robustness statement of the
theorems. \emph{Panels (a)--(f):} the slacks
$\TER^{\rm PPT}+\tfrac12\mathcal{V}_{\NSIT}-\TSR$ [(a)--(c)] and
$\TSR-\TNR$ [(d)--(f)] versus $\mathcal{V}_{\NSIT}$, both non-negative
and each saturating the bold $y=0$ line---the lower hierarchy
$\TSR\geq\TNR$ (Theorem~\ref{thm:tsr_tnr}) and the upper bound
$\TSR\leq\TER^{\rm PPT}+\tfrac12\mathcal{V}_{\NSIT}$, which implies the
proven $\TSR\leq\TER^{\rm sep}+\tfrac12\mathcal{V}_{\NSIT}$
(Proposition~\ref{prop:univ_bound}); color encodes the
$\ell_1$-coherence of $\rho_A$ (top color bar). \emph{Panels
(g)--(i):} the same upper bound as a scatter of
$\TER^{\rm PPT}+\tfrac12\mathcal{V}_{\NSIT}$ against $\TSR$, colored by
$\mathcal{V}_{\NSIT}$ (bottom color bar); every configuration lies on
or above the diagonal (black dashed), saturated by the
$|0\rangle$+phase-damping family. The raw inequality
$\TER^{\rm PPT}\geq\TSR$ (dark $\mathcal{V}_{\NSIT}\!=\!0$ points) holds
only on $\openone/d$; the $\tfrac12\mathcal{V}_{\NSIT}$ correction
restores it for every input, with the coefficient $\tfrac12$ exact.
Black dots on the $\mathcal{V}_{\NSIT}=0$ axis are the
$\rho_A=\openone/d$ (NSIT) slice, where $\TNR=0$; their vertical spread
is the channel--time family at fixed maximally mixed input, bounded by
the identity-channel anchors $\TSR\to2-\sqrt3$ [(d)--(f)] and
$\TER^{\rm PPT}-\TSR\to\sqrt3$ [(a)--(c)] (red mark in the first panel of
each block, (a) and (d)), set by the maximally entangled Choi state of the
identity channel. The dashed green curve (rows (a)--(c) and (g)--(i)) is
the analytic pure-state locus; the black lines are the bounds (the diagonal
in (g)--(i), $y=0$ in (a)--(f)).
\emph{Legend:} black solid line $=$ bound
($y=0$ in (a)--(f); the diagonal $\mathrm{TER}^{\rm PPT}+\tfrac12
\mathcal{V}_{\NSIT}=\mathrm{TSR}$ in (g)--(i)); black dots on the
$\mathcal{V}_{\NSIT}=0$ axis $=$ the $\rho_A=\openone/d$ slice; dashed
green curve [(a)--(c),(g)--(i)] $=$ analytic pure-state locus.}
\label{fig:hier_nsit_compare}
\end{figure*}

All robustness quantities here and below are computed with
$\rho_A$-\emph{adapted} measurements (eigenbasis of $\rho_A$ and its
Fourier MUB), consistent with the theorems
(Corollary~\ref{cor:tnr_iff}), through an optimized pipeline: prebuilt
DPP-parameterized SDPs for $\TSR$ and $\TER^{\rm sep}$ re-solved
without recanonicalization, a prebuilt HiGHS LP for $\TNR$, and a
$28$-core sweep of $10^6$ Monte-Carlo configurations at $d=3$
(continuous random times). The lower hierarchy $\TSR\geq\TNR$ then
holds in \emph{every} configuration to machine precision (largest
negative excursion $-5.2\times10^{-8}$); this is
Theorem~\ref{thm:tsr_tnr} in numbers. The upper hierarchy
$\TER^{\rm sep}\geq\TSR$ holds in $78\,\%$ of configurations across the
three channels and is broken in $\sim$22\,\% of cases---exclusively
on NSIT-violating initial states (predominantly pure inputs). For the
\emph{depolarizing} channel the sampled Choi states are isotropic,
where PPT is necessary and sufficient for separability, so
$\TER^{\rm PPT}=\TER^{\rm sep}$ exactly and these breaks are
\emph{genuine}, not a proxy artifact; for amplitude- and phase-damping
inputs (where PPT need not certify separability) the breaks are equally
accounted for without the proxy, by the NSIT-conditionality of the
upper inequality (Proposition~\ref{prop:univ_bound})---they occur only
where $\mathcal{V}_{\rm NSIT}>0$, and the $\tfrac12\mathcal{V}_{\rm NSIT}$
correction restores $\TSR\leq\TER^{\rm sep}+\tfrac12\mathcal{V}_{\rm NSIT}$
for every input. On $\rho_A=\openone/d$ specifically, the hierarchy holds in
every sampled configuration ($100\,\%$),
confirming Theorem~\ref{thm:hierarchy_strict}. The NSIT-corrected
quantity $\TER^{\rm sep}+\tfrac12\mathcal{V}_{\NSIT}$ versus $\TSR$,
which lies on or above the diagonal for \emph{every} input (the
universal bound, Proposition~\ref{prop:univ_bound}), is shown in
panels (g)--(i) of figure~\ref{fig:hier_nsit_compare}; the non-physical PDO proxy
$\TER^{\PDO}$ tracks the same qualitative pattern but is not an
entanglement measure and is not used for the physical claims.

\subsection{Numerical verification across $d=2,3,4,5$}\label{sec:numD5}

\begin{table}[t]
\caption{Side-by-side comparison of the $\rho_A$-adapted Monte-Carlo
sweeps across $d=2,3,4,5$ (optimized parameterized-SDP/LP pipeline).
The odd primes $d=3,5$ are this paper's focus; the qubit
($d=2$, the original Leggett--Garg case) and the
prime power $d=4$ are included for completeness. All four dimensions
show the same qualitative structure: TNR vanishes only at $\openone/d$
(continuously, $\propto r$), the lower hierarchy $\TSR\geq\TNR$ holds to
machine precision (largest negative excursion ${\sim}10^{-7}$,
sharpening Theorem~\ref{thm:tsr_tnr}), and the upper inequality
$\TER^{\rm sep}\geq\TSR$ holds at $72$--$82\,\%$ with breaks confined
\emph{exclusively} to NSIT-violating inputs (the $\openone/d$
sub-ensemble satisfies it at $100\,\%$). The smaller $d=4,5$ full
samples reflect the $d^{d}$-strategy SDP cost. The $d=2,4$ analogues of
figure~\ref{fig:hier_nsit_compare} are given
in the Supplementary Material.}
\label{tab:d3d5_compare}
\begin{tabular}{p{0.34\linewidth} c c c c}
\hline\hline
Quantity & $d=2$ & $d=3$ & $d=4$ & $d=5$ \\
\hline
Measurement scheme & \multicolumn{4}{c}{$\rho_A$-adapted (2 MUBs)} \\
Full configs (TER,TSR,TNR) & $10^5$ & $10^6$ & $2\times10^4$ & $5.6\times10^4$ \\
TNR-only configs & --- & $10^6$ & --- & $10^5$ \\
Channels & \multicolumn{4}{c}{amp, phase, depol} \\
TNR vanishing at $\openone/d$ & \multicolumn{4}{c}{continuous ($\propto r$)} \\
$\mathcal{V}_{\NSIT}=0$ at $\openone/d$ & \multicolumn{4}{c}{$100\,\%$} \\
$\TSR\geq\TNR$ violations & \multicolumn{4}{c}{$0$ (machine)} \\
$\TER^{\rm sep}\geq\TSR$ rate & 72\,\% & 78\,\% & 81\,\% & 82\,\% \\
\hline\hline
\end{tabular}
\end{table}

We extend the framework to $d=5$ via the Heisenberg--Weyl
construction of the Wigner phase-space operators
$K_{(p,q)}=D_{(p,q)}\,P\,D_{(p,q)}^\dagger$, with $D_{(p,q)}=
\omega^{-pq/2}Z^pX^q$ ($\omega=e^{2\pi i/d}$, $1/2$ the modular
inverse of $2$ mod $d$) and $P|k\rangle=|-k\bmod d\rangle$. The
operators satisfy the standard Wigner relations $\sum_iK_i=d\openone$,
$\tr(K_iK_j)=d\delta_{ij}$, $\tr K_i=1$ to machine precision. We
run an adapted $d=5$ sweep with the same optimized pipeline:
$5.6\times10^4$ full configurations (TER, TSR, TNR) and $10^5$ TNR-only
configurations, over the three channels at continuous random times.
\begin{itemize}
\item Every $\TNR=0$ configuration corresponds exactly to
$\rho_A=\openone/5$; no other state attains $\TNR=0$, and TNR vanishes
\emph{continuously} as $\rho_A\to\openone/5$, following the same radial
law as at $d=3$ (figure~\ref{fig:tnr_zero}(c)). The
$\sqrt{\cdot}$ \emph{upper envelope} of figure~\ref{fig:tnr_zero}(a) is,
however, special to $d=3$: the denser $d=5$ sweep finds off-radial
inputs exceeding it by up to ${\sim}20\,\%$ (about $10\,\%$ of
configurations), so the radial phase-damping family is TNR-extremal at
fixed purity only at $d=3$, not a general upper bound.
The iff of Corollary~\ref{cor:tnr_iff} nonetheless holds at $d=5$ exactly
as at $d=3$ (away from the $e\to0$ channel limit, where $\TNR\to0$ for
every input).
\item $\TSR\geq\TNR\geq 0$ in all $5.6\times10^4$ configurations to machine
precision (largest negative excursion $-1.5\times 10^{-7}$),
confirming Theorem~\ref{thm:tsr_tnr} beyond $d=3$.
\item $\TER^{\rm sep}\geq\TSR$ holds in $82\,\%$ of configurations,
with breaks confined to NSIT-violating inputs --- the same qualitative
pattern as at $d=3$, the rate staying in the $72$--$82\,\%$ band across
all dimensions studied ($d=2,3,4,5$; Table~\ref{tab:d3d5_compare}).
\end{itemize}
The smaller full sample at $d=5$ reflects the SDP cost, dominated by
the LHV-polytope ($d^{4}$ deterministic strategies, cut by symmetry)
and the $d^4$ TNR constraint matrix---$\sim$$40\times$ slower per
configuration than at $d=3$; the cheap TNR-only LP still permits $10^5$
configurations.

\subsection{Software and reproducibility}\label{sec:soft}

The SDPs ($\TSR$, $\TER^{\rm sep}$) were solved with
CVXPY~\cite{cvxpy2016} / MOSEK as \emph{prebuilt DPP-parameterized}
problems---built once and re-solved per configuration without
recanonicalization---and $\TNR$ as a prebuilt HiGHS linear program;
the sweeps run in parallel across $28$ cores, with all measurements
adapted to the eigenbasis of $\rho_A$, using a closed-form
representation of the three standard qutrit channels
(Supplementary Material). $\TER^{\rm sep}$ is the PPT robustness of the
channel Choi state and needs no PDO. Random states were sampled by
(i)~Haar measure for pure states and (ii)~random complex Ginibre
matrices, trace-normalized, for the rank-2 and rank-3 mixed ensembles.
The full numerical pipeline, raw data, and figure-generation scripts
accompany this manuscript.

\section{Discussion}\label{sec:discussion}

\paragraph*{The temporal resource has no spatial single-system analogue.}
TER and TSR are jointly state--channel-bound, but TNR alone is
asymmetrically state-bound (Corollary~\ref{cor:tnr_iff}): single-qudit
mixedness in the measurement eigenbasis is the temporal-Bell resource,
intrinsic to one qudit with no spatial single-system counterpart. The
necessity proof uses only the identity channel and the measurement
back-action, so collapse on a non-maximally-mixed input followed by free
evolution already yields Bell-nonlocal two-time statistics---a
pure-measurement regime.

\paragraph*{Scope of the measurement scheme.}
The iff~\eqref{eq:thesis}, the three-way equivalence
(Corollary~\ref{cor:three_way}), and the DI-TIT bound are stated for
the canonical two-MUB scheme \emph{adapted} to $\rho_A$ (the
eigenbasis of $\rho_A$ and its Fourier conjugate);
Proposition~\ref{prop:unitary_invariance} extends
them to the $\rho_A$-commuting unitary orbit, a proper subgroup of
$\mathrm{SU}(d)$. The adaptation is necessary: under a \emph{fixed}
scheme not aligned to $\rho_A$ the TNR-zero region is strictly larger
than $\{\openone/d\}$ (a four-parameter family at $d=3$ for two fixed
MUBs; remark after Corollary~\ref{cor:three_way}), collapsing to
$\{\openone/d\}$ only for a tomographically complete set of $d+1$
MUBs---so the iff holds for any $\rho_A$-adapted two-MUB scheme, or for
a fixed complete scheme. The
DI-TIT protocol is \emph{prepare-and-measure} device-independent---the
test-round bases are prescribed and the guarantee covers only the
conditional send-round fidelity---and $\mathcal{V}_{\NSIT}$ is likewise
scheme-specific.

\paragraph*{NSIT violation as a marginal-level Bell test.}
The equivalence $\mathcal{V}_{\rm NSIT}=0\Leftrightarrow
\rho_A=\openone/d\Leftrightarrow\TNR=0$
(Corollary~\ref{cor:three_way}) makes NSIT a complete behavior-level
witness of temporal nonlocality, detectable from Bob's marginal
$\sum_a P(a,b|x,y)$ alone---no joint outcomes required, hence
experimentally cheaper than a full Bell test. The signal scales
linearly with the input's displacement from $\openone/d$
(figure~\ref{fig:tnr_zero}(c)), so fixing the input purity sets a
calibratable signal-to-noise target.

\paragraph*{Two state-over-time operators at $d\geq 3$.}
For qubits the Peres--Horodecki criterion collapses positivity, PPT,
and separability, and the non-contextual PDO $R^{\PDO}$
[equation~\eqref{eq:new_PDO}] coincides with the
Choi--Jamio\l{}kowski state. For $d\geq 3$ this collapse fails: the
operator-level PDO carries non-positivity information
($\TER^{\PDO}$), while the channel-level Choi carries separability
information ($\TER^{\rm sep}$). There is therefore no canonical
``state of a temporal process'' at $d\geq 3$; the choice of operator
depends on whether one asks about operator non-positivity or
channel-Choi entanglement.

\paragraph*{A two-time, pre- and post-selected reading.}
The two-time statistics underlying our quantifiers are naturally phrased in
the two-state vector formalism~\cite{ABL1964,AharonovVaidman}: the input
$\rho_A$ is a \emph{pre-selected} state and a late outcome $\Pi^B_{j,b}$ a
\emph{post-selection}, and the joint probabilities of
\eqref{eq:new_PDO} are exactly the Aharonov--Bergmann--Lebowitz
probabilities of that pre/post-selected ensemble---so the PDO $R^{\PDO}$ is
an operator repackaging of the two-state vector, with single-time
conditionals given by weak values~\cite{AharonovAlbertVaidman1988}. The
negativity measured by $\TER$ is then the operator-level signature of
\emph{anomalous} weak values---a pre/post-selected ensemble that no
single-time classical assignment reproduces---and its
causality-monotonicity (Theorem~\ref{thm:TERmonotone}) says the anomaly
cannot be manufactured forward in time---the same negativity that Comar
\emph{et al.}~\cite{Comar2026} tie to temporal entanglement. In this
language temporal teleportation is the standard protocol with the
space-like EPR pair replaced by a two-time maximally entangled state (the
identity Choi state): a maximally mixed pre-selection $\rho_A=\openone/d$
factorizes the two-state vector and removes all post-selection
leverage---precisely $\TNR(\rho_A,\mathcal{E})=0\Leftrightarrow
\rho_A=\openone/d$---while any departure supplies a non-trivial two-state
vector whose overlap sets the device-independent fidelity floor (up to
$7/9$ at $d=3$, capped by $\TNR\le(d-1)/d$).

\paragraph*{Over-certification as inference, not transmission.}
The same picture demystifies the over-certification trap of
Section~\ref{sec:overcert}: an injective unitary yields a pure
backward-evolving state with maximal two-time correlation, but that
correlation is read-only---an anomalous weak value reports \emph{inference}
under post-selection, not \emph{transmission}---so a device-independent
monitor built from two-time statistics can certify more than the channel
delivers. The reading is interpretive: post-selection is intrinsically
probabilistic whereas the teleportation channel we bound is deterministic,
and all device-independent and no-signaling-in-time ($\NSIT$) statements
above hold without reference to it.

\paragraph*{Relation to contextuality (Kochen--Specker).}
The $d\geq3$ structure above carries a contextuality footprint. For qubits
the PDO coincides with the channel Choi state and no obstruction arises; at
$d\geq3$ the operator $R^{\PDO}$ splits from the Choi state and its naive
matrix realization is eigenbasis-dependent---the $\mathrm{SU}(d)$
contextuality that underlies the Kochen--Specker
theorem~\cite{KochenSpecker1967,Budroni2021}---which we render non-contextual
through the Wigner realization of section~\ref{sec:noncontextual}. There is
moreover a suggestive connection to the two-time picture of the previous
paragraph: the non-positivity of $R^{\PDO}$, quantified by $\TER$, is a
negative quasiprobability of the pre/post-selected ensemble---of the kind
whose associated anomalous weak values are proofs of (generalized)
contextuality~\cite{Pusey2014,KunjwalLostaglioPusey2019,Spekkens2005,Kirkwood1933,Dirac1945}.
Turning this association into a quantitative certificate of contextuality from
$\TER$ alone we leave open. One distinction is in any case essential:
Kochen--Specker contextuality is \emph{state-independent}---it holds for every
input, including $\rho_A=\openone/d$---whereas the temporal-Bell resource is
\emph{state-dependent} and vanishes precisely there
($\TNR=0\Leftrightarrow\rho_A=\openone/d$). At the maximally mixed input the
system stays contextual while its temporal nonlocality is exactly zero, so what
we certify is not state-independent contextuality but the input's departure
from maximal mixedness.

\paragraph*{Dimension dependence.}
The state-boundness \emph{iff} needs only an adapted two-MUB pair (the
eigenbasis of $\rho_A$ and its Fourier conjugate), which exists in every
dimension, together with the back-action argument of
Theorem~\ref{thm:necessity}; it therefore persists for all $d$: the proof
needs only a Fourier MUB, which exists in every dimension, so it is not
restricted to primes---we give it explicitly for the qubit and the odd
primes and confirm it numerically at the prime power $d=4$
(Table~\ref{tab:d3d5_compare}), with composite $d$ such as $d=6$ expected
despite the open status of a complete MUB set there. Only
the PDO-based causality monotone $\TER$ requires the prime-power/odd
Wigner structure; a dimension-independent operator representation of
states over time~\cite{Fullwood2024,Jia2026} puts a Wigner-free treatment
within reach, the remaining subtlety being the \emph{channel} condition,
not the dimension. With growing $d$ the ceilings scale as
$\TSR,\TNR\leq(d-1)/d\to1$ and the teleportation supremum
$(d^2-d+1)/d^2\to1$, while the PPT--separability gap widens and the LHV
polytope grows as $d^4$ (the $d=5$ SDPs already cost $\sim40\times$ those
at $d=3$), motivating the parameterized-SDP pipeline of
section~\ref{sec:soft}.

\paragraph*{Multi-time resource regeneration.}
Intermediate measurements act as a quantum memory that re-injects the
resource: even a state-bound $\rho_{A_1}=\openone/d$ yields
$\TNR^{(n)}>0$ for $n\geq3$ (Theorem~\ref{thm:multitime}), because a
mid-circuit measurement leaves the off-$\openone/d$ state
$\Pi_{a_1|x_1}/d$. The iff is thus intrinsically two-time; multi-time
scenarios admit a richer resource theory.

\paragraph*{Operational stratification by trust.}
The hierarchy $\TER^{\rm sep}\geq\TSR\geq\TNR$ on $\openone/d$ stratifies
teleportation-in-time by trust assumption: trusted-device teleportation
consumes $\TER$, one-sided device-independent consumes $\TSR$, and fully
device-independent consumes $\TNR$. Only $\TNR$ is state-bound, so the DI
tier is \emph{fundamentally unavailable} on $\rho_A=\openone/d$ while the
upper tiers still operate. A quantum-battery reading---$\TNR>0$ iff the
$d$-level battery is not fully discharged
($\rho_A\neq\openone/d$)~\cite{Campaioli2024}---with capacity benchmarks
and charging cycles is given in the Supplementary Material.

\paragraph*{Conjecture: the $\TSR=\TNR$ submanifold on $|0\rangle$.}
Proposition~\ref{prop:tsr_tnr_collapse} gives
$\TSR=\TNR=(d-1)/d$ on $|0\rangle$+phase damping. Numerically the
equality $|\TSR-\TNR|<4\times 10^{-5}$ persists on $|0\rangle$ across
all three standard channels on a 50-point time grid
(tier dynamics in the Supplementary Material), suggesting
\begin{equation}\label{eq:tsr_eq_tnr_conj}
\TSR(|0\rangle,\mathcal{E}_t)\;=\;\TNR(|0\rangle,\mathcal{E}_t)
\qquad\text{for all }t\geq 0.
\end{equation}
The equality breaks on $|+\rangle$+amplitude damping but holds for
phase damping and depolarizing; a structural reading and the
connection to Alice-side determinism are given in the supplementary
material.

\paragraph*{Open questions.}
The conjectures of earlier drafts---universality across
$d\geq 3$, the closed-form $\TSR=\TNR=(d-1)/d$ on
$|0\rangle$+phase damping, and the
multi-time generalization---are closed (section~\ref{sec:numD5},
Proposition~\ref{prop:tsr_tnr_collapse},
Theorem~\ref{thm:multitime}); the conjectured NSIT-free upper
hierarchy is instead \emph{refuted}---no state-independent
entanglement-over-steering bound exists, the inequality being
intrinsically NSIT-conditional (section~\ref{sec:nsit-free}). Six further
targets remain. (i)~A continuous-variable analogue of the
state-boundness theorem, valid for oscillator and bosonic systems
where the input state is described by a Wigner function rather
than a density matrix on a finite Hilbert space; the
non-contextual basis of the present construction does not lift
directly to the infinite-dimensional setting and a separate
analysis is required. (ii)~An experimental certification on a photonic or trapped-ion
platform, isolating the state-channel-bound and channel-bound
contributions to temporal nonlocality, targeting the supremal guaranteed fidelity
$\mathcal{F}_{\rm DI}=(d^2-d+1)/d^2=7/9$ at $d=3$
[equation~\eqref{eq:ditit_canonical}]. (iii)~A characterization of
the channels for which the certification
bound~\eqref{eq:ditit_bound} holds beyond the standard families.
(iv)~A proof or counterexample to the
$\TSR=\TNR$-submanifold conjecture~\eqref{eq:tsr_eq_tnr_conj},
generalizing Proposition~\ref{prop:tsr_tnr_collapse} to all standard
channels on the computational eigenstate $|0\rangle$, with the
common value generically time-dependent.
(v)~A systematic survey of random CPTP maps testing whether the
sufficiency direction of Corollary~\ref{cor:tnr_iff} extends
beyond channels with diagonal computational-basis action; the
necessity direction (Theorem~\ref{thm:necessity}) already holds
for any channel.
(vi)~Extension to composite dimensions: the prime-power case $d=4$ is
reachable through the $\mathrm{GF}(4)$ stabilizer/Wigner construction
and its complete $5$-MUB set, whereas non-prime-power $d=6$ is
obstructed by the open problem of MUB existence (only three of the
putative seven are known); both are discussed in
section~\ref{sec:discussion}. A full characterization of
$\TNR^{(3)}=0$ on $\openone/d$ for general multi-time channel
sequences is a related open target.

\section{Conclusions}\label{sec:conclusions}

For a single qudit sent through a noisy channel, the temporal-Bell
resource is the input's departure from maximal mixedness, and nothing
else: $\TNR(\rho_A,\mathcal{E})=0\Leftrightarrow\rho_A=\openone/d$ for the
standard channel families (Proposition~\ref{prop:TNRstatebound},
Theorem~\ref{thm:necessity}, Corollary~\ref{cor:tnr_iff}), extending the
qubit result of~\cite{PhysRevA.98.022104} to every odd prime. The
mechanism is measurement back-action---the dependence of Bob's later
statistics on Alice's earlier choice of setting---which exists exactly
when the input is not already maximally mixed and survives even complete
decoherence of the channel. It is in this precise sense that temporal
nonlocality resides in the input state, not in the channel.

This resource is at once an operational power and an operational trap. As
a power, the input's robustness device-independently lower-bounds the
fidelity of teleporting an unknown qudit through time, with a supremal
guarantee $\mathcal{F}_{\rm DI}=7/9$ at $d=3$ (Theorem~\ref{thm:ditit},
Proposition~\ref{prop:ditit_cert})---placing temporal nonlocality on the
operational footing of the entanglement resource behind
device-independent quantum key
distribution~\cite{Acin2007PRL,Branciard2012,ArnonFriedman2018}, with a
route to certify and benchmark the quantum memories and time-bin channels
that carry secret correlations.

As a trap, precisely because the certified quantity is decoupled from the
channel's actual coherence transmission, it can certify more than the
channel can deliver. We named this \emph{over-certification} and resolved
it completely: a universal cap $\TNR\leq(d-1)/d$ with the exact
channel-resolved value $\max_k\tfrac12\|\mathcal{E}(\Delta_k)\|_1$
(Lemma~\ref{lem:cap}); honest certification for the depolarizing channel
unconditionally and, for any channel, once the probe is mixed past
$\lambda_{\min}\geq 1/d-p/(d-1)$ (Proposition~\ref{prop:mixsuff}); and the
impossibility---proven via an injective unitary that over-certifies
\emph{every} ensemble---of a channel-universal certifier. This is a
general caution: a single-system Bell-in-time test can pass maximally
while certifying a guarantee the system cannot honor, so
temporal-correlation-based certification must control the channel class or
the probe's mixedness.

These results are carried by a unified semidefinite-programming treatment
of the temporal entanglement, steering and nonlocality robustnesses, in
which TER is a causality monotone (Theorem~\ref{thm:TERmonotone}), the
lower hierarchy $\TSR\geq\TNR\geq0$ is universal
(Theorem~\ref{thm:tsr_tnr}), and the upper one $\TER^{\rm sep}\geq\TSR$ is
NSIT-conditional---failing off $\openone/d$ but replaced by the tight
universal inequality
$\TSR\leq\TER^{\rm sep}+\tfrac12\mathcal{V}_{\rm NSIT}$
(Proposition~\ref{prop:univ_bound}). The whole structure is verified on
$\rho_A$-adapted Monte-Carlo sweeps at $d=2,3,4,5$, the analytic results
proven for the qubit and the odd primes. Open targets are a
continuous-variable extension of state-boundness to bosonic systems and
an experiment certifying the $\mathcal{F}_{\rm DI}$ bound---and its
over-certification limit---on photonic or trapped-ion platforms.

\section{Methods}\label{sec:methods}

\paragraph*{Numerical solution of the SDPs.}
The robustness SDPs \eqref{eq:TER2}, \eqref{eq:TSR_SDP} were solved
with CVXPY~\cite{cvxpy2016} / MOSEK as prebuilt DPP-parameterized
problems (re-solved per configuration without recanonicalization), and
\eqref{eq:TNR_SDP} as a prebuilt HiGHS linear program. The
default solver tolerance was $\epsilon\sim10^{-8}$ on the primal--dual
gap; the high-precision
$\TSR=\TNR$ test of the Supplementary Material used $\epsilon=10^{-7}$
and a maximum of $200{,}000$ iterations. Numerical inequalities reported in
sections~\ref{sec:numerics}--\ref{sec:numD5} are valid up to a
conservative confidence band of approximately $10\,\epsilon$ at
each point, i.e.\ $\sim 10^{-6}$ at default precision and
$\sim 10^{-8}$ at the high-precision setting; statements such as
``$\TSR-\TNR\geq 0$ within solver tolerance'' should be read with
this band in mind. In particular, ``TNR$=0$ within solver
tolerance'' means $\TNR<10^{-4}$, the threshold used throughout; the
two-order-of-magnitude gap between this threshold and the
$\sim 10^{-6}$ accuracy band provides a conservative safety margin. Channels were implemented through their Choi
representations using QuTiP~\cite{Johansson2013}, and the
amplitude-damping channel for qutrits was integrated using the
Lindblad master equation with $c=\sqrt{\kappa}\,a_3$, where $a_3$
is the qutrit annihilation operator. Time grids of $40$ points
per decay constant were used.

\paragraph*{Measurement choice.}
TSR and TNR were evaluated on rank-one projective measurements
(PVMs) --- the two MUBs of $\rho_A$'s eigenbasis (computational and
Fourier) --- for all reported sweeps. The projective restriction is
consistent with the standard temporal-Bell setting of
reference~\cite{PhysRevA.98.022104} and matches the necessity argument
of Theorem~\ref{thm:necessity}, which is explicitly stated for
rank-one PVMs in $\rho_A$'s eigenbasis. The unitary-invariance
proposition (Proposition~\ref{prop:unitary_invariance}) shows the TNR-zero
region is preserved under unitary rotations commuting with
$\rho_A$. A systematic POVM extension and the higher-rank
Lüders-measurement variant are deferred to future work.

\begin{acknowledgments}
The authors thank Dawid Maskalaniec for his contribution to the
numerical simulations underlying the results presented in
Sec.~\ref{sec:results} and for the SDP implementations on which the
qubit and qutrit benchmarks are based; partial results from this
collaboration were reported in Ref.~\cite{Maskalaniec2021}. We
acknowledge funding from the European High Performance Computing Joint
Undertaking (EuroHPC JU) under Grant Agreement No.~101194322 (QEC4QEA),
co-funded by the Polish National Centre for Research and Development
(NCBiR) under decision No.~DWM/EuroHPC/2023/429/2025.
\end{acknowledgments}

\section*{Author contributions}
K.B. conceived the project, formulated the central thesis and the
device-independent-temporal-teleportation construction, and
supervised the work. P.T. developed the supporting hierarchy
machinery (the lower hierarchy, the NSIT-conditional upper
hierarchy; the multi-time generalization; and the closed-form
$(d-1)/d$ collapse)
and led the manuscript preparation. The numerical infrastructure
(section~\ref{sec:numerics}) was implemented jointly, including the
$d=5$ generalization via the Heisenberg--Weyl construction of
Wigner phase-space operators at general odd prime $d$. Both
authors discussed the results and approved the final version.

\section*{Data availability}
The numerical pipeline, raw outputs (the $\rho_A$-adapted
Monte-Carlo sweeps of TER, TSR, TNR, $\mathcal{V}_{\NSIT}$,
coherence and purity at $d=3,5$), the high-precision $\TSR=\TNR$ test,
and the figure-generation scripts accompany this manuscript.

\section*{Competing interests}
The authors declare no competing interests.

\clearpage
\onecolumngrid
\begin{center}
  {\large\bfseries Supplemental Material}\\[6pt]
  {\normalsize\itshape Temporal nonlocality of a qudit resides in the input state, not the channel, and certifies temporal teleportation up to a fundamental limit}\\[4pt]
  {\normalsize Karol Bartkiewicz and Patrycja Tulewicz}
\end{center}
\vspace{8pt}
\twocolumngrid
\setcounter{section}{0}
\setcounter{figure}{0}
\setcounter{table}{0}
\setcounter{equation}{0}
\renewcommand{\thesection}{S\arabic{section}}
\renewcommand{\thefigure}{S\arabic{figure}}
\renewcommand{\thetable}{S\arabic{table}}
\renewcommand{\theequation}{S\arabic{equation}}

\noindent
This supplement contains material moved from the main manuscript
to improve readability: (S1)~a quantum-battery interpretation of the
state-boundness results, including capacity benchmarks and multi-time
charging cycles; (S2)~structural discussions of the two conjectures
($\mathrm{CF}=T$ and $\TSR=\TNR$ submanifold); (S3)~the Wigner
phase-space construction of the non-contextual pseudo-density operator;
technical appendices---the semidefinite/linear programs, the
explicit damping channels, and the high-precision $\TSR=\TNR$ collapse
test; and (S4)~proofs of the technical lemmas, propositions and
theorems (the two central arguments are proved in the main text).

\section{Quantum-battery interpretation}

A quantum battery~\cite{Campaioli2024} stores extractable work in
the coherent off-diagonal structure of $\rho$ in the
energy eigenbasis; its capacity is the ergotropy
$W(\rho)=E(\rho)-\min_U E(U\rho U^\dagger)$, which vanishes for
$\rho=\openone/d$ and grows with departure from the maximally
mixed state. The state-boundness iff gives a direct reading: a temporal
quantum battery has a non-trivial DI resource ($\TNR>0$) if and
only if it is \emph{not} fully discharged ($\rho_A\neq\openone/d$).
The three-way equivalence promotes this
to a behavior-level capacity test: a battery's remaining
charge can be certified by measuring $\mathcal{V}_{\NSIT}$ on
Bob's marginal --- a single setting-dependence check rather than
full state tomography or work-extraction calorimetry.

The temporal dynamics of TNR under the standard channels
translate directly to battery discharge curves under three canonical loss
mechanisms:
\emph{(amplitude damping)} TNR$(t)$ decays smoothly to zero (true
energy dissipation; the canonical battery-discharge mode);
\emph{(depolarizing)} TNR$(t)$ decays exponentially toward zero
(symmetric isotropic noise);
\emph{(phase damping on $|0\rangle$)} TNR$(t)=(d-1)/d$
\emph{remains constant for all $t$}. The last case is
\emph{lossless storage}: pure dephasing does not erode the
device-independent capacity of a battery initialised in the
energy ground state. This is a sharp prediction:
storing quantum work in a computational eigenstate is invariant
under pure dephasing of the storage subsystem, with the DI
resource $\TNR$ pinned at its maximum operating value
$(d-1)/d$.

\subsection{Capacity benchmark}

For a $d$-level battery initialised in
the ground state with phase-damping environment, the DI
work-extraction protocol achieves
fidelity $\mathcal{F}_{\rm DI}^{\rm opt}=7/9$ at $d=3$ (directly
computed) or $(d^2-d+1)/d^2$ in general (injective-channel bound). These are concrete capacity
benchmarks for $d$-level quantum batteries with untrusted
read-out devices: the conservative lower bound $d/(2d-1)=3/5$
at $d=3$ sets the minimum
guaranteed work-extraction rate; the actual achievable $7/9$
sets the experimental target.

\subsection{Multi-time charging cycles}

The multi-time sufficiency failure
has a battery interpretation: even
when the initial state is the fully-discharged $\openone/d$,
intermediate measurements regenerate non-uniform coherence at
the post-measurement step, restoring $\TNR>0$ on subsequent
time slices. This is a quantum analogue of stochastic charging:
each projective interrogation injects coherence (selecting a
rank-1 projector) that the channel then discharges. The
two-time iff governs each two-time edge,
but the multi-time scenario admits charging-discharging cycles
of the temporal-nonlocality resource between consecutive
measurements.

\section{Structural discussion of the conjectures}

\subsection{From the $\mathrm{CF}=T$ relation to the twirl identity}

Earlier drafts conjectured $\mathrm{CF}(P)=T$ on the standard channel
families to obtain the DI-TIT bound $\mathcal{F}_{\rm DI}\geq
1/d+(d-1)T/d$. This bound is now established directly: the
Heisenberg--Weyl twirl of the send rounds gives the exact
fidelity $\mathcal{F}_{\rm DI}=1/d+(d-1)p/d$ in the channel's
depolarizing parameter $p$, and $\TNR\leq p$ for injective
channels, so the conjecture is superseded. It fails only for
channel-invariant certification (e.g.\ $|0\rangle$+phase damping),
where $\TNR$ is pinned while $p$ decays.

\subsection{$\TSR=\TNR$ submanifold: structural reading}

The asymmetry between $|0\rangle$
(equality holds for all three channels) and $|+\rangle$ (equality
holds for phase damping and depolarizing but breaks for amplitude
damping) reflects the channels' alignment with Alice's
computational eigenbasis. On $|0\rangle$, Alice's $x=Z$ outcome
is deterministic ($a=0$ with probability $1$) and the $x=F$
outcome is uniformly random; this minimal Alice-side variation
makes the HSM-optimal noise assemblage in the $\TSR$ infimum
coincide with the LHV-optimal noise behavior in the $\TNR$
infimum, collapsing the hierarchy gap. For $\rho_A=|+\rangle$ the
same structure holds when the channel preserves the
Fourier-eigenbasis-diagonal (phase damping, depolarizing) but
breaks for amplitude damping, which has a preferred direction
($|k\rangle\to|k-1\rangle$) that violates the $Z$/$F$ symmetry.

The conjectured generalization, $\TSR=\TNR$ on the channel-aligned
$|0\rangle$, identifies a broader \emph{robust-storage
submanifold} where the device-independent and one-sided-DI
capacities coincide. Whether the equality persists for higher-rank inputs
aligned with the channel's invariant subspace, or for general
channels with a unique pure attractor, is a related open
question.

\section{Wigner phase-space construction of the non-contextual PDO}
\label{si:wigner_pdo}

The state-over-time used in the main text is the channel Choi state;
the pseudo-density operator (PDO) $R^{\PDO}$ defined there is an
equivalent representation, made basis-independent at $d\geq3$ by the
non-contextual Wigner construction collected here.

\paragraph*{PDO operator expansion.} For a Hilbert--Schmidt-orthogonal
operator basis $\{G_i\}$ ($\tr(G_iG_j)=d\delta_{ij}$), a bipartite
state expands as
$\rho_{AB}=\mathcal{N}^{-1}\sum_{ij}C_{ij}\,G_i^A\otimes G_j^B$ with
correlation tensor $C_{ij}=\mathcal{N}\tr[\rho_{AB}(G_i^A\otimes G_j^B)]$.
The PDO moments of the main-text equation~\eqref{eq:new_PDO} are
$\langle G_i\otimes G_j\rangle=\sum_a a\,\tr[G_j\,\mathcal{E}(\rho^A_{i,a})]$
with $\rho^A_{i,a}=\Pi_{i,a}\rho_A\Pi_{i,a}$ ($\Pi_{i,a}$ the
eigenprojector of $G_i=\sum_a a\Pi_{i,a}$), which equals
$\sum_a a\,\tr[E_{B|A}(\rho^{A,T}_{i,a}\otimes G_j)]$ via the
Choi--Jamio\l{}kowski operator
\cite{jamiolkowski1972linear,choi1975completely}
\begin{equation}\label{si:E_BA}
E_{B|A}=\sum_{i,j}\dyad{i}{j}_A\otimes\mathcal{E}(\dyad{j}{i}_A)
=d\,\Lambda_{\mathcal{E}},
\end{equation}
of trace $d$ (transpose in the computational basis).

For odd prime $d$,
classically simulable states form the Wigner polytope
\begin{equation}\label{si:wignerpoly}
\text{Wigner polytope}=\{\rho:\tr(\rho K_i)\geq 0\;\forall i\},
\end{equation}
and the phase-space point operators $\{K_i\}_{i=0}^{d^2-1}$
\cite{Gibbons2004,Gross2006,Dawkins2015PRL} satisfy
$\sum_i K_i = d\openone$, $\tr(K_iK_j)=d\,\delta_{ij}$, $\tr K_i = 1$.
Setting $G_i=K_i$ in the PDO of the main text renders $R^{\PDO}$
independent of the chosen basis whenever $\rho_A$ lies in the Wigner
polytope. For a qutrit ($d=3$), an explicit choice is
{\small
\begin{align}
K_1 &= \begin{pmatrix} 1 & 0 & 0\\0 & 0 & 1\\0 & 1 & 0\end{pmatrix}\!,\;
K_2 = \begin{pmatrix} 0 & 1 & 0\\1 & 0 & 0\\0 & 0 & 1\end{pmatrix}\!,\;
K_3 = \begin{pmatrix} 0 & 0 & 1\\0 & 1 & 0\\1 & 0 & 0\end{pmatrix}\!,\nonumber\\
K_4 &= \begin{pmatrix} 1 & 0 & 0\\0 & 0 & \omega\\0 & \omega^* & 0\end{pmatrix}\!,\;
K_5 = \begin{pmatrix} 0 & \omega & 0\\\omega^* & 0 & 0\\0 & 0 & 1\end{pmatrix}\!,\;
K_6 = \begin{pmatrix} 0 & 0 & \omega\\0 & 1 & 0\\\omega^* & 0 & 0\end{pmatrix}\!,\nonumber\\
K_7 &= K_4^T,\quad K_8 = K_5^T,\quad K_9 = K_6^T,
\end{align}
}
with $\omega=(-1-i\sqrt{3})/2=e^{-2\pi i/3}$ (the complex conjugate of the
root $e^{2\pi i/d}$ used elsewhere in the paper; only $\omega$ and $\omega^*$
enter here, so the choice of root is immaterial). These differ from the
Patera--Zassenhaus, Gell-Mann, and spin-1 representations
\cite{Patera1988JMP,Gell-Mann1962PR,Hofmann2004PRA}, which are tied to
a particular eigenbasis. A complementary construction based on the
discrete Wigner quasi-probabilities \cite{WOOTTERS19871},
\begin{equation}
R^{\mathrm{Wigner}}=
\sum_{ij}\tr[E_{B|A}(K_i^A\otimes K_j^B)]\,\tr(\rho_A K_i)\,K_i^A\otimes K_j^B,
\end{equation}
is generally inequivalent to $R^{\PDO}$ for $d\geq 3$. We adopt
$R^{\PDO}$ as the operationally meaningful state-over-time; its
inequivalence with $R^{\rm Wigner}$ and with the Choi state
$\Lambda_{\mathcal{E}}$ on $\rho_A=\openone/d$ is a $d\geq 3$
phenomenon absent in the qubit case.

\section{SDP for temporal nonlocality robustness}\label{si:tnr}

We show the temporal-nonlocality robustness of the main text equals the linear program solved in the sweeps.
Imposing that $Q$ is a normalized behavior, the main-text definition reads
\begin{align}
\TNR =\;& \min\beta,\nonumber\\
\text{s.t. }& \frac{P(a,b|x,y)+\beta\,Q(a,b|x,y)}{1+\beta}\nonumber\\
&\quad =\sum_{\mu,\nu} r_{\mu\nu}\,D(a|x,\mu)D(b|y,\nu),\nonumber\\
& Q(a,b|x,y)\geq 0,\nonumber\\
& \sum_{a,b}Q(a,b|x,y)=1,\nonumber\\
& r_{\mu\nu}\geq 0,\quad \sum_{\mu\nu}r_{\mu\nu}=1,\quad\beta\geq 0.
\end{align}
Eliminating $Q$ via the equality constraint and introducing
$\tilde r_{\mu\nu}=(1+\beta)\,r_{\mu\nu}$ yields the
total-normalization identity
\begin{align}
(1+\beta)\sum_{x,y,a,b}\!\!P(a,b|x,y)
=\!\!\sum_{x,y,a,b,\mu,\nu}\!\!\tilde r_{\mu\nu}D(a|x,\mu)D(b|y,\nu),
\end{align}
hence the stated linear program.

\section{Robustness quantifiers: proxy variants and the steering
assemblage}\label{si:proxies}

This section collects the standard constructions underlying the
robustness quantifiers of the main text: the bookkeeping of the four
time-like entanglement robustnesses, the qubit partial-transpose form
of the entanglement-robustness SDP, and the temporal-steering
assemblage.

\subsection{The four time-like entanglement robustnesses}
\label{si:glossary}
We use four closely related but operationally distinct quantifiers for
time-like entanglement, summarised in table~\ref{tab:ter-glossary}.
The first, $\TER^{\PDO}$, is the causal counterpart of Vidal--Tarrach
spatial entanglement robustness applied to the non-contextual
pseudo-density operator $R^{\PDO}$ of equation~\eqref{eq:new_PDO} of
the main text. The other three are channel-Choi-based variants that
arise when one asks \emph{which} property of the Choi state
$\Lambda_{\mathcal{E}}$ controls the temporal hierarchy.

\begin{table}[h]
\caption{The four time-like entanglement robustness quantifiers used
in the main text. PDO refers to the non-contextual Wigner-basis
pseudo-density operator [equation~\eqref{eq:new_PDO} of the main text];
Choi refers to the Choi--Jamio\l{}kowski state of the channel
$\mathcal{E}$.}
\label{tab:ter-glossary}
\begin{tabular}{p{0.27\linewidth} p{0.62\linewidth}}
\hline\hline
Quantifier & What it measures (and where it appears) \\
\hline
$\TER^{\PDO}(R)$ & Robustness of $R^{\PDO}$ w.r.t.\ positivity
[equation~\eqref{eq:TER2} of the main text]\\
$\TER^{\rm Choi\text{-}pos}(\mathcal{E})$ & Robustness of $\Lambda_{\mathcal{E}}$ w.r.t.\
positivity (identically zero, since CP channels have positive
Choi states); included only as a no-go check\\
$\TER^{\rm sep}(\mathcal{E})$ & Robustness of $\Lambda_{\mathcal{E}}$ w.r.t.\
\emph{separability}; the proper analogue of the qubit hierarchy
quantifier of reference~\cite{PhysRevA.98.022104}\\
$\TER^{\rm PPT}(\mathcal{E})$ & Robustness of $\Lambda_{\mathcal{E}}$
w.r.t.\ positivity-of-the-partial-transpose; coincides with
$\TER^{\rm sep}$ whenever PPT is necessary and sufficient for
separability (e.g.\ on isotropic Choi states reached by depolarizing
channels), and bounds it from below in general\\
\hline\hline
\end{tabular}
\end{table}

The relations are: $\TER^{\rm Choi\text{-}pos}\equiv 0$;
$\TER^{\rm sep}\geq\TER^{\rm PPT}\geq 0$ in $d\geq 3$ (with both
inequalities saturated on PPT-sufficient subsets); and the
qubit-style upper hierarchy $\TER^{\rm sep}\geq\TSR$ holds on
$\rho_A=\openone/d$ (Theorem~\ref{thm:hierarchy_strict} of the main
text). $\TER^{\rm sep}$, equivalently the entanglement robustness
$\TER^{\rm Choi}:=\mathrm{ER}(\Lambda_{\mathcal{E}})$ of the genuine
Choi state, is the only physically justified temporal-entanglement
quantifier. $\TER^{\PDO}$ and $\TER^{\rm PPT}$ are computable proxies
with no independent physical justification---the former because the
PDO is generally non-positive (not a quantum state), the latter
because it is only a one-sided (PPT) bound on $\TER^{\rm sep}$---and
both can fall below $\TSR$ off $\openone/d$, where even
$\TER^{\rm sep}$ does (section~\ref{sec:nsit-free} of the main text).

\subsection{Partial-transpose form of the entanglement-robustness SDP}
\label{si:ppt-sdp}
For two qubits or qubit--qutrit bipartite systems the Peres--Horodecki
PPT criterion \cite{PhysRevLett.77.1413,HORODECKI19961} is a necessary
and sufficient separability condition, so the spatial entanglement
robustness [equation~\eqref{eq:TER2} of the main text] can be cast as
\begin{eqnarray}\label{eq:ER-2level}
\mathrm{ER} &=& \min\,\bigl(\tr\tilde{\mathfrak{R}}\bigr),\nonumber\\
\text{s.t. }&& R+\tilde{\mathfrak{R}}\succeq 0,\quad
\tilde{\mathfrak{R}}\succeq 0,\quad
(R+\tilde{\mathfrak{R}})^{\mathrm{PT}}\succeq 0.
\end{eqnarray}
The constraint set of \eqref{eq:ER-2level} is a strict subset of the
constraint set of the SDP \eqref{eq:TER2} of the main text, since
\eqref{eq:ER-2level} additionally requires the partial transpose to
be positive. Minimization over a smaller feasible region cannot give
a smaller optimum, hence $\mathrm{ER}\geq\TER^{\PDO}$. The same
argument applied to $\TER^{\rm sep}$ (the separability-based
quantifier of table~\ref{tab:ter-glossary}) gives
$\TER^{\rm sep}\geq\TER^{\PDO}$ in $d=2$, and the same direction is
expected in $d\geq 3$ when the relevant Choi state is on a
PPT-sufficient submanifold.

\subsection{Temporal-steering assemblage and hidden-state model}
\label{si:assemblage}
Consider Alice performing a POVM $\{M_{a|x}\}$ at time $t_A$ on
$\rho_A$ with outcome-$a$ probability $p(a|x)=\tr(\rho_A M_{a|x})$. The
post-measurement subnormalized state
$\rho_{a|x}(0)=\sqrt{M_{a|x}}\,\rho_A\sqrt{M_{a|x}}$ propagates through
the channel, yielding the assemblage
\begin{align}\label{eq:assemblage}
\tilde\rho_{a|x}&=\mathcal{E}\bigl(\sqrt{M_{a|x}}\rho_A\sqrt{M_{a|x}}\bigr)\nonumber\\
&=\tr_A\!\bigl[E_{B|A}\bigl(\sqrt{M_{a|x}}\rho_A\sqrt{M_{a|x}}\otimes\openone\bigr)\bigr].
\end{align}
The assemblage admits a hidden-state model (HSM) iff there exist
$p(\lambda)$ and $\rho_\lambda$ such that
\begin{equation}\label{eq:HSM}
\tilde\rho_{a|x}=\sum_\lambda p(\lambda)\,D(a|x,\lambda)\,\rho_\lambda
\quad\forall a,x,
\end{equation}
with $D(a|x,\lambda)$ deterministic; the temporal steering robustness
[equation~\eqref{eq:TSR_SDP} of the main text] is the minimal admixture
restoring an HSM. For the proofs and the numerical sweeps the
measurements are restricted to rank-one projectors
$\{M_{a|x}\}=\{\Pi_{a|x}\}$, so that
$\tilde\rho_{a|x}=\mathcal{E}(\Pi_{a|x}\rho_A\Pi_{a|x})$.

\section{Overview tables: contribution map and index of
results}\label{si:overview}

Table~\ref{tab:qubit_vs_this} positions the contributions of the main
text against the qubit case of reference~\cite{PhysRevA.98.022104}, and
table~\ref{tab:results-summary} indexes the principal results with
pointers to the relevant sections of the main text.

\begin{table}[h]
\caption{The temporal Bell hierarchy: the qubit case versus this
work. Entries marked --- are introduced in the main text.}
\label{tab:qubit_vs_this}
\begin{tabular}{p{0.23\linewidth} p{0.30\linewidth} p{0.37\linewidth}}
\hline\hline
 & Qubit, $d=2$~\cite{PhysRevA.98.022104} & This work, general $d$ \\
\hline
Dimension & $d=2$ & any $d$ with a Fourier MUB \\
Temporal hierarchy & nonlocal $\subseteq$ steerable $\subseteq$
entangled & $\TSR\geq\TNR$ universal; $\TER^{\rm sep}\geq\TSR$
NSIT-conditional (on $\openone/d$) \\
Asymmetric state-boundness & --- & $\TNR=0\iff\rho_A=\openone/d$ \\
Role of the channel & --- & dispensable across the noise families (channel-conditional) \\
Resource & --- & non-maximal mixedness of the input \\
Operational tier & --- & device-independent temporal teleportation,
$\mathcal{F}_{\rm DI}=1/d+(d-1)p/d\geq 1/d+(d-1)T/d$ \\
\hline\hline
\end{tabular}
\end{table}

\begin{table}[h]
\caption{Principal results of the main text.}
\label{tab:results-summary}
\centering
\footnotesize
\begin{tabular}{@{}p{0.18\linewidth} p{0.46\linewidth} p{0.14\linewidth}@{}}
\hline\hline
Result & Statement & Section \\
\hline
Theorem~\ref{thm:TERmonotone} & TER is a causality monotone &
\ref{sec:TER}\\
Proposition~\ref{prop:TNRstatebound} & TNR=0 on $\rho_A=\openone/d$
(sufficiency) & \ref{sec:TNRcoherence}\\
Proposition~\ref{prop:product_lhv} & Product LHV for diagonal-action
channels on $\openone/d$ & \ref{sec:TNRcoherence}\\
Theorem~\ref{thm:necessity} & TNR>0 on $\rho_A\neq\openone/d$
(necessity) & \ref{sec:TNRcoherence}\\
Corollary~\ref{cor:tnr_iff} & $\TNR=0\!\iff\!\rho_A=\openone/d$ &
\ref{sec:TNRcoherence}\\
Theorem~\ref{thm:tsr_tnr} & $\TSR\geq\TNR\geq 0$ universally
& \ref{sec:hierarchy}\\
Theorem~\ref{thm:hierarchy_strict} & $\TER^{\rm sep}\geq\TSR$ on
$\rho_A=\openone/d$ & \ref{sec:hierarchy}\\
Proposition~\ref{prop:tsr_tnr_collapse} & $\TSR=\TNR=(d-1)/d$ on
$|0\rangle$+phase & \ref{sec:NSITbreaks}\\
\S\ref{sec:nsit-free} & upper hierarchy is NSIT-conditional
(breaks off $\openone/d$) & \ref{sec:nsit-free}\\
Theorem~\ref{thm:multitime} & multi-time state-boundness (necessity) &
\ref{sec:multitime}\\
Theorem~\ref{thm:ditit} & DI-TIT fidelity bound
$\mathcal{F}_{\rm DI}\geq 1/d+(d-1)T/d$ (injective)
& \ref{sec:ditit}\\
\hline\hline
\end{tabular}
\end{table}

\section{Standard damping channels}\label{si:channels}

For a single qutrit $\rho=\sum_{ij}\rho_{ij}\dyad{i}{j}$, the
amplitude-damping ($\mathcal{E}_A$), phase-damping ($\mathcal{E}_P$),
and depolarizing ($\mathcal{E}_D$) channels read, with decay rate
$\kappa$,
\begin{widetext}
\begin{equation}
\mathcal{E}_A(\rho)=\left(
\begin{smallmatrix}
e^{-2\kappa t}\rho_{00}&e^{-3\kappa t/2}\rho_{01}&e^{-\kappa t}\rho_{02}\\[4pt]
e^{-3\kappa t/2}\rho_{10}& 2(e^{-\kappa t}\!-\!e^{-2\kappa t})\rho_{00}+e^{-\kappa t}\rho_{11} & \sqrt{2}(e^{-\kappa t/2}\!-\!e^{-3\kappa t/2})\rho_{01}+e^{-\kappa t/2}\rho_{12}\\[4pt]
e^{-\kappa t}\rho_{20}& \sqrt{2}(e^{-\kappa t/2}\!-\!e^{-3\kappa t/2})\rho_{10}+e^{-\kappa t/2}\rho_{21} & (e^{-2\kappa t}\!-\!2e^{-\kappa t}\!+\!1)\rho_{00}+(1\!-\!e^{-\kappa t})\rho_{11}+\rho_{22}
\end{smallmatrix}\right)\!,
\end{equation}
\end{widetext}
\begin{equation}
\mathcal{E}_P(\rho)=
\begin{pmatrix}
\rho_{00}&e^{-\kappa t}\rho_{01}&e^{-\kappa t}\rho_{02}\\
e^{-\kappa t}\rho_{10}&\rho_{11}&e^{-\kappa t}\rho_{12}\\
e^{-\kappa t}\rho_{20}&e^{-\kappa t}\rho_{21}&\rho_{22}
\end{pmatrix},
\end{equation}
\begin{equation}
\mathcal{E}_D(\rho)=e^{-\kappa t}\rho+\tfrac{1}{3}(1-e^{-\kappa t})\openone.
\end{equation}
The corresponding qubit channels are
\begin{equation}
\mathcal{E}_A(\rho)=\!
\begin{pmatrix}
e^{-\kappa t}\rho_{00}&e^{-\kappa t/2}\rho_{01}\\
e^{-\kappa t/2}\rho_{10}&(1-e^{-\kappa t})\rho_{00}+\rho_{11}
\end{pmatrix},
\end{equation}
\begin{equation}
\mathcal{E}_P(\rho)=\!
\begin{pmatrix}
\rho_{00}&e^{-\kappa t}\rho_{01}\\
e^{-\kappa t}\rho_{10}&\rho_{11}
\end{pmatrix},
\end{equation}
\begin{equation}
\mathcal{E}_D(\rho)=e^{-\kappa t}\rho+\tfrac{1}{2}(1-e^{-\kappa t})\openone.
\end{equation}

\section{Cascade amplitude damping for qutrits}\label{si:cascade}

The cascade amplitude-damping channel
$\ket{2}\xrightarrow{\gamma_2}\ket{1}\xrightarrow{\gamma_1}\ket{0}$ has
Kraus operators
\begin{align}
K_0 &= \dyad{0}+\sqrt{1-\gamma_1}\,\dyad{1}+\sqrt{1-\gamma_2}\,\dyad{2},\nonumber\\
K_1 &= \sqrt{\gamma_1}\,\dyad{0}{1},\nonumber\\
K_2 &= \sqrt{\gamma_2}\,\dyad{1}{2},
\end{align}
with $\sum_i K_i^\dagger K_i=\openone$.

\section{Full tier dynamics under the standard channels}\label{si:dynamics}

Figure~\ref{fig:dynamics_full} traces the three robustness tiers as a
function of time for the maximally mixed and the pure input, making
visible both the state-boundness of $\TNR$ and the NSIT-conditional
break of the upper hierarchy discussed in
section~\ref{sec:NSITbreaks} of the main text.

\begin{figure*}[t]
\includegraphics[width=\linewidth]{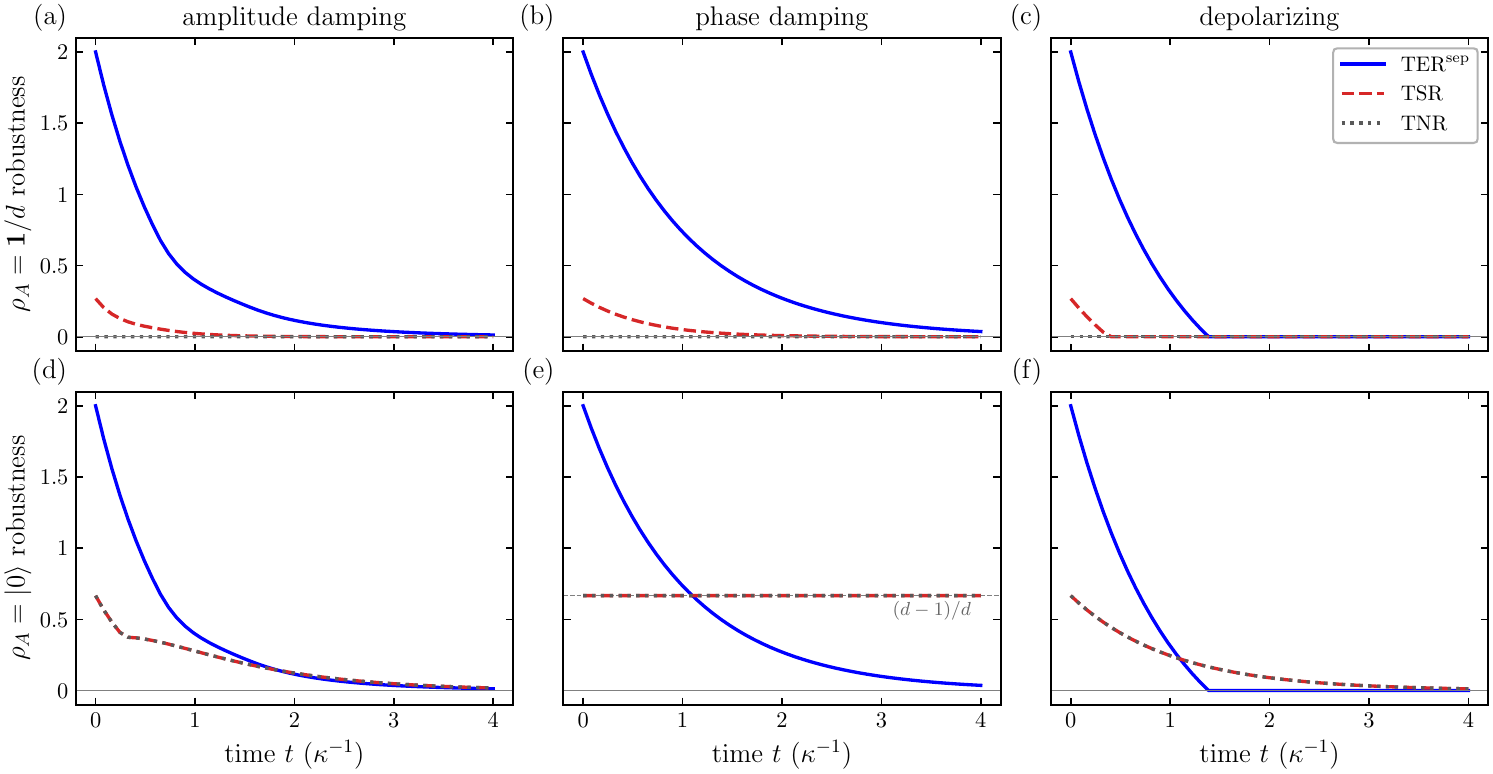}
\caption{Full dynamics of the genuine tiers $\TER^{\rm sep}$
(blue, solid; entanglement robustness of the channel Choi state,
computed via PPT) $\geq\TSR$ (red, dashed) $\geq\TNR$ (grey, dotted),
for a single qutrit on a dense 50-point time grid, under (left)
amplitude damping, (center) phase damping, (right) depolarization.
$\TER^{\rm sep}$ is a property of the \emph{channel} alone, hence
identical in both rows. Top row [(a)--(c)]: the maximally mixed input
$\rho_A=\openone/d$, which satisfies NSIT
(Theorem~\ref{thm:nsit_state} of the main text); the hierarchy
$\TER^{\rm sep}\geq\TSR\geq\TNR$ holds with $\TNR=0$ identically
(state-boundness, Corollary~\ref{cor:tnr_iff} of the main text). Bottom row
[(d)--(f)]: the pure, NSIT-violating input $\rho_A=|0\rangle$, where
$\TSR=\TNR$ collapse (dashed and dotted coincide); panel (e) shows the
closed-form plateau $\TSR=\TNR=(d-1)/d=2/3$ on $|0\rangle$+phase
damping (Proposition~\ref{prop:tsr_tnr_collapse} of the main text),
\emph{below which $\TER^{\rm sep}$ itself crosses} (here exact, since on
the depolarizing/isotropic Choi PPT${}={}$separability): the upper
inequality $\TER^{\rm sep}\geq\TSR$ genuinely \emph{fails} off
$\openone/d$ because $\TSR$ then carries a signaling contribution the
channel-only $\TER^{\rm sep}$ cannot bound
(section~\ref{sec:nsit-free} of the main text). The break is governed
by NSIT violation, not by the choice of TER quantifier.}
\label{fig:dynamics_full}
\end{figure*}

\section{High-precision $\TSR=\TNR$ collapse test}\label{si:numS4}

\begin{figure}[t]
\includegraphics[width=\linewidth]{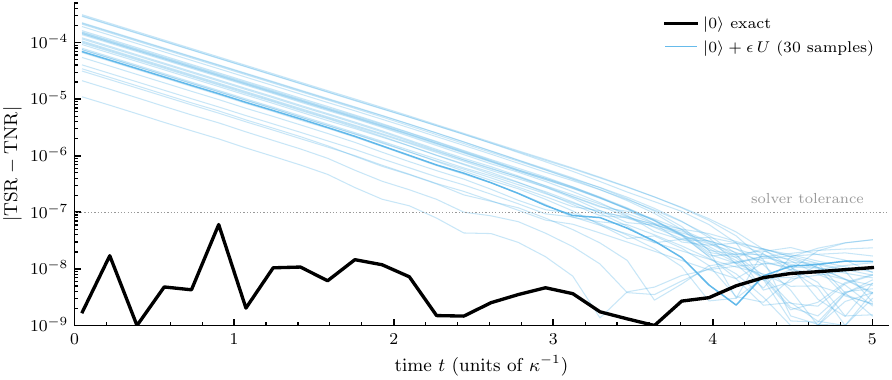}
\caption{High-precision test of the conjectured
$\TSR=\TNR$ equality on $|0\rangle$ + phase damping. The bold black
curve is the exact $|0\rangle$ initial state; thin curves are 30
small unitary perturbations ($\epsilon=0.05$). The exact case is
flat at solver tolerance ($\sim 10^{-8}$), confirming
$\TSR=\TNR$ \emph{exactly}; the perturbations escape this surface
immediately. The equality is therefore measure zero in the space of
initial states.}
\label{si:figs4}
\end{figure}

The numerical equality $\TSR=\TNR$ observed in
Fig.~\ref{fig:dynamics_full} on $|0\rangle$ + phase damping is
\emph{exact}, not coincidental. Solving the relevant SDPs at SCS
tolerance $10^{-7}$ on a fine time grid yields
$|\TSR-\TNR|\leq 6\times 10^{-8}$ for the precise initial state
$|0\rangle$, while a generic unitary perturbation moves the difference
to $\sim 10^{-4}$ within numerical tolerance. The equality is
therefore a measure-zero submanifold of (state, channel) space
(figure~\ref{si:figs4}). On $|0\rangle$ + phase damping the numerical
value is $\TSR=\TNR=2/3$ at $d=3$ and $\TSR=\TNR=4/5$ at $d=5$,
consistent with the closed-form expression $(d-1)/d$. An explicit
HSM ansatz (proof of the $\TSR=\TNR$ collapse proposition of the main text) saturates the upper
bound $\TSR\leq(d-1)/d$ for $|0\rangle$ + phase damping; SDP
optimality is confirmed numerically.

\section{Individual-probe pathologies of the DI teleportation fidelity}\label{si:fdi_probes}

Figure~\ref{fig:fdi} of the main text plots the fidelity averaged over
random probe states, showing that the \emph{typical} probe is an honest
certifier. Figure~\ref{fig:fdi_si} below resolves the same relation for
four individual probes---the computational eigenstates $|0\rangle$ and
$|1\rangle$, a mixed eigenstate $0.55|0\rangle\langle0|+0.45\,\openone/d$,
and the uniform superposition $|+\rangle$---and makes visible which probes
over-certify under which channel. The pattern is a direct consequence of
Proposition~\ref{prop:ditit_cert} of the main text: any channel-invariant
input over-certifies.

\begin{figure*}[t]
\includegraphics[width=\linewidth]{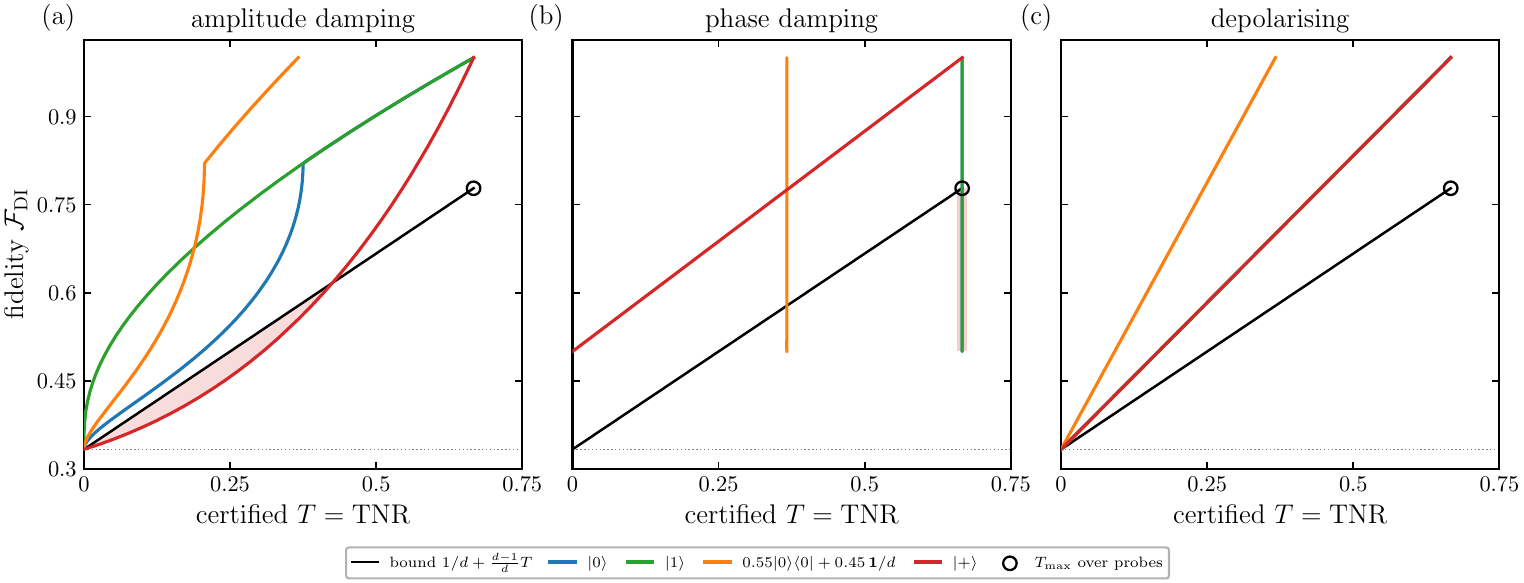}
\caption{Temporal-teleportation fidelity $\mathcal{F}_{\rm DI}$ versus the
certified $T=\TNR(\rho_A,\mathcal{E})$ at $d=3$, one column per channel,
for four individual probe states. Over-certification (curve dipping
\emph{below} the black bound~\eqref{eq:ditit_bound}) is highlighted by
the pink shading. (a)~Amplitude damping (biased toward the ground state):
the eigenbasis inputs $|0\rangle$, $|1\rangle$ and the mixed eigenstate
are honest, while the uniform superposition $|+\rangle$ over-certifies.
(b)~Phase damping: the eigenstates are \emph{fixed points} (lossless
storage), so $T$ is pinned at $(d-1)/d$ while $\mathcal{F}_{\rm DI}$
decays to $2/(d+1)$ (${=}1/2$ at $d=3$)---they over-certify below $7/9$;
the diagonal mixed probe is likewise $\mathcal{E}$-invariant and
over-certifies, whereas the non-invariant $|+\rangle$ is honest.
(c)~Depolarizing: isotropic (twirls to itself), so all pure states give
identical $T$ and every input certifies honestly. The circle marks the
supremal guaranteed point $\mathcal{F}_{\rm DI}=7/9$ at $T=(d-1)/d$.}
\label{fig:fdi_si}
\end{figure*}

\section{Analytic boundary of the achievable region}\label{si:envelopes}

The dense scatter clouds of Fig.~\ref{fig:hier_nsit_compare} display the
achievable region directly; we characterize its outer boundary here
\emph{analytically}, as a reference (it is no longer overlaid on the
figure). The boundary admits \emph{no global closed form}: it is traced by a
continuum of mixed inputs whose optimal spectrum \emph{and} eigenbasis
orientation both depend on the channel. We compute it as an optimization over
the generalized input state,
\begin{equation}\label{si:bopt}
B(x)=\max_{\rho\succeq0,\,\tr\rho=1,\,t\ge0}
\Big[\TER^{\rm sep}(\mathcal{E}_t)+\tfrac12\mathcal{V}(\rho,\mathcal{E}_t)\Big]
\ \text{s.t.}\ \TSR(\rho,\mathcal{E}_t)=x,
\end{equation}
and analogously for the dome $D(\mathcal{V})=\max\,[\TSR-\TNR]$. Symmetry
reduces the search: $\TER^{\rm sep}$ is input-independent (precomputed per
$t$); the metrics are basis-covariant, so $\rho$ is parametrized by its
spectrum together with an eigenbasis orientation modulo the channel's
stabilizer (full $U(d)$ for depolarizing, hence spectrum only; the
ground-state/computational-basis cosets for amplitude/phase damping). We solve
\eqref{si:bopt} by a dense symmetry-reduced sweep (real and complex
orientations) followed by gradient-free local polishing; the resulting
boundary contains all $10^6$ ($d=3$) and $5.6\times10^4$ ($d=5$)
Monte-Carlo samples to within solver tolerance. The exact analytic results
below give its value at special points and families that lie on the boundary;
every expression was verified against the $\rho_A$-adapted sweep to
$\leq 10^{-7}$.

\paragraph*{Channel parametrisation.} With $\kappa=1$ the depolarizing channel
(Sec.~\ref{si:channels}) carries survival parameter $\eta=e^{-t}$,
$\mathcal{E}_\eta(\rho)=\eta\rho+(1-\eta)\openone/d$, and the cascade
amplitude-damping channel (Sec.~\ref{si:cascade}) carries single-step damping
$\gamma=1-e^{-t}$.

\paragraph*{The NSIT monitor is sandwiched.} Writing
$\mathcal{V}\equiv\mathcal{V}_{\rm NSIT}$, the adapted two-MUB structure gives,
for \emph{every} input and channel,
\begin{equation}\label{si:sandwich}
\TNR\;\le\;\tfrac12\mathcal{V}\;\le\;\TSR ,
\end{equation}
with no violation over $>10^{3}$ random configurations (both inequalities tight
to $10^{-7}$). The lower inequality is an equality \emph{iff} the channel is the
identity or depolarizing; the upper one is an equality throughout the decohered
regime (amplitude/phase damping at any $t>0$; depolarizing for
$\eta\lesssim0.8$) and is strict only near the identity, where a steering excess
$\Delta\equiv\TSR-\tfrac12\mathcal{V}\ge0$ opens. Since the two plotted gaps are
$y_{\rm top}=\TER^{\rm sep}+\tfrac12\mathcal{V}-\TSR=\TER^{\rm sep}-\Delta$ and
$y_{\rm bot}=\TSR-\TNR$, relation~\eqref{si:sandwich} organises both rows.

\paragraph*{Maximally mixed probe (exact corner).} At the identity channel the
maximally mixed input has $\TNR=0$ (Proposition~\ref{prop:TNRstatebound}) and
\begin{equation}\label{si:tsr0}
\TSR(\openone/d,\,\mathrm{id})\;=\;2-\sqrt3\;=\;0.2679492\ldots\quad(d=3),
\end{equation}
the $\mathcal{V}\to0$ anchor of the bottom-row envelope. \emph{Proof}
(steering-robustness SDP dual): with two MUB settings the dual reads
$\max\sum_{x,a}\tr(F_{a|x}\sigma_{a|x})-1$ over $F_{a|x}\succeq0$ with
$F_{a|0}+F_{b|1}\preceq\openone$ for all $a,b$, where
$\sigma_{a|x}=\tfrac13\ket{\psi_{a|x}}\!\bra{\psi_{a|x}}$. The rank-one ansatz
$F_{a|x}=\alpha\ket{\psi_{a|x}}\!\bra{\psi_{a|x}}$ with the MUB overlap
$|\langle\psi_{a|0}|\psi_{b|1}\rangle|=1/\sqrt3$ has constraint top eigenvalue
$\alpha(1+1/\sqrt3)$, so feasibility forces
$\alpha=\sqrt3/(\sqrt3+1)=(3-\sqrt3)/2$; the objective is
$6\alpha\cdot\tfrac13-1=2\alpha-1=2-\sqrt3$, and a matching primal certifies
optimality.\hfill$\square$

\paragraph*{Exact closed forms} (eigenvalues $\lambda_k$ of $\rho_A$):
\begin{align}
\mathcal{V}_{\rm NSIT}(\rho,\mathcal{E}_\eta)
  &= \eta\textstyle\sum_k\lvert\lambda_k-1/d\rvert, \label{si:Vdepol}\\
\TER^{\rm sep}(\mathcal{E}_\eta)
  &= \tfrac{d-1}{d}\,[(d+1)\eta-1]_+ = \tfrac23[4\eta-1]_+, \\
\TNR=\TSR &= \tfrac12\mathcal{V}_{\rm NSIT}
  \quad(\text{depolarizing, all }\rho), \\
\TNR(\rho_r,\mathrm{id}) &= \tfrac{d-1}{d}\,r, \\
\rho_r &=(1-r)\openone/d+r\ketbra{\psi}{\psi}\ (\text{any pure }\psi), \\
\TNR(\ket\psi,\mathrm{id}) &= \tfrac{d-1}{d}\quad\text{for every pure }\psi .
\end{align}
For the basis state $\ket0$ under amplitude damping the monitor is elementary
with a kink at $\gamma=\tfrac14$,
\begin{equation}
\begin{gathered}
\mathcal{V}_{\ket0}(\gamma)=
\begin{cases}\tfrac43(1-\gamma)^2,&\gamma\le\tfrac14,\\[3pt]
\tfrac23(1-\gamma)(1+2\gamma),&\gamma\ge\tfrac14,\end{cases}\\[4pt]
\TSR_{\ket0}=\TNR_{\ket0}=\tfrac12\mathcal{V}_{\ket0},
\end{gathered}
\end{equation}
and the amplitude-damping Choi negativity is
$N(\gamma)=\tfrac{e}{3}\big(1+e+\sqrt{e^2-e+1}\big)$, $e=1-\gamma$.

\paragraph*{Envelopes, panel by panel.}
\begin{itemize}
\item \emph{High-$\mathcal{V}$ boundary segment: the identity limit.} Over the
large-$\mathcal{V}$ portion the boundary is traced by the identity-channel
family $\rho_r$, independent of the pure direction $\ket\psi$ (it
depends only on the spectrum), with $\mathcal{V}=2(d-1)r/d$. There
$\TER^{\rm sep}=d-1$ and $\TNR=\tfrac12\mathcal{V}$, so
$y_{\rm top}+y_{\rm bot}=d-1$ exactly, with
$y_{\rm bot}(\mathcal{V})=\TSR_{\rm id}(\tfrac34\mathcal{V})-\tfrac12\mathcal{V}$
running from $2-\sqrt3$ at $\mathcal{V}\to0$ down to $0$ at
$\mathcal{V}^\ast\approx0.943$, beyond which $y_{\rm top}=d-1=2$. Here
$\TSR_{\rm id}$ is a genuine SDP value with \emph{no elementary closed form}
(tabulated below).
\item \emph{Depolarizing top-row pure-state locus (exact):}
$y_{\rm top}=\max(0,\,2\mathcal{V}-\tfrac23)$, kink at $\mathcal{V}=\tfrac13$.
This is the locus of pure inputs; it is \emph{interior} to the cloud (mixed
inputs fall $\sim0.05$ below), hence not the true lower envelope, which has no
elementary form.
\item \emph{Bottom-row floor (exact):} $y_{\rm bot}=0$, attained by all pure
inputs and by basis / maximally mixed inputs.
\item \emph{Amplitude-damping lower (top) and dome (bottom):} traced by $\ket+$
respectively $\ket0$--$\ket1$ coherent inputs swept over $\gamma$; exact only
\emph{parametrically} ($\mathcal{V}_+(\gamma)$ is an irreducible cubic, while
$\TER^{\rm sep}(\gamma)$ and $\TNR_{\rm amp}(\gamma)$ are SDP/LP values). The
bottom dome peaks at $y\approx0.148$ near $\mathcal{V}\approx0.68$.
\end{itemize}

\noindent Identity-limit envelope (non-elementary part), $d=3$:
\begin{center}\small
\begin{tabular}{l|ccccccc}
$\mathcal{V}$ & $0$ & $0.27$ & $0.53$ & $0.67$ & $0.80$ & $0.93$ & $\geq0.94$\\
\hline
$\TSR_{\rm id}$ & $0.2679$ & $0.2829$ & $0.3284$ & $0.3637$ & $0.4089$ & $0.4667$ & $\tfrac12\mathcal{V}$\\
$y_{\rm bot}$ & $0.2679$ & $0.1495$ & $0.0618$ & $0.0304$ & $0.0089$ & $0.0000$ & $0$\\
\end{tabular}
\end{center}

\noindent These envelopes \emph{refine but do not alter} the central results:
the state-boundness equivalence (Corollary~\ref{cor:tnr_iff}), the universal
bound $\TSR\le\TER^{\rm sep}+\tfrac12\mathcal{V}$
(Proposition~\ref{prop:univ_bound}, whose constant $\tfrac12$ is reconfirmed
exactly tight here), and the strict hierarchy
(Theorem~\ref{thm:hierarchy_strict}) all stand; relation~\eqref{si:sandwich} is
a sharpening, inserting $\tfrac12\mathcal{V}_{\rm NSIT}$ between $\TNR$ and
$\TSR$.

\section{Even and prime-power dimensions ($d=2,4$)}\label{si:dvar}

For completeness we reproduce the hierarchy-gap and universal-bound panels
of Fig.~\ref{fig:hier_nsit_compare} at the two non-odd-prime
dimensions of Table~\ref{tab:d3d5_compare}: the qubit $d=2$ and the prime power
$d=4$. The qualitative picture is identical to the odd primes $d=3,5$: TNR
vanishes only at $\openone/d$, the hierarchy $\TNR\le\TSR$ holds throughout, and
the universal bound $\TSR\le\TER^{\rm sep}+\tfrac12\mathcal{V}_{\rm NSIT}$ is
respected (saturated by the $\ket0$+phase-damping family). The envelopes carry
over with the $d$-dependent identity anchor
$\TSR(\openone/d,\mathrm{id})=(\sqrt d-1)/(\sqrt d+1)$ [$=3-2\sqrt2\approx0.172$
at $d=2$ and $1/3$ at $d=4$; cf.\ $2-\sqrt3$ at $d=3$, eq.~\eqref{si:tsr0}],
maximal NSIT violation $2(d-1)/d$, and ceiling $\TER^{\rm sep}=d-1$. (Even $d$ is
outside the odd-prime scope of the state-boundness \emph{theorem}, but the
numerics show the same phenomenology; $d=4$ uses the prime-power MUB pair.)

\begin{figure*}[t]
\includegraphics[width=\linewidth]{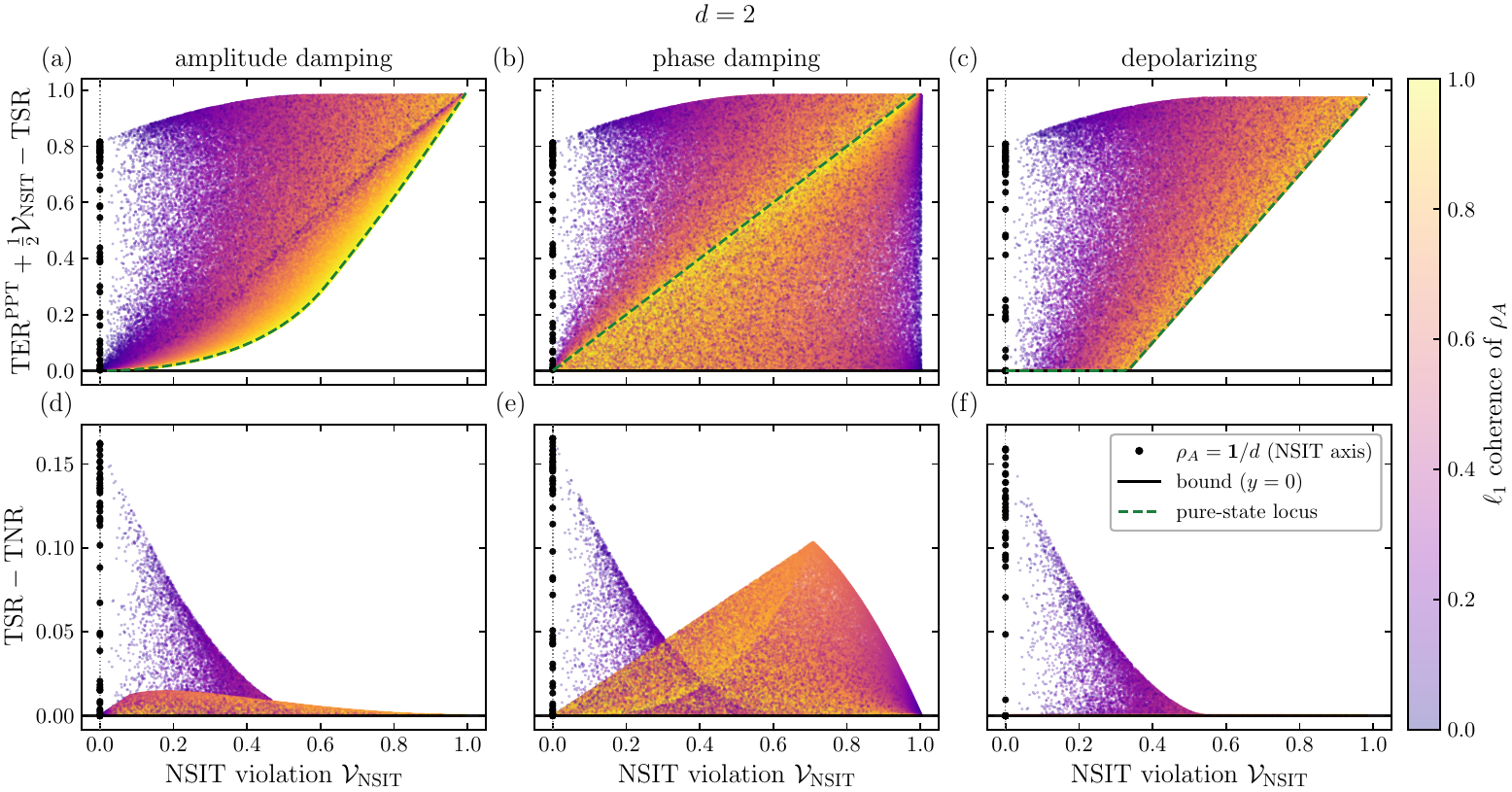}\\[4pt]
\includegraphics[width=\linewidth]{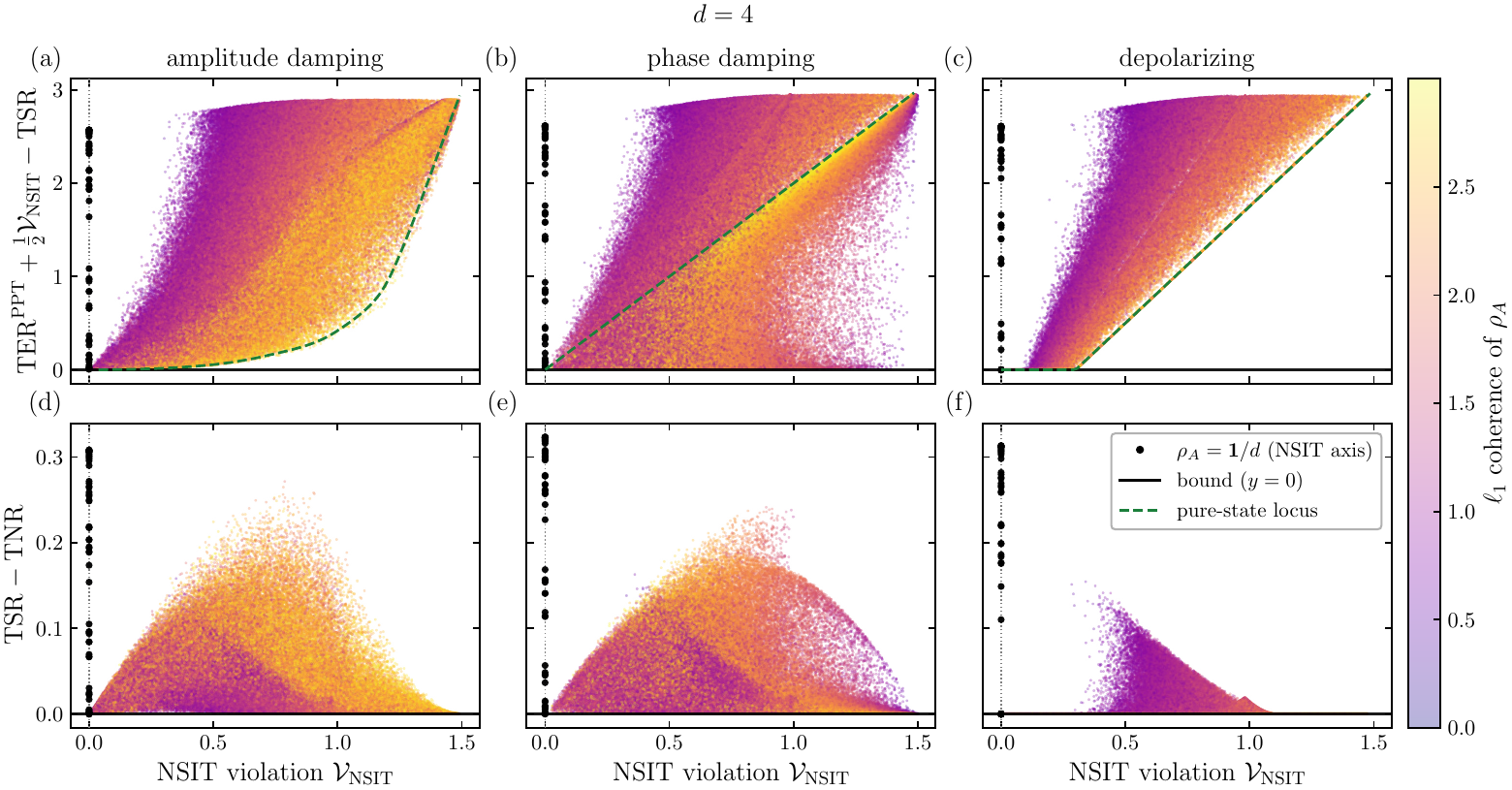}
\caption{Hierarchy gaps versus $\mathcal{V}_{\rm NSIT}$ at $d=2$ (qubit, top)
and $d=4$ (prime power, bottom)---the analogues of
Fig.~\ref{fig:hier_nsit_compare}. Dashed green: the analytic pure-state
locus; black dots: the $\rho_A=\openone/d$ axis ($\TNR=0$).}
\label{fig:dvar_hier}
\end{figure*}

\begin{figure*}[t]
\includegraphics[width=\linewidth]{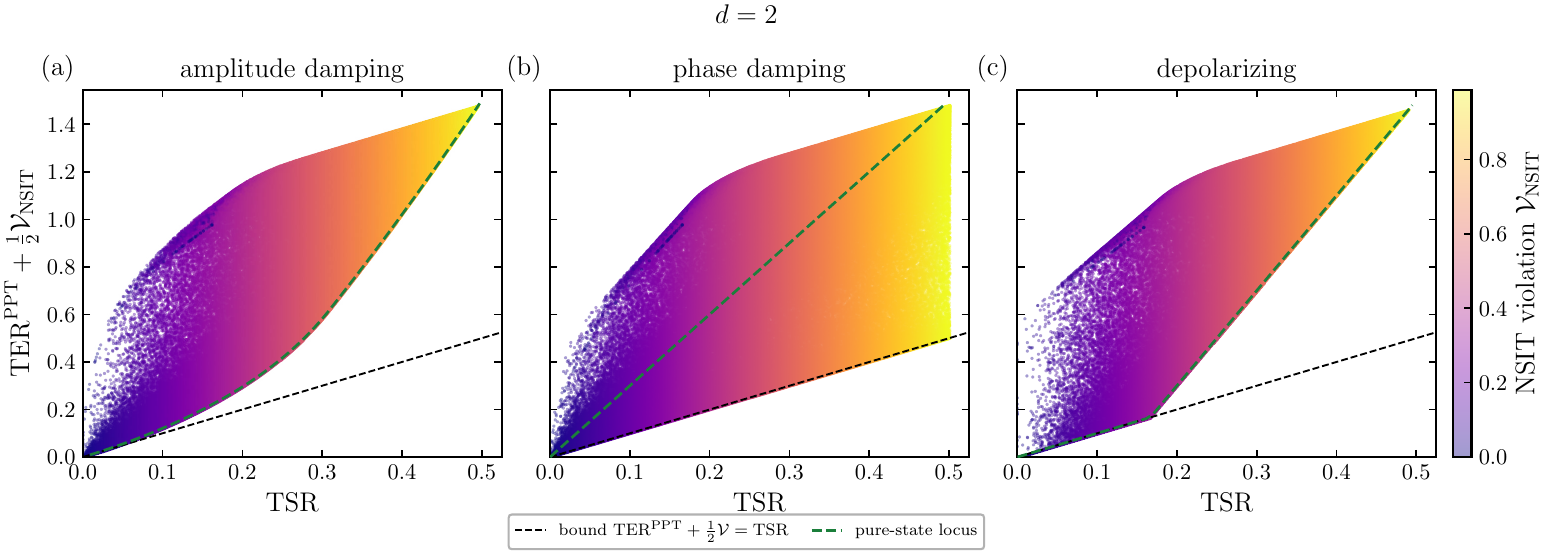}\\[4pt]
\includegraphics[width=\linewidth]{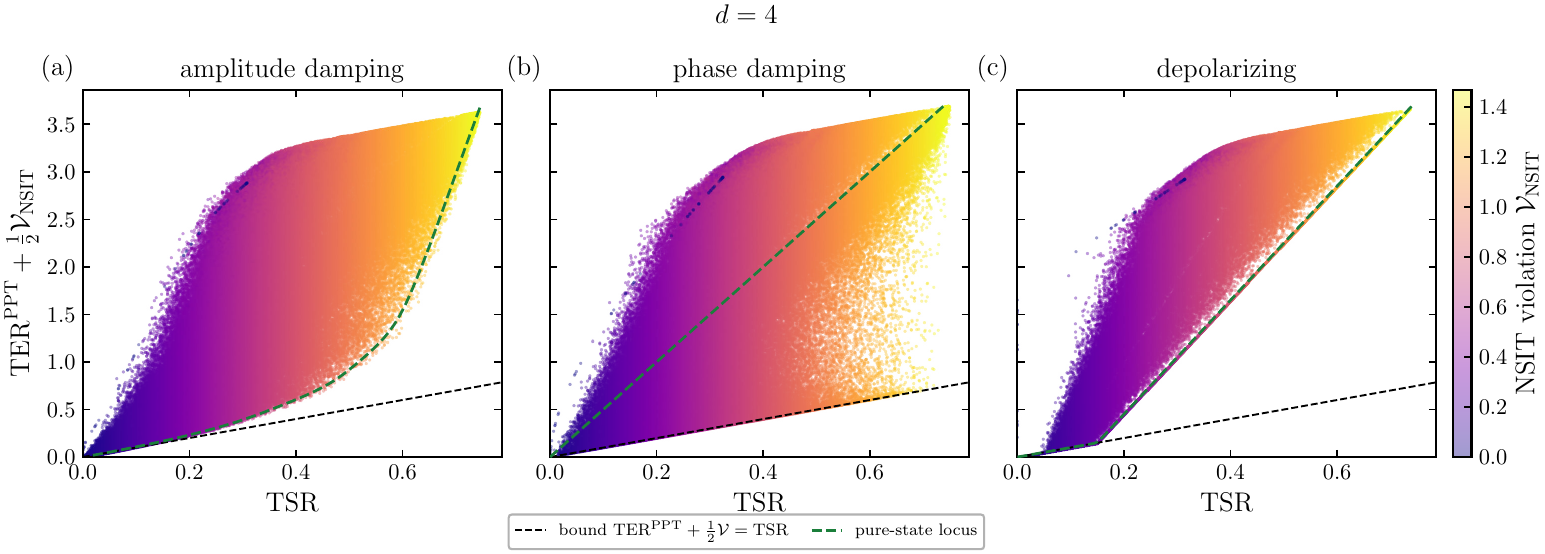}
\caption{The universal bound in the SDP-computable PPT proxy
$\TER^{\rm PPT}+\tfrac12\mathcal{V}_{\rm NSIT}$ ($\TER^{\rm PPT}\leq\TER^{\rm sep}$)
versus $\TSR$ at $d=2$ (top)
and $d=4$ (bottom)---the analogues of panels (g)--(i) of
Fig.~\ref{fig:hier_nsit_compare}. Every
point lies on or above the dashed diagonal; dashed green is the analytic
pure-state locus.}
\label{fig:dvar_terpdo}
\end{figure*}

\section{Operational supplements}\label{si:operational}

\subsection{A qubit to the future: the BB84 picture}
Here is the qubit ($d=2$) story behind the protocol of
section~\ref{sec:ditit} of the main text. Alice wants to hand an unknown
qubit $|\psi\rangle$ to her future self (Bob) at $t_B$, but the only
thing that survives from $t_A$ to $t_B$ is one noisy quantum memory
cell---the channel $\mathcal{E}$---together with an ordinary classical
notebook. No entangled partner is stored away: the two-time correlations
of the memory are the entire quantum resource. She proceeds as in a
\emph{quantum one-time pad}: (i)~she flips two fair coins $k=(k_1,k_2)$
and applies the Pauli $W_k=X^{k_1}Z^{k_2}$, so
$|\psi\rangle\mapsto W_k|\psi\rangle$, which to anyone lacking $k$ is
indistinguishable from $\openone/2$; (ii)~the scrambled qubit is written
into the cell and read out at $t_B$, i.e.\ passed through $\mathcal{E}$;
(iii)~Alice publishes $k$ and Bob undoes the pad,
$W_k^\dagger(\cdot)\,W_k$. Because every send-round input looks maximally
mixed without $k$, the qubit's quantum content reaches the future
\emph{only} through $\mathcal{E}$; Bob recovers the depolarised image
$p\,|\psi\rangle\langle\psi|+(1-p)\openone/2$ with fidelity
$\mathcal{F}_{\rm DI}=\tfrac12+\tfrac12 p$, so a faithful hand-off
requires $p>0$.

To trust the memory \emph{before} risking $|\psi\rangle$, Alice
interleaves \emph{test rounds} feeding a known probe $\rho_A$, measuring
(at random) in one of the two MUBs ($Z$ and $X$, the BB84 pair at
$d=2$). From the two-time statistics she computes $\TNR$ and, as BB84
aborts on excess disturbance, aborts unless $\TNR\geq T^*$; a passing
test certifies $\mathcal{F}_{\rm DI}\geq\tfrac12+\tfrac12\TNR$. Temporal
teleportation is thus a quantum one-time pad sent through the
time-channel and monitored, BB84-style, by two-MUB test rounds. The
general-$d$ protocol replaces the Pauli pad by the Heisenberg--Weyl group
and the two BB84 bases by the $\rho_A$-adapted two-MUB scheme.

\subsection{The NSIT violation as a physical resource}
At first sight $\mathcal{V}_{\rm NSIT}>0$ looks unphysical: Bob's
marginal carries information about Alice's setting, forbidden by
no-signaling in the spatial Bell scenario. Temporally there is no
paradox---Alice and Bob act on the \emph{same} system at ordered times
$t_A<t_B$, so $\mathcal{V}_{\rm NSIT}$ is causal measurement back-action,
the lawful disturbance one party imprints on the system the other later
receives. The appearance of unphysicality is an artifact of demanding a
no-signaling assemblage---exactly the demand that makes the
entanglement-over-steering bound NSIT-conditional. The back-action is
operationally accessible in two ways native to teleportation.
(i)~\emph{Communication}: the protocol provides a public classical
channel carrying Alice's setting and outcome; the signaling part of the
assemblage is then a resource Bob exploits, and the
$\tfrac12\mathcal{V}_{\rm NSIT}$ term of
Proposition~\ref{prop:univ_bound} of the main text measures precisely the
steering this unlocks beyond the channel's entanglement.
(ii)~\emph{Postselection}: conditioning on Alice's outcome replaces the
outcome-averaged marginal by a postselected conditional state---a valid
sub-normalized object---accessing the same resource by selecting rounds.
Either way $\mathcal{V}_{\rm NSIT}$ is a bona fide physical resource.

\subsection{One-sided device-independent teleportation}
For the one-sided variant in which only Bob's device is untrusted, the
resource is the assemblage $\{\tilde\rho_{a|x}\}$ rather than the full
behavior. The analogous Abramsky--Brandenburger argument for the
steering fraction $\mathrm{SF}(\{\tilde\rho_{a|x}\})$
\cite{Branciard2012,PhysRevLett.114.060404,PhysRevA.93.052112} gives
\begin{align}\label{eq:1sdi_bound}
\mathcal{F}_{\rm 1sDI}(\rho_A,\mathcal{E})
&\geq\frac{1}{d}+\frac{d-1}{d}\,\mathrm{SF}\bigl(\{\tilde\rho_{a|x}\}\bigr)\nonumber\\
&\geq\frac{1}{d}+\frac{d-1}{d}\,\frac{\TSR(\rho_A,\mathcal{E})}{1+\TSR(\rho_A,\mathcal{E})},
\end{align}
the second step using the steerable-weight versus steering-robustness
identity of Cavalcanti--Skrzypczyk \cite{PhysRevA.93.052112}. Since
$\TSR\geq\TNR$ (Theorem~\ref{thm:tsr_tnr} of the main text) and
$x/(1+x)$ is monotone, $\mathcal{F}_{\rm 1sDI}\geq\mathcal{F}_{\rm DI}$
pointwise, with equality on the $\TSR=\TNR$ submanifold (e.g.\ the
$|0\rangle$+phase-damping pair of
Proposition~\ref{prop:tsr_tnr_collapse} of the main text): the one-sided
protocol is strictly easier to operate but requires Alice's measurement
device to be trusted. An $\mathrm{SF}=\TSR$ conjecture on the standard
families would tighten~\eqref{eq:1sdi_bound} to
$\mathcal{F}_{\rm 1sDI}\geq 1/d+(d-1)\TSR/d$.

\section{Proofs of the technical results}\label{si:proofs}

\paragraph*{Proof of Theorem~\ref{thm:TERmonotone}.}
Write any PDO as $R=(1+\gamma)\rho-\gamma\,\mathfrak{R}$. Diagonalising
$R=\sum_i\mu_i\dyad{\mu_i}$, the optimum is attained for
$\gamma\,\mathfrak{R}=\sum_{i:\mu_i<0}|\mu_i|\,\dyad{\mu_i}\geq 0$,
proving (i): TER $\geq 0$ with equality iff $R\geq 0$, and TER is
maximized on closed-system unitary evolution. Since the trace of $R$ is
invariant under unitaries, and $\rho,\mathfrak{R}$ have unit trace,
TER is invariant under the same, proving (ii). For (iii), let
$\mathcal{E}$ be a local CPTP map. Then
$\mathcal{E}(R)=(1+\gamma)\mathcal{E}(\rho)-\gamma\,\mathcal{E}(\mathfrak{R})$;
since $\mathcal{E}$ is CP, $\mathcal{E}(\rho)\geq 0$ and
$\mathcal{E}(\mathfrak{R})\geq 0$ with
$\tr\mathcal{E}(\mathfrak{R})\leq\tr\mathfrak{R}$ by trace
non-increase, so this is a feasible decomposition of
$\mathcal{E}(R)$ at cost $\leq\gamma$; hence
$\TER(\mathcal{E}(R))\leq\TER(R)$. For (iv), pick optimal
decompositions $\gamma_k$ for $R_k$ and combine them: the convex
combination $R=\sum_k p_kR_k$ satisfies
$R=(1+\gamma)\rho-\gamma\,\mathfrak{R}$ with
$\gamma=\sum_k p_k\gamma_k$, so $\TER(R)\leq\sum_k p_k\,\TER(R_k)$.

\paragraph*{Proof of Proposition~\ref{prop:TNRstatebound}.}
On $\rho_A=\openone/d$ and projective $\Pi_{a|x}$, the Born rule
\eqref{eq:born} gives
$P(a,b|x,y) = \tr[M_{b|y}\mathcal{E}(\Pi_{a|x}\rho_A\Pi_{a|x})]
=(1/d)\tr[M_{b|y}\mathcal{E}(\Pi_{a|x})]$, which is
\eqref{eq:TNRHSM}. By assumption this admits an LHV
decomposition, hence $\TNR=0$ by definition.

\paragraph*{Proof of Proposition~\ref{prop:lhv_reduction}.}
\emph{Depolarizing.} Direct computation gives
\begin{equation*}
\begin{aligned}
P(a,b|x,y)&=\tfrac1d\tr[M_{b|y}\mathcal{E}_{\rm depol}^{(e)}(\Pi_{a|x})]\\
&=\begin{cases}
(e/d)\delta_{a,b}+(1-e)/d^{2} & x=y,\\
1/d^{2} & x\neq y.
\end{cases}
\end{aligned}
\end{equation*}
This admits an LHV with hidden variable
$\lambda=(\alpha_{0},\alpha_{1},\gamma_{0},\gamma_{1},\varepsilon)
\in\mathbb{Z}_{d}^{4}\times\{C,U\}$, distributed as $\alpha_{x}$ and
$\gamma_{x}$ independently uniform on $\mathbb{Z}_{d}$ for $x=0,1$
and $\Pr[\varepsilon=C]=e$, $\Pr[\varepsilon=U]=1-e$. Alice's
deterministic response is $D_A(a|x,\lambda)=\delta_{a,\alpha_{x}}$;
Bob's is $D_B(b|y,\lambda)=\delta_{b,\alpha_{y}}$ when
$\varepsilon=C$ and $D_B(b|y,\lambda)=\delta_{b,\gamma_{y}}$ when
$\varepsilon=U$. Direct verification: for $x=y$,
$\Pr[a=\alpha_{x},b=\alpha_{x}]=e/d\cdot\delta_{a,b}$ (correlated
branch) plus $\Pr[a=\alpha_{x},b=\gamma_{y}]=(1-e)/d^{2}$
(uncorrelated branch) reproduces the same-basis $P$; for $x\neq y$
both branches give $1/d^{2}$, reproducing the cross-basis $P$.

\emph{Phase damping.} The same computation gives
\begin{equation*}
P(a,b|x,y)=
\begin{cases}
(1/d)\delta_{a,b} & x=y=Z,\\
(e/d)\delta_{a,b}+(1-e)/d^{2} & x=y=F,\\
1/d^{2} & x\neq y,
\end{cases}
\end{equation*}
which differs from depolarizing only in the $x=y=0$ slot, where
phase damping leaves the diagonal computational projector invariant.
An LHV with
$\lambda=(\alpha_{0},\alpha_{1},\gamma_{1},\varepsilon)
\in\mathbb{Z}_{d}^{3}\times\{C,U\}$, distributed as $\alpha_{0},
\alpha_{1},\gamma_{1}$ independently uniform on $\mathbb{Z}_{d}$ and
$\Pr[\varepsilon=C]=e$, $\Pr[\varepsilon=U]=1-e$, with responses
$D_A(a|x,\lambda)=\delta_{a,\alpha_{x}}$,
$D_B(b|0,\lambda)=\delta_{b,\alpha_{0}}$, and
$D_B(b|1,\lambda)=\delta_{b,\alpha_{1}}$ if $\varepsilon=C$,
$\delta_{b,\gamma_{1}}$ if $\varepsilon=U$, reproduces all four
joint probabilities by the same direct verification.

\paragraph*{Proof of Proposition~\ref{prop:product_lhv}.}
Non-negativity and normalization are immediate (product of two
probability distributions). We verify the four measurement-pair
blocks. (i)~$P_{\rm LHV}(a,b|Z,Z)
=\sum_{\alpha_F,\beta_F}P(a,b|Z,Z)\,P(\alpha_F,\beta_F|F,F)
=P(a,b|Z,Z)$. (ii)~$P_{\rm LHV}(a,b|F,F)
=\sum_{\alpha_Z,\beta_Z}P(\alpha_Z,\beta_Z|Z,Z)\,P(a,b|F,F)
=P(a,b|F,F)$. (iii)~$P_{\rm LHV}(a,b|Z,F)
=\bigl[\sum_{\beta_Z}P(a,\beta_Z|Z,Z)\bigr]
\bigl[\sum_{\alpha_F}P(\alpha_F,b|F,F)\bigr]$.
On $\rho_A=\openone/d$, both marginals are $1/d$: Alice's $Z$
marginal is $(1/d)\tr[\mathcal{E}(\Pi_{a|Z})]=1/d$, and Alice's
$F$ marginal is $(1/d)\tr[\mathcal{E}(\Pi_{a|F})]=1/d$ (the Fourier
projectors have uniform diagonal). Hence
$P_{\rm LHV}(a,b|Z,F)=1/d^{2}$, which matches the actual behavior
because $\tr[\Pi_{b|F}\,\mathcal{E}(|a\rangle\langle
a|)]=(1/d)\sum_j[\mathcal{E}(|a\rangle\langle a|)]_{jj}=1/d$ by
the diagonal-action hypothesis. (iv)~$P_{\rm LHV}(a,b|F,Z)
=P_A(a|F)\cdot P_B(b|Z)=(1/d)\cdot P_B(b|Z)$, which matches
$P(a,b|F,Z)$ because on $\openone/d$ every Fourier projector
$|+_a\rangle\langle +_a|$ has the \emph{same} diagonal $(1/d,
\ldots,1/d)$ and the diagonal-action channel therefore produces
the same output diagonal $[\mathcal{E}(\openone/d)]_{bb}$ for
every $a$.

\paragraph*{Proof of Proposition~\ref{prop:unitary_invariance}.}
The two-time behavior obeys
\begin{align*}
P'(a,b|x,y)&=\tr\!\big[(UM_{b|y}U^\dagger)\,U\mathcal{E}U^\dagger
(U\Pi_{a|x}U^\dagger\,\rho_A\,U\Pi_{a|x}U^\dagger)\big]\\
&=\tr[M_{b|y}\,\mathcal{E}(\Pi_{a|x}\rho_A\Pi_{a|x})]=P(a,b|x,y)
\end{align*}
by the cyclicity of trace and $[U,\rho_A]=0$. An LHV decomposition
of $P$ is therefore an LHV decomposition of $P'$, and the TNR
optimum is invariant.

\paragraph*{Proof of Theorem~\ref{thm:nsit_state}.}
($\Leftarrow$) On $\rho_A = \openone/d$, the dephased state in any
setting $x$ is
$\sum_a \Pi_{a|x}(\openone/d)\Pi_{a|x} = (1/d)\sum_a \Pi_{a|x} = \openone/d$,
independent of $x$. Consequently Bob's marginal
$\sum_a P(a,b|x,y) = \tr[M_{b|y}\,\mathcal{E}(\openone/d)]$ is
independent of $x$, so $\mathcal{V}_{\rm NSIT}=0$ for any
$\mathcal{E}$.

($\Rightarrow$) Contrapositive. Suppose $\rho_A \neq \openone/d$.
For the eigenbasis setting $x = Z$, the dephased state is
$\sum_a \Pi_{a|Z}\rho_A\Pi_{a|Z} = \rho_A$ (diagonal in its own
eigenbasis). For the Fourier-MUB setting $x = F$, since the Fourier
MUB is unbiased to the eigenbasis,
$\Pi_{a|F}\rho_A\Pi_{a|F}
= \langle +_a|\rho_A|+_a\rangle\,\Pi_{a|F}
= (1/d)\,\Pi_{a|F}$
(since $\langle +_a|\rho_A|+_a\rangle=(1/d)\sum_j p_j=1/d$ by
unbiasedness), so the sum equals
$(1/d)\sum_a\Pi_{a|F}=\openone/d$.
The two dephased states $\rho_A$ and $\openone/d$ differ when
$\rho_A \neq \openone/d$, and the injectivity of $\mathcal{E}$ on
their affine span makes $\mathcal{E}(\rho_A) \neq \mathcal{E}(\openone/d)$.
Therefore Bob's marginal $\tr[M_{b|y}\,\mathcal{E}(\cdot)]$ differs
between $x=Z$ and $x=F$ for at least one $(b,y)$, giving
$\mathcal{V}_{\rm NSIT} > 0$.

\paragraph*{Proof of Theorem~\ref{thm:tsr_tnr}.}
The steering robustness $\TSR(R)$ is the least weight $\alpha$ for
which the mixed assemblage
$\sigma_{a|x}^{(\alpha)}=(\tilde\rho_{a|x}+\alpha\,\nu_{a|x})/(1+\alpha)$
admits a hidden-state model (HSM)
$\sigma_{a|x}^{(\alpha)}=\sum_\lambda p(\lambda)D(a|x,\lambda)\rho_\lambda$
for some noise assemblage $\nu$. Applying Bob's measurements
$\{M_{b|y}\}$ to both sides,
$P^{(\alpha)}(a,b|x,y)=\tr[M_{b|y}\sigma_{a|x}^{(\alpha)}]
=\sum_\lambda p(\lambda)D(a|x,\lambda)\tr(M_{b|y}\rho_\lambda)$
is local-hidden-variable, with $P^{(\alpha)}=(P+\alpha P_\nu)/(1+\alpha)$
and the valid noise behavior $P_\nu(a,b|x,y)=\tr[M_{b|y}\nu_{a|x}]$.
Hence every HSM at weight $\alpha$ for the assemblage yields an LHV
decomposition at the same weight for the behavior, so
$\TNR(R)\leq\TSR(R)$ ($\alpha=0$ recovers
$\TSR=0\Rightarrow\TNR=0$); non-negativity is immediate.

\paragraph*{Proof of Theorem~\ref{thm:hierarchy_strict}.}
Suppose the Choi state $\Lambda_{\mathcal{E}}$ has separability
robustness $\TER^{\rm sep}=\alpha$. Then there exist a separable
state $\Lambda_{\rm sep}$ and a state $\Lambda_\tau$ such that
$(\Lambda_{\mathcal{E}}+\alpha\Lambda_\tau)/(1+\alpha)
=\Lambda_{\rm sep}$. By the Choi--channel correspondence, the
mixture $\mathcal{E}_{\rm sep}
=(\mathcal{E}+\alpha\tau)/(1+\alpha)$ is entanglement-breaking and
admits a measure-and-prepare form
$\mathcal{E}_{\rm sep}(\sigma)
=\sum_k\tr(F_k\sigma)\omega_k$ for a POVM $\{F_k\}$ and states
$\{\omega_k\}$~\cite{HORODECKI20031}. On $\rho_A=\openone/d$, where $\Pi_{a|x}\rho_A\Pi_{a|x}
=\tfrac1d\Pi_{a|x}$, the assemblage element satisfies
$(\tilde\rho_{a|x}+\alpha\tilde\rho_{a|x}^\tau)/(1+\alpha)
=\sum_k\tfrac1d\tr(F_k\Pi_{a|x})\,\omega_k
=\sum_k q(k)\,D(a|x,k)\,\omega_k$,
with $x$-independent weights $q(k)=\tr(F_k)/d$ ($\sum_k q(k)=1$) and
the valid stochastic response
$D(a|x,k)=\tr(F_k\Pi_{a|x})/\tr(F_k)$
($\sum_a D(a|x,k)=1$, $D\geq 0$). This is a hidden-state model for the
assemblage (made deterministic, if desired, by the standard
convex-decomposition of stochastic responses), hence
$\TSR\leq\alpha=\TER^{\rm sep}$. The $x$-independence of $q(k)$ is
exactly what fails off $\openone/d$ (section~\ref{sec:nsit-free}). The lower part
$\TSR\geq\TNR\geq 0$ is Theorem~\ref{thm:tsr_tnr}.

\paragraph*{Proof of Proposition~\ref{prop:tsr_tnr_collapse}.}
\label{sec:tsrtnr_proof_sketch}
Take the deterministic-strategy index $\lambda=(a^0,a^1)$ and the
HSM ansatz
$\sigma_{(0,a)}=\tilde\rho_{a|1}+\frac{d-1}{d^2}|0\rangle\langle 0|$,
$\sigma_{(a^0,a^1)}=0$ for $a^0\neq 0$, where
$\tilde\rho_{a|1}=\frac{1-e}{d^2}\openone+\frac{e}{d}|+_a\rangle\langle+_a|$
is the assemblage element. The $x=0$ constraint
$\sum_{a^1}\sigma_{(0,a^1)}\geq|0\rangle\langle 0|$ holds because
$\sum_{a}\tilde\rho_{a|1}=\openone/d$ contributes $\openone/d$ on the
$|0\rangle$ subspace, and the additional
$\frac{d-1}{d^2}|0\rangle\langle 0|$ summed $d$ times provides the
remaining $\frac{d-1}{d}|0\rangle\langle 0|$. The $x=1$ constraint
$\sum_{a^0}\sigma_{(a^0,a)}\geq\tilde\rho_{a|1}$ is direct since
$\sigma_{(0,a)}\geq\tilde\rho_{a|1}$. The total cost is
$\sum_\lambda\tr\sigma_\lambda - 1
=d\cdot[1/d+(d-1)/d^2]-1=(d-1)/d$, establishing the analytic upper
bound $\TSR\leq(d-1)/d$ for all $e\in(0,1]$. By
Theorem~\ref{thm:tsr_tnr} the same bound applies to the nonlocality
robustness, $\TNR\leq\TSR\leq(d-1)/d$. Saturation---that this upper
bound is attained, so $\TSR=\TNR=(d-1)/d$ independent of $e$---is
confirmed by the numerical SDP optimum at $d=3$ ($2/3$) and $d=5$
($4/5$); we do not have a closed-form dual certificate for the
matching lower bound.

\paragraph*{Proof of Proposition~\ref{prop:univ_bound}
(NSIT-corrected universal bound).}
The lower chain $0\leq\TNR\leq\TSR$ is Theorem~\ref{thm:tsr_tnr}. For
the upper bound we construct an explicit hidden-state over-cover. Put
$\alpha=\TER^{\rm sep}=\mathrm{ER}(\Lambda_{\mathcal{E}})$, so
$(\Lambda_{\mathcal{E}}+\alpha\,\Omega)/(1+\alpha)$ is separable for
some state $\Omega$; the associated entanglement-breaking channel has a
measure-and-prepare form
$\mathcal{E}_{\rm sep}(\cdot)=\sum_k\tr(F_k\,\cdot)\,\omega_k$ with
$\{F_k\}$ a POVM and $\{\omega_k\}$ states, and dominates the channel,
\begin{equation*}
\begin{gathered}
\tilde\rho_{a|x}=\mathcal{E}\bigl(\Pi_{a|x}\rho_A\Pi_{a|x}\bigr)
\;\preceq\;(1+\alpha)\sum_k w_{a|x,k}\,\omega_k,\\
w_{a|x,k}=\tr\!\bigl(F_k\Pi_{a|x}\rho_A\Pi_{a|x}\bigr)\geq0 .
\end{gathered}
\end{equation*}
Take hidden variable $k$ with states $\omega_k$, weights
$q(k)=(1+\alpha)\max_x\tr\!\bigl(F_k D_x(\rho_A)\bigr)$ and responses
$r(a|x,k)=(1+\alpha)w_{a|x,k}/q(k)\in[0,1]$ (with
$\sum_a r(a|x,k)\leq1$). Then
$\sum_k q(k)\,r(a|x,k)\,\omega_k=(1+\alpha)\sum_k w_{a|x,k}\omega_k
\succeq\tilde\rho_{a|x}$ is a feasible local-hidden-state over-cover of
the steering SDP \eqref{eq:TSR_SDP} of the main text, of cost
$\sum_k q(k)-1$. With $\Delta=D_Z(\rho_A)-D_F(\rho_A)$ traceless
Hermitian and $\{F_k\}$ a POVM,
\begin{equation*}
\sum_k\max_x\tr\!\bigl(F_kD_x(\rho_A)\bigr)
=1+\tfrac12\sum_k\bigl|\tr(F_k\Delta)\bigr|
\;\leq\;1+\tfrac12\|\Delta\|_1,
\end{equation*}
the factor $\tfrac12$ being the ``traceless $\Rightarrow$ half the
trace-norm'' identity. Hence
\begin{equation}\label{eq:univ_rig}
\TSR\;\leq\;\TER^{\rm sep}
+\tfrac12\,(1+\TER^{\rm sep})\,
\bigl\|D_Z(\rho_A)-D_F(\rho_A)\bigr\|_1 ,
\end{equation}
a rigorous universal bound. It is tight: on $|0\rangle$+phase damping
at strong dephasing $\TER^{\rm sep}=0$,
$\|\Delta\|_1=2(d-1)/d$ and $\TSR=(d-1)/d$, attaining equality.

Equation~\eqref{eq:univ_bound} of the main text is the sharp
post-channel refinement of \eqref{eq:univ_rig}: since
$\mathcal{V}_{\rm NSIT}=\|\mathcal{E}(\Delta)\|_1\leq\|\Delta\|_1$
(channel contraction) and $\TER^{\rm sep}\geq0$, it lies below the
right-hand side of \eqref{eq:univ_rig} and coincides with it in the
saturating limit. We verify it with zero violations to machine
precision (minimum slack $-3.9\times10^{-8}$) across the full
$\rho_A$-adapted sweeps at $d=3$ ($10^6$) and $d=5$
($5.6\times10^4$), and with no violation also at $d=2,4$
(Table~\ref{tab:d3d5_compare} of the main text); removing the prefactor
$(1+\TER^{\rm sep})$ and replacing the pre-channel signaling by
$\mathcal{V}_{\rm NSIT}$ is the only step that rests on the numerics
rather than the construction
(figure~\ref{fig:hier_nsit_compare} of the main text). The sharp form
reduces to a single entanglement-theoretic statement: the
no-signaling part of the assemblage is realized by the input-filtered
Choi state $\Omega=d\,(\sqrt{\rho_A}^{\,T}\!\otimes\openone)\,
\Lambda_{\mathcal{E}}\,(\sqrt{\rho_A}^{\,T}\!\otimes\openone)$, for
which $\TSR\leq\mathrm{ER}(\Omega)+\tfrac12\mathcal{V}_{\rm NSIT}$
holds numerically without exception, so \eqref{eq:univ_bound} of the
main text follows once $\mathrm{ER}(\Omega)\leq\TER^{\rm sep}$---
monotonicity of the entanglement robustness under the trace-preserving
filtering $\sqrt{\rho_A}^{\,T}$ (verified to within the PPT proxy,
residual $\lesssim10^{-2}$).

\paragraph*{Proof of Theorem~\ref{thm:multitime}.}
Marginalise \eqref{eq:multitime} over $a_3,\ldots,a_n$ and any choice
of $x_3,\ldots,x_n$. The marginal $P(a_1,a_2|x_1,x_2)
=\sum_{a_3,\ldots,a_n}P(a_1,\ldots,a_n|x_1,\ldots,x_n)$ is a normalized
joint distribution because partial trace and Born-rule sum commute:
$\sum_{a_n}\tr[M_{a_n|x_n}\sigma]=\tr\sigma$, and iterating
collapses the chain to the trace of the post-$\mathcal{E}_{1,2}$
state. The effective channel
$\mathcal{E}_{1,2}^{\rm eff}=\sum_{a_n}\tr[M_{a_n|x_n}\circ\mathcal{E}_{n-1,n}\circ\cdots]$
is therefore CPTP, and the two-time behavior at the $(t_1,t_2)$
level is a valid two-time behavior as defined in \eqref{eq:born}.
By Theorem~\ref{thm:necessity}, choosing the two-MUB measurements of
$\rho_{A_1}$'s eigenbasis at $(x_1,x_2)$ produces a non-LHV
$P(a_1,a_2|x_1,x_2)$. Because LHV is closed under marginalisation,
the full multi-time behavior \eqref{eq:multitime} also admits no
LHV, and $\TNR^{(n)}>0$.

\paragraph*{Proof of Theorem~\ref{thm:ditit}.}
Each send round delivers
$\rho_{\rm out}^{(k)}=W_k^\dagger\mathcal{E}(W_k\dyad{\psi}W_k^\dagger)W_k$
for a uniformly random key $k\in\mathbb{Z}_d\times\mathbb{Z}_d$ that
Bob learns publicly, so the average delivered state is the
Heisenberg--Weyl (Pauli) twirl
$\bar\rho_{\rm out}=\frac{1}{d^2}\sum_k W_k^\dagger
\mathcal{E}(W_k\dyad{\psi}W_k^\dagger)W_k$
[Eq.~\eqref{eq:hwtwirl}]. The Pauli twirl of any channel is the
depolarizing channel with the same entanglement fidelity
\cite{HaydenLeungShorWinter2004},
$\bar\rho_{\rm out}=\mathcal{D}_p(\dyad{\psi})
=p\dyad{\psi}+(1-p)\openone/d$, with
$p=(d^2F_e-1)/(d^2-1)$ fixed by $F_e(\mathcal{D}_p)=F_e(\mathcal{E})$.
Averaging the per-round fidelity over $k$,
$\mathcal{F}_{\rm DI}=\langle\psi|\bar\rho_{\rm out}|\psi\rangle
=p+(1-p)/d=1/d+(d-1)p/d$, which is independent of the unknown
input $\ket{\psi}$. The threshold equivalences follow from
$p>0\Leftrightarrow F_e>1/d^2$.

\bibliographystyle{apsrev4-2}
\bibliography{main}
\end{document}